\def\ceil#1{\lceil #1 \rceil}
\def\1{\bm{1}}
\def\eps{{\epsilon}}
\def\vmu{{\bm{\mu}}}
\def\vd{{\bm{d}}}
\def\vx{{\bm{x}}}
\DeclareMathAlphabet{\mathsfit}{\encodingdefault}{\sfdefault}{m}{sl}
\SetMathAlphabet{\mathsfit}{bold}{\encodingdefault}{\sfdefault}{bx}{n}
\renewcommand{\R}{\mathbb{R}}
\newcommand{\reg}{\mathrm{Reg}}
\let\cite\citep
 \pgfplotsset{compat=1.18} 
\let\poly\relax
\let\Root\varnothing
\let\R\relax
\let\E\relax
\newcommand{\defeq}{\coloneqq}
\newcommand{\ind}[1]{\mathbbm{1}\qty{#1}}
\newcounter{savedfootnote}
\newcommand{\range}[1]{\llbracket #1 \rrbracket}
\def\NAT@spacechar{~}%
\DeclareMathOperator*{\E}{\mathbb E}
\let\op\operatorname
\let\eps\varepsilon
\let\mc\mathcal
\newcommand{\R}{\ensuremath{\mathbb{R}}\xspace}
\newcommand{\guideoffline}{{\sc FullFeedbackSteer}\xspace}
\newcommand{\guideonline}{{\sc OnlineSteer}\xspace}
\newcommand{\mediator}{0}
\newcommand{\chance}{{\bm{\mathsf{C}}}}
\renewcommand{\vec}{\bm}
\newcommand{\mat}{\mathbf}
\newcommand{\poly}{\op{poly}}
\newcommand{\ie}{{\em i.e.}\xspace}
\newcommand{\eg}{{\em e.g.}\xspace}
\newcommand{\Vrs}{V^{\texttt{U}}}
\newcommand{\Ers}{E^{\texttt{U}}}
\newcommand{\Grs}{G^{\texttt{U}}}
\definecolor{darkgreen}{rgb}{0,0.5,0}
\theoremstyle{plain}
\newtheorem{theorem}{Theorem}[section]
\newtheorem*{theorem*}{Theorem}
\newtheorem{proposition}[theorem]{Proposition}
\newtheorem{lemma}[theorem]{Lemma}
\newtheorem{corollary}[theorem]{Corollary}
\theoremstyle{definition}
\newtheorem{definition}[theorem]{Definition}
\theoremstyle{remark}
\def\thm@space@setup{%
  \thm@preskip=\parskip \thm@postskip=0pt
}
\definecolor{p1color}{RGB}{31,119,180}
\definecolor{p2color}{RGB}{255,127,14}
\definecolor{p3color}{RGB}{44,160,44}
\definecolor{p4color}{RGB}{214,39,40}
\setlist{noitemsep}
\title{Steering No-Regret Learners to a Desired Equilibrium}
\author{%
Brian Hu Zhang\thanks{Equal contribution.} \\
Carnegie Mellon University\\
\texttt{bhzhang@cs.cmu.edu} \\
\and
Gabriele Farina$^*$ \\
MIT\\
\texttt{gfarina@mit.edu} \\
\and
Ioannis Anagnostides\\
Carnegie Mellon University\\
\texttt{ianagnos@cs.cmu.edu} \\
\and
Federico Cacciamani\\
DEIB,
Politecnico di Milano\\
\texttt{federico.cacciamani@polimi.it}\\
\and
Stephen McAleer\\
Carnegie Mellon University\\
\texttt{smcaleer@cs.cmu.edu} \\
\and
Andreas Haupt\\
MIT\\
\texttt{haupt@mit.edu} \\
\and
Andrea Celli\\
Bocconi University\\
\texttt{andrea.celli2@unibocconi.it}\\
\and
Nicola Gatti\\
DEIB,
Politecnico di Milano\\
\texttt{nicola.gatti@polimi.it} \\
\and
Vincent Conitzer\\
Carnegie Mellon University\\
\texttt{conitzer@cs.cmu.edu} \\
\and
Tuomas Sandholm \\
Carnegie Mellon University\\
Strategic Machine, Inc. \\
Strategy Robot, Inc. \\
Optimized Markets, Inc. \\
\texttt{sandholm@cs.cmu.edu} \\
}
\date{\today}
\begin{document}

\maketitle

\begin{abstract}

A mediator observes no-regret learners playing an extensive-form game repeatedly across $T$ rounds. The mediator attempts to \emph{steer} players toward some desirable predetermined equilibrium by giving (nonnegative) payments to players. We call this the \emph{steering problem}. The steering problem captures problems several problems of interest, among them {\em equilibrium selection} and {\em information design} (persuasion).  If the mediator's budget is unbounded, steering is trivial because the mediator can simply pay the players to play desirable actions. We study two bounds on the mediator's payments: a {\em total budget} and a {\em per-round budget}. If the mediator's total budget does not grow with $T$, we show that steering is impossible. However, we show that it is enough for the total budget to grow {\em sublinearly} with $T$, that is, for the {\em average} payment to vanish. When players' full strategies are observed at each round, we show that constant per-round budgets permit steering. In the more challenging setting where only trajectories through the game tree are observable, we show that steering is impossible with constant per-round budgets in general extensive-form games, but possible in normal-form games or if the per-round budget may itself depend on $T$. We also show how our results can be generalized to the case when the equilibrium is being computed {\em online} while steering is happening. We supplement our theoretical positive results with experiments highlighting the efficacy of steering in large games.
\end{abstract}

\newpage
\setcounter{tocdepth}{2} %
\tableofcontents
\newpage

\section{Introduction}

Any student of game theory learns that games can have multiple equilibria of different quality---for example, in terms of social welfare. As such, a foundational problem that has received tremendous interest in the literature revolves around characterizing the quality of the equilibrium reached under \emph{no-regret} learning dynamics. The outlook that has emerged from this endeavor, however, is discouraging: typical learning algorithms can fail spectacularly at reaching desirable equilibria. This is rather dramatically illustrated in the example of~\Cref{fig:payments} (second panel). Learning agents initialized at either $\textsf{A}$, $\textsf{B}$, or $\textsf{C}$ will in fact converge to the {\em Pareto-pessimal} Nash equilibrium of the game (bottom-left corner); only an initialization close to the Pareto-dominant equilibrium (such as \textsf{D} in the top-right corner) will end up with the desired outcome.

Our goal in this paper is to develop  methods to {\em steer} learning agents toward better equilibrium outcomes. To do so, we will use a {\em mediator} that can observe the agents playing the game, give {\em advice} to the agents (in the form of action recommendations), and {\em pay} the agents as a function of what actions they played. Our goal is to develop algorithms that allow the mediator to steer agents to a target equilibrium, while not spending too much money doing so. Critically, our only assumption on the agents' behavior is that they have no regret in hindsight. This is a fairly mild assumption compared to the assumptions made by many past papers on similar topics. We will elaborate on the comparison to related work in \Cref{sec:rel}.

Beyond the obvious relation to equilibrium selection, our model also has implications for the problem of {\em information design} and Bayesian persuasion~(\eg, \citealt{Kamenica11:Bayesian}). Indeed, we will show that we can steer players not only to any Nash equilibrium but to any {\em Bayes-correlated equilibrium (BCE)}---the solution concept most naturally associated with the problem of information design. We will also show that it is possible, in certain cases, to steer agents toward particular equilibria in an {\em online} manner, that is, {\em compute} the optimal equilibrium {\em while} steering players toward it.

\begin{figure}[!ht]
    \centering
    {\input{figures/noncredible-threat}}
    \includegraphics[width=0.75\textwidth]{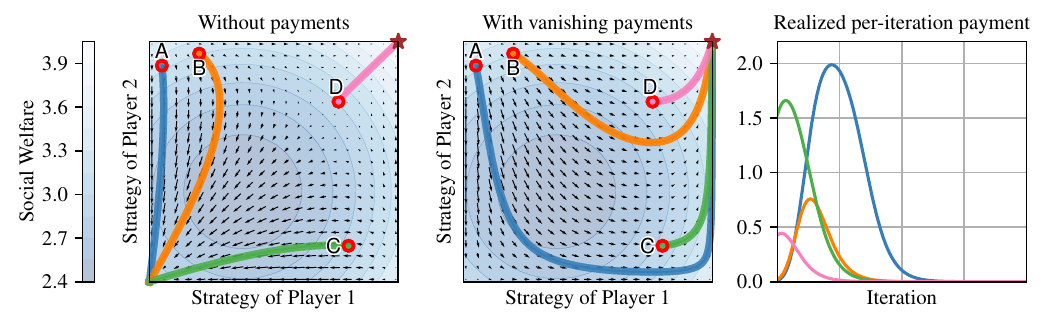}
    \vspace{-2mm}
    \caption{\textbf{Left:} An extensive-form version of a stag hunt. Chance plays uniformly at random at the root note, and the dotted line connecting the two nodes of Player 2 indicates an infoset: Player 2 cannot distinguish the two nodes. The game has two equilibria: one at the bottom-left corner, and one at the top-right corner (star). The latter is Pareto-dominant. Introducing \emph{vanishing} realized payments alters the gradient landscape, steering players to the optimal equilibrium (star) instead of the suboptimal one (opposite corner). The capital letters show the players' initial strategies. Lighter color indicates higher welfare and the star shows the highest-welfare equilibrium. Further details are in \Cref{sec:figures}.}
    \label{fig:payments}
\end{figure}

\subsection{Summary of our Results}

Here we summarize our model and results. There is a fixed, arbitrary extensive-form game $\Gamma$, being played repeatedly over rounds $t = 1, \dots, T$. Players' rewards are assumed to be normalized to range $[0,1]$. The players are assumed to play in such a way that their regret increases sublinearly as a function of $T$. This is a fairly natural and mild assumption (as discussed in the previous paragraph), and moreover there are many well-known algorithms that players can use to efficiently achieve sublinear regret in extensive-form games, perhaps the best-known of which is {\em counterfactual regret minimization}~\cite{Zinkevich07:Regret}, which has regret $T^{1/2}$ ignoring game-dependent constants.\footnote{Throughout the introduction, game-dependent constants are omitted for clarity and to emphasize the dependence on $T$. In all cases, the omitted game-dependent constant is polynomial in the number of nodes in the game tree.}

Broadly speaking, the goal of our paper is to design methods of {\em steering} the learning behavior of the players so that they reach desirable equilibria instead of undesirable ones. We do this by introducing a {\em mediator} to the game. After each round, the mediator observes how the players played the game, and has the power to give nonnegative {\em payments} $p_i^{(t)}$ to each player $i$ at each round $t$. We will first consider the case where a target {\em pure Nash equilibrium} is given as part of the problem instance.

A few observations follow easily. If the mediator's payments are not bounded, the mediator can trivially steer the players toward any outcome at all---not just equilibrium outcomes---by simply paying the players to play that outcome. We must therefore somehow bound the budget of the mediator. We will study two different budgets: a {\em per-round budget}, which constrains the individual payments $p_i^{(t)}$, and a {\em total budget}, which constrains their sum over time. We start by showing that the total budget must be allowed to grow with time.
\begin{proposition}[Informal version of \Cref{prop:imposs-boundedpayments}]
    For any fixed total budget $B$, there is a time horizon $T$ large enough that the steering problem is impossible.
\end{proposition}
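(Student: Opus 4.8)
The plan is to prove impossibility constructively: fix the total budget $B$ and exhibit a game together with a family of genuinely no-regret learners that no mediator can steer to the target $z^*$ once $T$ is large. The guiding intuition is that a no-regret learner carries a slack of $R(T)=o(T)$ (with $R(T)\to\infty$) against its best fixed action in hindsight, and a fixed budget can ``purchase'' only a bounded amount of deviation from that slack; since $B<R(T)$ for large $T$, the learners can simply decline to be steered.

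First I would pick a game $\Gamma$ admitting a pure Nash equilibrium $\hat z\neq z^*$ (for instance the stag-hunt game of \Cref{fig:payments}, whose ``bad'' corner is such an equilibrium), and consider the learners that play $\hat z$ in every round. Because $\hat z$ is a Nash equilibrium of the unpaid game, no unilateral deviation improves any player's game utility, so the \emph{only} source of regret for these learners is the payment stream. The key computation is then to bound each player's regret by the cumulative incentive the mediator directs at alternatives to $\hat z_i$: schematically, $\mathrm{Reg}_i \le \max_{a_i}\big(\sum_t p_i^{(t)}(a_i)\big) - \sum_t p_i^{(t)}(\hat z_i) \le B$, where the first inequality uses that $\sum_t u_i(a_i,\hat z_{-i})\le \sum_t u_i(\hat z)$ (the equilibrium property) and the last is exactly the total-budget constraint. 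Hence $\mathrm{Reg}_i\le B$ for every player.

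With this bound in hand the argument closes quickly. Since $B$ is a fixed constant while $R(T)\to\infty$, for all sufficiently large $T$ we have $B\le R(T)$, so the constant-$\hat z$ learners meet the sublinear-regret hypothesis and are a legitimate instance of the no-regret behavior the mediator is required to handle. These learners put all their mass on $\hat z$, so the empirical play never approaches $z^*$ and steering fails; choosing $T$ with $R(T)>B$ (so that also the per-round budget $B/T$ is small) makes the statement precise.

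I expect the main obstacle to be the regret-versus-budget accounting in the second step, namely justifying $\mathrm{Reg}_i\le B$ \emph{uniformly over all (possibly adaptive) mediator strategies} and under the paper's precise notion of what is charged to the budget (committed versus realized payments, and how counterfactual payments enter the regret benchmark). This is the step that rules out the natural counterattack in which the mediator front-loads a large payment to dislodge the stubborn learners: dislodging them from $\hat z$ requires committing incentive exceeding $R(T)>B$, which the budget forbids. The separation from the positive results is then transparent, since a merely \emph{sublinear} total budget is allowed to grow past $R(T)$ and can tip the learners during an initial transient, whereas a fixed $B$ cannot.
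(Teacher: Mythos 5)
Your proposal has a genuine gap, and it sits exactly at the step you flagged as the main obstacle: the schematic bound $\mathrm{Reg}_i \le \max_{a_i}\bigl(\sum_t p_i^{(t)}(a_i)\bigr) - \sum_t p_i^{(t)}(\hat z_i) \le B$ is false under the paper's accounting. The budget constraint in \Cref{prop:imposs-boundedpayments} charges only \emph{realized} payments $\sum_t \sum_i p_i^{(t)}(\vx^{(t)})$, while the regret benchmark \eqref{eq:reg-payment} credits the comparator with \emph{offered} counterfactual payments $p_i^{(t)}(\vx_i^*, \vx_{-i}^{(t)})$ (the paper stresses this distinction explicitly after \Cref{def:dec-steering}). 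So the mediator's ``counterattack'' is not ruled out: against your stubborn learners who play $\hat z$ forever, the mediator can offer the per-round cap $P$ for the target action $\vec d_i$ in every round. These offers are never realized, cost zero against the budget $B$, yet make deviating strictly better than $\hat z_i$ by a constant per round; your constant-$\hat z$ learners then accrue \emph{linear} regret and fall outside the no-regret hypothesis the mediator must handle. The impossibility claim therefore does not follow from your construction --- dislodging stubborn learners does not require ``committing incentive exceeding $R(T)$,'' because committed-but-unrealized incentive is free.

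The paper's proof repairs precisely this by making the learners adaptive rather than stubborn: in the $2\times 2$ coordination game they play the bad equilibrium $(\mathsf{A},\mathsf{A})$ whenever it remains a Nash equilibrium of the payment-augmented game $\Gamma^{(t)}$, and otherwise they deliberately play a profile at which realized payments exceed $1/2$ (such a profile must exist whenever $(\mathsf{A},\mathsf{A})$ is no longer an equilibrium). This turns every attempt by the mediator to break the bad equilibrium into an actual charge against the budget, so the extraction case can occur at most $2B$ times; each player's regret is nonpositive in equilibrium rounds (per-round best response) and at most $P+1$ in each of the finitely many extraction rounds, which after normalization stays below $R(T) = 2\sqrt{T}$ (the initial $B^2$ equilibrium rounds ensure this holds for all $T$). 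Hence legitimate no-regret learners play $(\mathsf{A},\mathsf{A})$ in all but $B^2 + 2B$ rounds and desideratum (S2) fails. If you wanted to salvage your simpler stubborn-learner argument, you would need a model where the budget constrains offered payments (or where $P$ itself is charged), which is a genuinely different and weaker statement than the one the paper proves. A minor additional slip: the per-round budget in the paper is the cap $P$, not $B/T$.
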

As a result, the total budget must be allowed to grow with the time horizon, but yet, for the problem to be interesting, the budget cannot be allowed to grow too fast. We thus focus on the regime where the budget is allowed to grow with $T$, but only {\em sublinearly}---that is, the {\em average} per-round payment must vanish in the limit $T \to \infty$. We are interested in algorithms for which both the average budget and rate of convergence to the desired equilibrium can both be bounded by $T^{-c}$ for some absolute constant $c > 0$. We show the following.
\begin{theorem}[Informal version of \Cref{th:normalform}]
    Steering to pure-strategy equilibria is possible in normal-form games, with absolute constant per-round budget. The average budget and rate of convergence to equilibrium are both $T^{-1/4}$. 
\end{theorem}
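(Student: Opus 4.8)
The plan is to construct an explicit payment scheme that rewards each player for playing their prescribed action in the target equilibrium $a^\star = (a_1^\star,\dots,a_n^\star)$, and then to balance the induced incentive against the accumulated budget. Concretely, I would have the mediator pay, in every round $t$, a fixed bonus $p_i^{(t)} = \rho\cdot\mathbbm{1}[a_i^{(t)} = a_i^\star]$ to each player $i$, where the magnitude $\rho$ is a parameter to be optimized at the very end. Since $\rho$ will be chosen to vanish as $T\to\infty$, it lies below any fixed per-round cap $B$ once $T$ is large, so the per-round constraint is met with an absolute constant. Each player then faces the modified utility $\tilde u_i = u_i + p_i$, and the only hypothesis I invoke is that every player's regret against this modified utility sequence is sublinear, say $\mathrm{Reg}_i^T = O(\sqrt T)$ up to game-dependent constants.

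First I would extract convergence from the no-regret hypothesis by comparing each learner to the fixed deviation ``always play $a_i^\star$.'' Writing $\mathrm{Reg}_i^T \ge \sum_t \big(\tilde u_i(a_i^\star, a_{-i}^{(t)}) - \tilde u_i(a^{(t)})\big)$ and substituting the bonus, the bonus terms collect into $\rho$ times the number of rounds in which $i$ departs from $a_i^\star$, while the base-utility terms are governed by the equilibrium property: at the target profile $a_i^\star$ is a best response, so a deviation is penalized rather than rewarded. Rearranging gives a bound of the shape $\rho\cdot(\text{off-target rounds of }i) \le \mathrm{Reg}_i^T + (\text{base-utility slack})$, and summing over players should control the total number $D$ of rounds in which the realized profile differs from $a^\star$, yielding $D = O(\mathrm{Reg}/\rho) = O(\sqrt T/\rho)$; the rate of convergence to equilibrium is then $D/T = O\big(1/(\rho\sqrt T)\big)$.

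Next I would bound the budget directly: a payment of size $\rho$ is made only on rounds where a player actually plays their target action, so the total is at most $\rho\,n\,T$ and the average budget is $O(\rho)$. The two quantities now pull against each other—shrinking $\rho$ helps the budget but hurts convergence—so I would equate them, setting the average budget $\rho$ equal to the convergence rate $1/(\rho\sqrt T)$. This gives $\rho = \Theta(T^{-1/4})$, at which point both the average budget and the convergence rate are $\Theta(T^{-1/4})$, matching the claim.

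The step I expect to be the main obstacle is the convergence bound, specifically the ``base-utility slack'' that appears when the other players are themselves off-target: a single opponent deviation from $a_{-i}^\star$ can flip $i$'s best response and contribute a full unit of utility, so a naive accounting couples each player's deviation count to everyone else's and threatens to require $\rho$ bounded below by an absolute constant, which would force a linear budget. Resolving this cleanly is the heart of the argument. I would lean on the fact that the bonus makes $a_i^\star$ strictly optimal at the target profile, so the slack is supported only on the comparatively rare rounds where several players deviate at once, and then close the estimate self-consistently so that the coupling is absorbed into the $\sqrt T$ regret term rather than into a lower bound on $\rho$. Getting this bookkeeping right—rather than the budget computation or the final optimization over $\rho$—is where the real difficulty lies.
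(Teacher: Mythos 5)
Your proposal has a genuine gap, and it sits exactly at the step you flagged: with bonus-only payments $\rho\cdot\ind{a_i^{(t)} = a_i^\star}$, the target action is strictly optimal only \emph{against} $a_{-i}^\star$; it is not dominant, and your proposed self-consistent bootstrap has a fixed point at the wrong equilibrium. Take the coordination game from \Cref{prop:imposs-boundedpayments}: ({\sf A, A}) pays $(0.5, 0.5)$, ({\sf B, B}) pays $(1,1)$, mismatches pay $0$, and the target is ({\sf B, B}). If both players play ({\sf A, A}) in every round, the fixed deviation to {\sf B} earns $0 + \rho$ per round, versus $0.5$ for staying at {\sf A}; so once $\rho = \Theta(T^{-1/4}) < 1/2$, both players have \emph{zero} regret under the modified utilities while the directness gap stays maximal forever. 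Your claim that the slack is ``supported only on the comparatively rare rounds where several players deviate at once'' is the flaw: the slack arises whenever \emph{any single} opponent is off-target, and nothing in your scheme makes opponent deviations rare---that is precisely what you are trying to prove, so the estimate cannot be closed. No vanishing choice of $\rho$ escapes this, since the incentive gap at a competing strict equilibrium is a game-dependent constant.

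The missing idea in the paper's proof (\Cref{th:normalform}) is the $k$-implementation compensation term of \citet{Monderer04:Implementation}: in addition to the directness bonus $\alpha\,\vec d_i^\top \vec x_i$, the mediator offers $p_i^*(\vec x) = (\vec d_i^\top \vec x_i)\bigl(1 - \prod_{j \ne i} \vec d_j^\top \vec x_j\bigr)$ as in \eqref{eq:normal-form-steer}, paying player $i$ a full unit whenever $i$ is direct but some opponent is not. This makes $\vec d_i$ dominant with margin $\alpha$ \emph{regardless} of opponents' play, so the regret comparison decouples across players---$\alpha \E_t \delta_i^{(t)} \le R(T)/T$ with no cross terms---which is exactly the bookkeeping your version cannot do. Crucially, the compensation is free on the equilibrium path: the realized payment is at most $\alpha + \E_t(1 - \prod_{j\ne i}(1-\delta_j^{(t)})) \le \alpha + \eps/\alpha$, since opponents' deviations are rare \emph{as a consequence of dominance}, not as an assumption. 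Note also that the per-round offered payment is an absolute constant (up to $1+\alpha$), not a vanishing quantity; your insistence that the per-round payment itself vanish is both unnecessary under the problem's budget model and, as above, fatal to the incentives. Your final balancing step ($\rho = \Theta(T^{-1/4})$, equating average budget with convergence rate) does coincide with the paper's choice $\alpha = \sqrt{\eps}$ with $\eps = \Theta(T^{-1/2})$, but it only goes through after the compensation term has decoupled the players.
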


Intuitively, the mediator sends payments in such a way as to 1) reward the player a small amount for playing the equilibrium, and 2) {\em compensate} the player for deviations of other players. The goal of the mediator is to set the payments in such a way that the target equilibrium actions become {\em strictly dominant} for the players, and therefore the players must play them.

Next we turn to the extensive-form setting. Settings such as information design, in which first a signal is designed, and then players take actions, are naturally extensive-form games. We distinguish between two settings: the {\em full feedback} setting, in which the mediator observes every player's entire strategy at every round, and the {\em trajectory-feedback setting}, in which the mediator only observes the trajectories that are actually played by the players.\footnote{This distinction becomes only meaningful for extensive-form games. For normal-form games, the two settings above coincide, because the ``trajectory'' in a normal-form game {\em is} just a list consisting of each player's chosen action.}

The {\em full feedback} setting yields results similar to the normal-form setting.
\begin{theorem}[Informal version of \Cref{th:offline-then-steer}]
    Steering to pure-strategy equilibria is possible in extensive-form games with full feedback, with absolute constant per-round budget. The average budget and rate of convergence to equilibrium are both $T^{-1/4}$. 
\end{theorem}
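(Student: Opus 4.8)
The plan is to reduce the extensive-form, full-feedback case to the normal-form construction of \Cref{th:normalform} by working in the \emph{reduced normal form} of $\Gamma$: treat each player $i$'s pure reduced-normal-form plans $\Pi_i$ as ``actions,'' so that the target pure Nash equilibrium becomes a pure action profile $\pi^\star = (\pi_1^\star,\dots,\pi_n^\star)$ and each $u_i$ becomes multilinear over $\prod_j \Delta(\Pi_j)$. The full-feedback assumption is exactly what makes this reduction legitimate: since the mediator observes each player's entire strategy $x_i^{(t)}$ at every round, it can evaluate any counterfactual quantity over the game tree, and in particular can pay the \emph{expected} payment induced by the scheme below.

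First I would define the payments. Mirroring the normal-form intuition (reward the equilibrium, compensate for opponents' deviations), I set
\[
  p_i^{(t)} \;=\; x_i^{(t)}[\pi_i^\star]\cdot\Big(\underbrace{\alpha}_{\text{reward}} \;+\; \underbrace{\max_{\pi_i'} u_i(\pi_i', x_{-i}^{(t)}) - u_i(\pi_i^\star, x_{-i}^{(t)})}_{\text{compensation}}\Big),
\]
where $x_i^{(t)}[\pi_i^\star]$ is the probability player $i$ assigns to the target plan. Every factor is computable in time polynomial in the size of the game tree from the observed strategies: the compensation is the gap between a best response against $x_{-i}^{(t)}$ (a standard bottom-up expectimax pass) and the value of the fixed plan $\pi_i^\star$ against $x_{-i}^{(t)}$. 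Nonnegativity is immediate because a best response is at least as good as $\pi_i^\star$, and the per-round payment is bounded by $1+\alpha = O(1)$ since $u_i \in [0,1]$.

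Next I would establish \emph{strict dominance with margin $\alpha$}. Because $p_i^{(t)}$ is linear in $x_i^{(t)}$ (the bracket depends only on $x_{-i}^{(t)}$), the modified utility $\tilde u_i(\cdot, x_{-i}^{(t)}) = u_i + p_i$ is linear in player $i$'s strategy, so it suffices to compare pure plans. Plugging in shows $\tilde u_i(\pi_i^\star, x_{-i}^{(t)}) = \max_{\pi_i'} u_i(\pi_i', x_{-i}^{(t)}) + \alpha$, which exceeds $\tilde u_i(\pi_i, x_{-i}^{(t)}) = u_i(\pi_i, x_{-i}^{(t)})$ for every $\pi_i \neq \pi_i^\star$ by at least $\alpha$. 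By linearity, for any mixed $x_i^{(t)}$ the modified loss of not playing $\pi_i^\star$ is at least $\alpha\,\delta_i^{(t)}$, where $\delta_i^{(t)} := 1 - x_i^{(t)}[\pi_i^\star]$ is the deviation mass. Since each player is a no-regret learner with respect to its realized (modified) utilities, summing this bound gives $\sum_t \delta_i^{(t)} \le \mathrm{Reg}_i/\alpha = O(\sqrt T)/\alpha$, which simultaneously controls the convergence rate and the opponents' deviations.

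Finally I would bound the budget. The reward part contributes at most $\alpha$ per player per round, i.e.\ $O(n\alpha T)$. The crux is the compensation, which I would bound by the opponents' deviation mass $\sum_{j\neq i}\delta_j^{(t)}$; this is the step that genuinely uses the equilibrium hypothesis and where I expect the main work. Writing $x_{-i}^{(t)}$ as a distribution over pure profiles $\pi_{-i}$, convexity of the best-response value gives an upper bound by $\mathbb{E}_{\pi_{-i}}[\max_{\pi_i'} u_i(\pi_i',\pi_{-i}) - u_i(\pi_i^\star,\pi_{-i})]$; each summand vanishes when $\pi_{-i}=\pi_{-i}^\star$ precisely because $\pi^\star$ is a Nash equilibrium (so $\pi_i^\star$ is already a best response) and is at most $1$ otherwise, so a union bound over $j\neq i$ yields $\sum_{j\neq i}\delta_j^{(t)}$. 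Summing over rounds and players and substituting the deviation bound gives total compensation $O(n\,\mathrm{Reg}/\alpha)=O(\sqrt T/\alpha)$, so the total budget is $O(\alpha T + \sqrt T/\alpha)$. Choosing $\alpha = T^{-1/4}$ balances the two terms at $O(T^{3/4})$, making both the average budget and the per-round deviation $O(T^{-1/4})$, as claimed. The only extensive-form-specific subtleties are that the reduction inflates game-dependent constants only through $\log|\Pi_i| = \mathrm{poly}$ (harmless under the paper's conventions) and that all payments remain tree-computable under full feedback.
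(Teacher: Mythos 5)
Your reduction to the reduced normal form founders on a point the paper flags explicitly at the start of \Cref{sec:steering-EFGs}: in extensive form, the players' strategy spaces are not simplices, and the mediator observes \emph{sequence-form} strategies $\vx_i^{(t)} \in X_i$, not mixed distributions over reduced-normal-form plans. The factor $x_i^{(t)}[\pi_i^\star]$ at the heart of your payment is therefore not even well-defined from the mediator's observations: by Kuhn's theorem the map from $\Delta(\Pi_i)$ to $X_i$ is many-to-one (e.g., a player with two parallel information sets can correlate her choices across them without changing her sequence form, which changes the probability mass on the single plan $\pi_i^\star$). If you instead interpret $x_i^{(t)}[\pi_i^\star]$ behaviorally, as the product of the prescribed actions' probabilities, it becomes a \emph{non-linear} (indeed non-convex) function of $\vx_i$, violating the requirement in \Cref{def:dec-steering} that $p_i^{(t)}$ be linear in $\vx_i$. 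This is not a formality: the players' only behavioral guarantee is the regret bound \eqref{eq:reg-payment}, taken over $X_i$ with respect to the utilities $v_i^{(t)}$ they actually face, and standard extensive-form regret minimizers (CFR, etc.) certify no regret only against utilities linear in the sequence form. Your step ``since each player is a no-regret learner with respect to its realized (modified) utilities, summing this bound gives $\sum_t \delta_i^{(t)} \le \mathrm{Reg}_i/\alpha$'' silently replaces the paper's assumption with the much stronger one that players run normal-form no-regret dynamics over the exponential simplex $\Delta(\Pi_i)$ against losses that are not sequence-form-linear; under the paper's actual assumption, your margin-$\alpha$ dominance at the plan level yields no conclusion.

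The rest of your skeleton (dominance with margin $\alpha$, compensating for opponents' deviation mass, bounding compensation via convexity of the best-response value and the Nash property, then balancing $\alpha = T^{-1/4}$) correctly mirrors the normal-form argument of \Cref{th:normalform}, and your budget and rate bookkeeping would be fine \emph{if} the payment were admissible. The paper's \guideoffline resolves exactly the obstruction above with a different payment design: the directness bonus is the sequence-form-linear functional $\alpha \vec d_i^\top \vx_i$ (rather than a plan probability), and instead of paying a best-response gap gated on playing $\pi_i^\star$, it pays the \emph{sandboxing} term $u_i(\vx_i, \vec d_{-i}) - u_i(\vx_i, \vx_{-i})$ plus a $\vx_i$-independent shift for nonnegativity, as in \eqref{eq:payment}. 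This keeps $p_i$ linear in $\vx_i$ and continuous in $\vx_{-i}$, makes each player's incentives exactly as if the opponents were direct, and then the Nash property plus the bonus give the margin inequality $R(T)/T \ge \alpha\, \vec d_i^\top(\vec d_i - \bar{\vx}_i)$, which \Cref{lem:obedience-union} converts into the directness-gap and payment bounds. To salvage your approach you would have to replace $x_i^{(t)}[\pi_i^\star]$ with a sequence-form-linear surrogate and re-derive the dominance margin for sequence-form deviations---at which point you essentially arrive at the paper's construction.
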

The {\em trajectory feedback} case, however, is quite different. 
\begin{theorem}[Informal version of \Cref{th:bandit-lower-bound}]
    With only trajectory feedback and absolute constant per-round budget, steering in general extensive-form games is impossible, even to the welfare-maximizing pure Nash equilibrium.
\end{theorem}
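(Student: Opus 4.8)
The plan is to prove impossibility by fixing one small extensive-form game and showing that \emph{no} mediator restricted to trajectory feedback and a constant per-round budget can drive play to the target equilibrium while its \emph{average} payment vanishes. I would use a ``threat'' game: Player~1 moves first, choosing $\mathsf A$ (which ends the game) or $\mathsf B$; after $\mathsf B$, Player~2 acts at a single information set $I^\star$, choosing $\mathsf C$ or $\mathsf D$. I set the payoffs so that $u_1(\mathsf A)=\tfrac12$, $u_1(\mathsf B,\mathsf C)=1$, $u_1(\mathsf B,\mathsf D)=0$, and Player~2 gets its unique largest payoff at the $\mathsf A$-leaf while being \emph{indifferent} between $\mathsf C$ and $\mathsf D$ at $I^\star$. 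Then $\sigma^\star=(\mathsf A,\mathsf D)$ is a pure Nash equilibrium and is welfare-maximizing among pure equilibria, and it is \emph{sustained by an off-path threat}: Player~1 is willing to play $\mathsf A$ only because deviating to $\mathsf B$ would run into $\mathsf D$ (payoff $0<\tfrac12$); were Player~2 to play $\mathsf C$ at $I^\star$ instead, then $\mathsf B$ (payoff $1$) would be strictly better than $\mathsf A$. The decisive structural feature is that \emph{$I^\star$ lies off the equilibrium path}: when play follows $\sigma^\star$, Player~1 plays $\mathsf A$ and $I^\star$ is never reached, so under trajectory feedback the mediator never observes, and can never pay contingent on, Player~2's choice there.

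The adversary I would analyze is the \emph{stubborn} learner: Player~2 plays $\mathsf C$ at $I^\star$ in every round, regardless of payments, while Player~1 runs any standard no-regret algorithm on its payment-augmented rewards. The crux is a regret-accounting lemma: the stubborn profile is no-regret whenever play stays near $\sigma^\star$. Since $I^\star$ is reached exactly on the rounds in which Player~1 deviates to $\mathsf B$, and since Player~2 is intrinsically indifferent at $I^\star$, \emph{all} of Player~2's regret comes from the mediator's payments for $\mathsf D$, and is bounded by the per-round cap times the number of reaches, i.e.\ by $B\cdot n_{\mathsf B}$ where $n_{\mathsf B}$ is the number of $\mathsf B$-rounds. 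Here the \emph{constant} cap is doing the essential work: each visit to $I^\star$ can tilt Player~2's incentive by at most $B$, so the total incentive the mediator can ever inject toward $\mathsf D$ is capped at $B\,n_{\mathsf B}$. Consequently, if $n_{\mathsf B}=o(T)$ then Player~2's total regret is $o(T)$ and stubbornly playing $\mathsf C$ is a legitimate no-regret strategy that \emph{never} plays the prescribed $\mathsf D$.

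I would then close the argument with a dilemma driven by the vanishing-average-budget success criterion. Against the stubborn learner, action $\mathsf B$ yields Player~1 the payoff $u_1(\mathsf B,\mathsf C)=1$, a constant margin above $u_1(\mathsf A)=\tfrac12$. By the no-regret guarantee, Player~1's average reward is at least that of the fixed deviation ``always $\mathsf B$'' up to $o(1)$; comparing this with ``always $\mathsf A$'' shows that Player~1 can be induced to play $\mathsf A$ with frequency tending to $1$ only if its average payment exceeds $\tfrac12-o(1)$. Thus on-path convergence forces the average budget to be bounded away from $0$, contradicting success; conversely, if the average budget vanishes then ``always $\mathsf B$'' strictly dominates and a no-regret Player~1 plays $\mathsf A$ with frequency bounded away from $1$, so play does not converge to $\sigma^\star$. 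In the vanishing-budget regime the total payment is $o(T)$, so in particular $B\,n_{\mathsf B}=o(T)$ and the stubborn learner is no-regret, yielding a bona fide no-regret profile on which steering fails. Either way, no mediator succeeds.

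The step I expect to be the main obstacle is making this accounting robust to \emph{adaptive} mediators that deliberately expose $I^\star$ in order to ``train'' Player~2 toward $\mathsf D$. The point to nail down is that, under a constant per-round cap, dislodging a no-regret learner from its stubborn action requires injecting $\Omega(T)$ cumulative incentive at $I^\star$, and since each reach contributes at most the constant $B$, this demands $\Omega(T)$ reaches---equivalently $\Omega(T)$ deviations to $\mathsf B$---which is precisely what destroys either on-path convergence or the vanishing-budget requirement. This is exactly where the constant-budget hypothesis is indispensable and where the contrast with the earlier positive results appears: under full feedback Player~2's off-path action is observed every round, so the mediator can apply a vanishing per-round nudge toward $\mathsf D$ on all $T$ rounds and keep $\mathsf D$ self-sustaining at zero asymptotic cost; and if the per-round budget may grow with $T$, a single exposure can deliver enough incentive to overcome a no-regret learner's slack, forcing $\mathsf D$ with only sublinearly many deviations. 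I would therefore state the regret lemma quantitatively (total regret $\le B\,n_{\mathsf B}$) and phrase the conclusion as: no trajectory-feedback mediator with per-round budget $O(1)$ can simultaneously drive Player~1's deviations to $o(T)$ and keep its average payment $o(1)$, so it cannot steer the stubborn no-regret players to $\sigma^\star$.
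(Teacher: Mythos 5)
There is a genuine gap, and it is fatal to the chosen counterexample: your fixed two-player threat game is in fact \emph{steerable} with trajectory feedback and per-round budget $P=1$, so no impossibility proof can go through on it. Consider the mediator that pays nothing to Player~1, pays $q_2(\mathsf{B},\mathsf{D}) = P$ at the $(\mathsf B,\mathsf D)$ leaf every round, and nothing else. Note that this node is \emph{off the path of the target equilibrium} $(\mathsf A,\mathsf D)$, so these payments are never realized when steering succeeds --- this is precisely the feature the paper's construction is engineered to destroy. Now let $\beta_t$ be Player~1's probability of $\mathsf B$ and $c_t$ Player~2's probability of $\mathsf C$ at round $t$. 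Player~1's regret against the fixed strategy $\mathsf A$ is at least $\sum_t \beta_t(\tfrac12 - c_t)$, and Player~2's regret against the fixed strategy $\mathsf D$ is at least $\sum_t \beta_t c_t P$ (Player~2's payoff is unaffected by its own strategy except on reached rounds, and the benchmark $\mathsf D$ collects the bonus on exactly those rounds). Adding these two legality constraints,
\begin{equation}
\tfrac12 \sum_{t=1}^T \beta_t \;=\; \sum_{t=1}^T \beta_t\qty(\tfrac12 - c_t) + \sum_{t=1}^T \beta_t c_t \;\le\; (P+1)R(T)\qty(1 + \tfrac1P) \;=\; o(T),
\end{equation}
for \emph{every} pair of no-regret learners. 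Hence the directness gap and the realized payments (at most $P\cdot\frac1T\sum_t\beta_t$) both vanish: steering succeeds against all legal adversaries. Your dilemma argument breaks down because its two horns use two different, mutually inconsistent adversaries: when $n_{\mathsf B}=o(T)$ the stubborn-$\mathsf C$ Player~2 is indeed no-regret, but then a Player~1 that mostly plays $\mathsf A$ with small payments has \emph{linear} regret and is itself illegal; and once Player~1 plays $\mathsf B$ often, stubborn $\mathsf C$ becomes illegal. You never exhibit a single fixed pair of no-regret learners defeating a given mediator, and the computation above shows none exists. The claim that ``dislodging a no-regret learner requires $\Omega(T)$ cumulative incentive'' is the precise error: no-regret does not entitle Player~2 to ignore incentives offered only on reached rounds, because the regret benchmark collects on exactly the same rounds --- stubbornness is only free when $I^\star$ is rarely reached, and Player~1's own no-regret constraint forbids that unless Player~2 is actually playing $\mathsf D$.

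This also explains why the paper's proof of \Cref{th:bandit-lower-bound} must take a different shape: the theorem is quantified as ``for every $P$ there exists a game,'' with the game \emph{growing} with $P$ ($n=\Theta(P)$ players, $O(P^2)$ nodes). In the paper's $n$-player stag hunt, chance dilutes the unique ``catch-the-deviator'' node (where a bonus could incentivize {\sf Stag} without being paid in the {\sf Stag} equilibrium) to reach probability $1/(n(n+1))$, so a budget-$P$ payment there injects expected incentive at most $(1/4)/(n+1)$, which cannot overcome the $\Theta(1/n)$ benefit of {\sf Hare}; the only alternative is an on-path payment of $\Omega(1/n)$ per round, violating \ref{item:payments}. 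The players are constructed to play a Nash equilibrium of the payment-augmented game each round (so $R(T)=0$ unconditionally, avoiding your adaptive-adversary circularity). In your game the catch node $(\mathsf B,\mathsf D)$ is reached with probability \emph{equal} to the deviation probability itself --- there is no order-of-magnitude mismatch between reach probability and incentive scale --- which is exactly why a constant bonus suffices there and why a $P$-independent counterexample of this form cannot exist.
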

Intuitively, the discrepancy is because, with only trajectory feedback, it is not possible to make the target equilibrium dominant using only nonnegative, vanishing-on-average payments, so the techniques used for the previous results cannot apply. This phenomenon can already be observed in the ``stag hunt'' game in \Cref{fig:payments}: for Player 2, {\sf Stag (S)} cannot be a weakly-dominant strategy unless a payment is given at the boxed node, which would be problematic because such payments would also appear in the welfare-optimal equilibrium {\sf (S, S)}. Thus, one needs to be more clever. Fortunately, steering is still possible in this setting, but only if the per-round budget is also allowed to grow:
\begin{theorem}[Informal version of \Cref{th:offline-then-steer}]
    Steering to pure-strategy equilibria is possible in extensive-form games with full feedback. The average budget and rate of convergence to equilibrium are both $T^{-1/8}$, and the per-round budget grows at rate $T^{1/8}$. 
\end{theorem}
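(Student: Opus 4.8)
The plan is to reuse the payment scheme behind the full-feedback construction of \Cref{th:offline-then-steer} and to \emph{implement it from trajectory data by unbiased sampling}, paying for the resulting estimation error with a per-round budget that is allowed to grow. (Since the per-round budget grows, the relevant regime is the one in which the mediator observes only the realized path, which is where the difficulty lies.) Recall that the full-feedback scheme turns the target $\sigma^*$ into a strict best response for each player $i$ by combining a small on-path bonus with a compensation term that exactly offsets any gain a player could obtain by best-responding to an opponent's deviation; this produces a per-round incentive gap $\Delta$ for following $\sigma^*_i$ and an average budget of order $\Delta$. The trouble is that, as the stag-hunt example makes concrete, the compensation the scheme wants to place at a node (e.g.\ the boxed node) lives at information sets that are off the realized path and therefore unobserved under trajectory feedback. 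The first step is to rewrite each full-feedback payment as the expectation, over the trajectory drawn from the current strategies $\sigma^{(t)}$, of a trajectory-indexed payment, and then to have the mediator pay that trajectory-indexed quantity on the single realized path. Concretely, the incentive the full-feedback scheme places at an information set $I$ is reproduced by paying, \emph{when $I$ is actually reached}, an amount inflated by the inverse of the reach probability of $I$ and truncated at the per-round cap $\alpha$. By construction these payments are nonnegative, depend only on the observed trajectory, and match the full-feedback payments in expectation.

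The second step establishes convergence. Fix a player $i$ and invoke the no-regret guarantee against the comparator $\sigma^*_i$, so the realized augmented regret is at most $R(T)$. Because the payments are now random, I would pass from realized to expected augmented utilities via an Azuma–Hoeffding bound on the martingale of payment fluctuations; since each payment is bounded by $\alpha$, this costs an additive term growing with $\alpha$ and $\sqrt{T}$. Truncating the inverse-reach weights at $\alpha$ moreover biases the reconstructed incentive only at information sets whose reach probability falls below $\Delta/\alpha$, and a deviation at such an information set changes a player's value by at most its reach probability, contributing a truncation-bias term that shrinks as $\alpha$ grows. Combining these, the engineered gap $\Delta$ times the number of rounds on which player $i$'s realized play deviates from $\sigma^*_i$ is bounded by $R(T)$ plus the fluctuation term plus the bias term, which bounds the fraction of deviation rounds and hence the convergence rate.

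The third step bounds the budget. The expected total payment equals that of the full-feedback scheme, namely order $\Delta T$, and truncation only lowers payments; a second Azuma–Hoeffding bound (again using the per-round cap $\alpha$) shows the realized total concentrates around its mean, so the average budget is of order $\Delta$ up to lower-order fluctuations, with per-round budget $\Theta(\alpha)$. It then remains to choose the two free parameters $\alpha$ and $\Delta$ by balancing the three effects above — the no-regret term, the $\alpha$-dependent fluctuation term, and the $\alpha$-dependent truncation bias — against the average budget $\Theta(\Delta)$. Carrying out this balance yields the choice $\alpha=\Theta(T^{1/8})$ and on-path gap $\Delta=\Theta(T^{-1/8})$, for which the per-round budget is $\Theta(T^{1/8})$ and both the average budget and the convergence rate are $O(T^{-1/8})$, as claimed.

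I expect the main obstacle to be precisely the reconstruction of off-path incentives from on-path observations: the inverse-reach inflation that is forced by unobservability simultaneously raises the variance (hence the fluctuation term) and, once capped to respect nonnegativity and the per-round budget, introduces the truncation bias, and these two pull $\alpha$ in opposite directions against the budget. Controlling this three-way tension — rather than any single calculation — is the crux, and it is what degrades the rate from the full-feedback $T^{-1/4}$ to $T^{-1/8}$. That a growing per-round budget is genuinely unavoidable here (rather than an artifact of the analysis) is exactly the content of the lower bound \Cref{th:bandit-lower-bound}, which rules out steering with a constant per-round budget; so the plan must be to show that a per-round budget growing like $T^{1/8}$ is not only sufficient but sits at the right order of magnitude for making the truncated estimator both low-bias and affordable on average.
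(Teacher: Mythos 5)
Your plan hinges on rewriting each full-feedback payment as $\E_{z \sim \vx}\, q_i(z)$ for a trajectory-indexed payment $q_i$, and this is precisely the step the paper shows is impossible: any payment of that form is multilinear in the players' sequence-form strategies, whereas the full-feedback payment \eqref{eq:payment} contains the nonnegativity-ensuring term $-\min_{\vx_i' \in X_i}\qty[u_i(\vx_i', \vec d_{-i}) - u_i(\vx_i', \vx_{-i})]$, which is nonlinear in $\vx_{-i}$ (the paper states this explicitly when introducing \Cref{def:banditsteerproblem}). If you drop that term, the remaining compensation $u_i(\vx_i,\vec d_{-i}) - u_i(\vx_i,\vx_{-i})$ takes negative values, so any per-trajectory unbiased estimator of it must sometimes be negative, violating the constraint $q_i^{(t)}: Z \to [0,P]$; truncating at zero is then not a low-reach-probability correction, as your bias accounting assumes, but a systematic sign-dependent bias. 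There is also an implementability gap: your inverse-reach weights depend on the current profile $\vx^{(t)}$, which the mediator never observes (it sees one sampled leaf per round), and by \Cref{def:banditsteerproblem} the function $q_i^{(t)}$ must be committed \emph{before} the players choose $\vx^{(t)}$; against adaptive no-regret learners there is no stationarity that would let you estimate these weights from history. Finally, the obstruction is structural rather than statistical: as the stag-hunt discussion and the proof of \Cref{th:bandit-lower-bound} show, nonnegative trajectory payments cannot make the target equilibrium dominant at all, because the ``catch'' nodes where a deterrent payment must sit are reached with polynomially small probability---inflating payments there is exactly what the lower-bound construction exploits. So ``simulate full feedback by importance weighting'' is the wrong axis of attack.

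The paper's proof of \Cref{th:bandit-offline-steer} takes a genuinely different route. {\sc TrajectorySteer} pays $q_i^{(t)}(z) = \alpha \hat{\vd}[z] + P \vd_i[z](1 - \hat{\vd}[z])$: a small bonus $\alpha$ at the equilibrium leaf, plus a large payment $P$ to every still-direct player at any leaf where someone else deviated---no attempt is made to reproduce \guideoffline's incentives in expectation. Convergence then comes from a bootstrapping (``chicken-and-egg'') argument, \Cref{le:bandit-steering1}: whenever player $i$ deviates, all other players must be nearly direct (up to $\gamma = n\eps + \sum_j \delta_j/P$), since otherwise they would forgo catch payments of size $P$; this lets one compare each player's deviation benefit against $\vx_{-i}$ with its benefit against $\vec d_{-i}$, yielding $\Delta \le 2P\eps/(\alpha - 4|Z|/P)$ for the total deviation mass and realized payments of order $\alpha + P\Delta$, after which $\alpha = 4|Z|^{1/2}\eps^{1/4}$ and $P = 2|Z|^{1/2}\eps^{-1/4}$ give the stated rates. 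Note also that your Azuma--Hoeffding steps are unnecessary in the paper's model: payments are defined via the expectation $p_i^{(t)}(\vx) = \E_{z\sim\vx} q_i^{(t)}(z)$ and the guarantees are deterministic. You correctly identified where the difficulty lives (off-path incentives, the necessity of a growing per-round budget via \Cref{th:bandit-lower-bound}), but the estimator-based mechanism does not survive the nonnegativity and observability constraints, so the proposal has a genuine gap at its core step.
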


Next, we generalize our results beyond pure Nash equilibria. To do this, we will require the mediator to have the additional ability to give {\em advice} to the players, in the form of action recommendations. First, we show that using advice is a {\em necessary} condition for steering to even mixed Nash equilibria.
\begin{theorem}[Informal version of \Cref{th:advice-necessary}]
    Without advice, there exists a normal-form game in which the unique optimal Nash equilibrium is mixed, and it is impossible to steer players toward it. 
\end{theorem}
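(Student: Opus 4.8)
The plan is to exhibit one normal-form game $\Gamma$ whose unique welfare-maximizing Nash equilibrium $\sigma^*$ is mixed, and then to defeat every payment scheme with a tailored no-regret learner. I would take $\Gamma$ so that, at $\sigma^*$, some player---say player~$1$---must randomize over two actions $a,b$ in an interior ratio $\alpha:(1-\alpha)$ with $\alpha\neq \tfrac12$; the asymmetric ratio matters, since a symmetry-respecting learner would otherwise reproduce $\tfrac12:\tfrac12$ for free. The first step is a reduction: if steering succeeds, player~$1$ plays $a$ about $\alpha T$ times and $b$ about $(1-\alpha)T$ times, so the no-regret condition forces the cumulative modified payoffs of $a$ and $b$ to agree up to $o(T)$ and to be maximal. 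Thus $a$ and $b$ must be rendered \emph{asymptotically tied}; the mediator cannot instead make one strictly better, as that would push its frequency to $0$ or $1$ rather than to $\alpha$.

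The engine of the lower bound is that a tie leaves the learner free. Once $a$ and $b$ are tied for player~$1$, any split over $\{a,b\}$ is a best response and hence incurs only $o(T)$ regret, so I would run an adversarial no-regret learner that plays the \emph{wrong} split---e.g.\ almost always $a$. To make this self-consistent, I would build $\Gamma$ to possess a second no-regret-consistent profile $\pi^\dagger\neq\sigma^*$ supported inside $\operatorname{supp}(\sigma^*)$, in which player~$1$ plays $a$ purely while the remaining players act so as to keep $a$ and $b$ tied for player~$1$; the adversarial tuple then drives the empirical play to $\pi^\dagger$, which differs from $\sigma^*$.

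The hard part is that the mediator is adaptive and, in the full-information normal-form setting, the learner's regret also counts \emph{counterfactual} payments: the mediator may threaten a large payment on $b$ in rounds where player~$1$ plays $a$, apparently forcing player~$1$ toward $b$. I would neutralize this by a budget argument: to avoid accumulating $\Omega(T)$ regret against the fixed deviation ``always $b$,'' a no-regret player~$1$ must actually realize most of that threatened payment, which costs the mediator budget; since the total budget is $o(T)$, the rounds on which $b$ is made strictly preferable number only $o(T)$. On the remaining $(1-o(1))T$ rounds $a$ and $b$ are tied and player~$1$ is free, so the empirical split is dictated by these free rounds, on which the adversary plays $a$. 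Fitting this budget accounting together with mutual no-regret of the whole tuple---so that the non-steered players' own no-regret play keeps $a,b$ tied for player~$1$ against an adaptive mediator---is the step I expect to require the most care, and it is what dictates the precise payoffs of $\Gamma$.

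Finally, I would argue the bound is tight by contrasting with advice. With recommendations the mediator samples actions from $\sigma^*$ and privately tells each player what to play; because $\sigma^*$ is a Nash---hence a correlated---equilibrium, obeying the recommendation is a best response, and an obedient no-regret learner reproduces exactly the split $\alpha:(1-\alpha)$. Recommendations thus supply the tie-break that nonnegative vanishing payments cannot: they can make $\operatorname{supp}(\sigma^*)$ the set of best responses (as in the pure-equilibrium construction behind \Cref{th:normalform}) but cannot dictate the proportion in which a free learner splits across a tied support. This is precisely the gap the theorem isolates.
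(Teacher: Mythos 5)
Your high-level dichotomy---either the mediator's posted incentives on $b$ have cumulative margin $\Omega(T)$, in which case no-regret compliance forces $\Omega(T)$ realized payments, or the margins sum to $o(T)$, in which case a defiant learner plays the wrong split with only $o(T)$ regret---is sound in spirit, and it parallels the accounting in the paper's actual proof, which shows that in its game every round satisfies $\sum_i u_i(\vec x^{(t)}) + 2 p_i^{(t)}(\vec x^{(t)}) > 1$, so average welfare (which must converge to the mixed equilibrium's value) and average payments cannot both vanish. But there is a genuine gap at exactly the step you flag and do not resolve: the existence of a profile $\pi^\dagger$ in which player $1$ is pure on $a$ while the opponents keep $a$ and $b$ tied, \emph{with the whole tuple jointly no-regret against an adaptive mediator}. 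Once player $1$ locks onto $a$, the opponents' tie-keeping mixtures are in general not best responses for \emph{them} (in a matching-type game, player $2$'s best response to pure $a$ is pure $a$), so your tie-keepers accrue linear regret in the base game and the mediator defeats the construction without spending anything; building a game where $\pi^\dagger$ is itself regret-consistent requires a degenerate continuum of equilibria that you never exhibit. The paper sidesteps this entirely: its adversarial learners jointly play a Nash equilibrium of the payment-augmented game $\Gamma^{(t)}$ at every round, pure if one exists. That single trick makes every player's regret nonpositive \emph{for free}, however the mediator adapts, and reduces the theorem to a short case analysis in a $2\times 2$ coordination game with a welfare-\emph{minimizing} mediator: pure matching profiles have high welfare; pure mismatching profiles require payments of at least $2$ to each player to be equilibria of $\Gamma^{(t)}$; and if $\Gamma^{(t)}$ has no pure equilibrium, chasing the best-response cycle through the four profiles forces a payment exceeding $2$ somewhere.

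Two further repairs your route would need. First, your budget accounting is single-player, but the mediator can route incentives through the other players (e.g., paying player $2$ to break the tie), so the regret-versus-payment dichotomy must be argued jointly across the tuple---this is what the paper's ``joint Nash of $\Gamma^{(t)}$'' device delivers automatically. Second, your insistence on an asymmetric ratio $\alpha \neq \tfrac12$ is a red herring: impossibility only requires exhibiting \emph{one} admissible no-regret tuple that defeats every mediator, and the adversary chooses the learners, so a symmetry-respecting learner poses no threat; indeed the paper's counterexample is the symmetric coordination game whose target is the uniform mixed equilibrium, made uniquely optimal by taking the mediator's objective to be welfare minimization (with a footnote showing how to convert to maximization via a dummy third player). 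Your closing paragraph on why advice restores possibility is consistent with the paper's \Cref{th:bayes-correlated} but is not part of this theorem's proof.
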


If we allow advice, it turns out to be possible to steer players not just to mixed Nash equilibria but to a far broader set of equilibria known as the {\em Bayes-correlated equilibria}.
\begin{theorem}[Informal version of \Cref{th:bayes-correlated}]
    With advice, steering to Bayes-correlated equilibria is possible in extensive-form games. The conditions and rates are the same as those for pure Nash equilibria.
\end{theorem}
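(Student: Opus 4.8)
The plan is to reduce steering to a BCE in $\Gamma$ to the already-solved problem of steering to a \emph{pure} Nash equilibrium, but in an augmented game that internalizes the mediator's recommendations. First I would invoke a revelation-principle representation of the target BCE: construct an augmented extensive-form game $\Gamma'$ in which a correlation device (driven by chance) privately draws, for each player $i$ and each information set of $i$, a recommended action according to the BCE distribution, reveals it to $i$ as part of $i$'s information, and then lets the players act. A player's strategy in $\Gamma'$ is thus a \emph{deviation map} sending each recommendation to an action, and the distinguished ``obedient'' profile (always play the recommended action) is a \emph{pure} strategy profile of $\Gamma'$. The crucial structural fact is that the obedience (incentive) constraints defining a BCE are exactly the statement that the obedient profile is a Nash equilibrium of $\Gamma'$: conditioned on any recommendation, obeying is a best response to the others obeying. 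I would verify that $\Gamma'$ has size polynomial in that of $\Gamma$, so that all game-dependent constants (and hence the $T$-dependence of the rates) are preserved, and that the feedback model transfers cleanly: under full feedback the mediator sees each player's entire deviation map, and under trajectory feedback only the realized recommendation and the realized path.

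Second, I would apply the pure-Nash steering machinery of \Cref{th:offline-then-steer} to $\Gamma'$ with target equilibrium equal to the obedient profile. This immediately yields the ``same conditions and rates'' claim, since those rates depend only on polynomial game-dependent constants and on $T$: the $T^{-1/4}$ rates and constant per-round budget in the full-feedback case, and the $T^{-1/8}$ rates with $T^{1/8}$ per-round budget in the trajectory-feedback case, carry over verbatim. The payment scheme is the one already used for pure Nash: a small reward $\varepsilon$ for obeying, plus a compensation term that neutralizes the effect of the other players' deviations on player $i$'s incentives, making the obedient deviation map strictly dominant by margin $\varepsilon$. Because the players are no-regret, strict dominance forces the empirical frequency of disobedience to be bounded by (regret)$/\varepsilon$, so the average play converges to the BCE distribution; and because the payments vanish whenever everyone obeys, which is precisely the on-path behavior, the average budget vanishes at the stated rate.

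The main obstacle, and the step needing the most care, is making the reduction faithful for \emph{extensive-form} BCE and checking that the payment construction respects the feedback and budget constraints in $\Gamma'$. Concretely: (i) in an EFG a BCE recommends actions at information sets reached along a trajectory, so the revelation device and the obedience constraints must be set up at the infoset level, and I must confirm that the equivalence between obedience and Nash equilibrium of $\Gamma'$ survives the sequential information structure (this is where the extensive-form revelation principle does the real work, and where correlation with chance's private information must be handled correctly); and (ii) the compensation payments in $\Gamma'$ must be expressible as nonnegative functions of only the information the mediator actually observes in the relevant feedback model, while staying within the per-round budget. Under trajectory feedback this is exactly why the weaker $T^{1/8}$ rate and growing per-round budget are needed, mirroring the pure-Nash bounds (\Cref{th:bandit-lower-bound} and \Cref{th:offline-then-steer}). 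Once these two points are discharged, the theorem follows by direct appeal to the pure-Nash results, with the advice channel supplying the recommendation draws that the earlier ``no-advice'' impossibility (\Cref{th:advice-necessary}) shows to be indispensable.
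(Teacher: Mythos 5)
Your overall route is exactly the paper's: represent the target BCE via a mediator-augmented game in which recommendations are drawn by a fixed (chance-like) device, observe that the BCE obedience constraints say precisely that the direct/obedient profile $\vec d$ is a \emph{pure} Nash equilibrium of the augmented game $\Gamma^{\vec\mu}$, and then invoke \Cref{th:offline-then-steer} (full feedback) and \Cref{th:bandit-offline-steer} (trajectory feedback) verbatim on that game. The paper's proof of \Cref{th:bayes-correlated} is this one-line reduction, so your second and third paragraphs are faithful to it, including the observation that the advice channel is what circumvents \Cref{th:advice-necessary}.

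There is, however, one concrete gap: your claim that the augmented game ``has size polynomial in that of $\Gamma$'' is false for the construction as you describe it, and you supply no mechanism to make it true. If chance draws a recommendation at every information set and reveals it privately, then either chance draws the whole recommendation profile upfront (exponentially many chance outcomes), or recommendations are drawn lazily at each decision point, in which case a terminal node of the augmented game must record the recommendation issued at every decision node along its path; the number of augmented terminals is then $\sum_{\hat z}\prod_{\hat h \prec \hat z}\lvert A_{\hat h}\rvert$, which is exponential in $\lvert \hat Z\rvert$ already for a ``caterpillar'' tree of depth $d$ with binary actions ($O(d)$ original leaves but $\Theta(2^d)$ augmented ones). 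Since the rates in \Cref{th:offline-then-steer,th:bandit-offline-steer} carry game-dependent constants polynomial in $\lvert Z\rvert$, an exponential blowup destroys the ``same conditions and rates'' conclusion even though the exponents in $T$ are unaffected. The paper closes this gap with a specific truncation device: once two deviations from recommendations have occurred, the mediator ceases issuing recommendations. A terminal node is then identified by a tuple $(\hat z, \hat h_1, \hat h_2, \hat a_1, \hat a_2)$ recording the (at most two) deviation points and the recommendations there, giving $\lvert Z\rvert \le \lvert \hat Z\rvert^3$; one must also note that this truncation is harmless for the equilibrium property and for the steering analysis. Your proof needs this (or an equivalent) compression step to be complete.
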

Intuitively, the result follows because Bayes-correlated equilibria can be viewed as the pure Nash equilibria of an {\em augmented game} in which the advice is treated as part of the game's observations. Bayes-correlated equilibria are a very general solution concept that include, for example, all the extensive-form correlated equilibria~\cite{Stengel08:Extensive} and communication equilibria~\cite{Forges86:Approach,Myerson86:Multistage}, among other notions. 

Finally, we give an {\em online} version of our algorithm, which does not need to know the target equilibrium beforehand. Instead, given an objective function, the online steering algorithm {\em steers players toward the optimal equilibrium while computing it.} 
\begin{theorem}[Informal version of \Cref{th:online-steer}]
    In the full-feedback setting with advice and absolute constant per-round budget, it is possible to learn the optimal equilibrium while simultaneously steering the players toward it. The average budget and rate of convergence to equilibrium are both $T^{-1/6}$. 
\end{theorem}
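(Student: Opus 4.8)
The plan is to treat the online problem as the composition of two interacting learning processes: an \emph{outer} no-regret solver that the mediator runs to optimize the objective over the set of Bayes-correlated equilibria, and the \emph{inner} full-feedback steering primitive of \Cref{th:offline-then-steer,th:bayes-correlated}, which drives the players to any fixed target equilibrium. Recall from \Cref{th:bayes-correlated} that a BCE is a pure Nash equilibrium of an augmented game in which recommendations are part of the observation, and that the steering payments make ``follow the recommendation'' dominant by a margin $\alpha$; the optimal BCE is then the solution of a linear program $\max_\mu \langle o, \mu\rangle$ over the (polytopal) set of incentive-feasible recommendation distributions $\mu$. I would first write this as a bilinear saddle point via its Lagrangian, $\max_{\mu}\min_{\lambda\ge 0}\ \langle o,\mu\rangle-\langle \lambda, A\mu\rangle$, where $A\mu\ge 0$ encodes the players' no-beneficial-deviation constraints.

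Next I would discretize the horizon into $K$ epochs of length $L=T/K$. In epoch $k$ the mediator holds a candidate $\mu^{(k)}$, samples recommendations from it, and applies the full-feedback steering mechanism with per-round margin $\alpha$; by the dominance argument of \Cref{th:offline-then-steer} together with the players' sublinear regret, the players follow the recommendations in all but an $\tilde O\big(1/(\alpha\sqrt{L})\big)$ fraction of the rounds of the epoch, so the observed full-feedback play reveals the deviation benefits $A\mu^{(k)}$ up to that additive error. The mediator uses this observation as an (approximate) gradient to perform one no-regret ascent--descent update of $(\mu^{(k)},\lambda^{(k)})$, producing the next candidate. Standard saddle-point analysis then gives that the averaged candidate $\bar\mu=\frac1K\sum_k\mu^{(k)}$ is an $O\big(1/\sqrt{K}+1/(\alpha\sqrt{L})\big)$-approximate optimal BCE, and the steering guarantee ensures the \emph{empirical distribution of actual play} is within the same order of $\bar\mu$.

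The three error sources, namely the average budget $\alpha$, the outer solver error $1/\sqrt{K}$, and the per-epoch feedback/steering error $1/(\alpha\sqrt{L})$, are then balanced subject to $KL=T$. Setting $\alpha=T^{-1/6}$, $K=T^{1/3}$, and $L=T^{2/3}$ makes all three equal to $T^{-1/6}$, which yields the claimed rate for both the average budget and the convergence to the optimal equilibrium; one checks separately that the per-round payments remain bounded by an absolute constant, since the margin $\alpha\le 1$ and the (rare) compensation payments are $O(1)$ exactly as in the offline analysis.

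I expect the main obstacle to be the coupling between the two processes: the outer solver consumes gradients that are biased and noisy precisely because the inner players have \emph{not} fully converged within a finite epoch, and the target $\mu^{(k)}$ is itself moving while the players' no-regret guarantee is only a global, fixed-comparator statement. The crux is therefore (i) to argue that the steering payments keep ``follow the recommendation'' dominant \emph{uniformly} across epochs despite the changing recommendation distribution, so that the global no-regret bound still forces following in almost every round, and (ii) to show that the per-epoch feedback error enters the saddle-point convergence only additively (rather than being amplified), so that the clean $1/\sqrt K + \text{(feedback error)}$ bound holds. Controlling this error propagation, while keeping all payments nonnegative and per-round bounded, is the technical heart of the argument.
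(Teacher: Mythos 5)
You have correctly identified the two load-bearing ingredients—the Lagrangian saddle-point formulation of \Cref{prop:zerosum} and a three-way error balance yielding $T^{-1/6}$—but your epoch-based architecture has a genuine gap, and it is exactly the one you flag as the ``crux'' without resolving it. Your claim (i), that the steering payments keep obedience dominant uniformly across epochs, is false whenever the candidate $\vec\mu^{(k)}$ is not itself (approximately) an equilibrium of the augmented game: the dominance argument in \Cref{th:offline-then-steer} uses, in its very first step, that $\vec d$ is a Nash equilibrium of the game being steered (``$u_i(\vec d) - u_i(\vec x_i, \vec d_{-i}) \ge 0$''), and during online learning the outer iterates are generically infeasible—that is precisely what the outer solver is trying to correct. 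Within such an epoch a player can profit by more than $\alpha$ from disobeying, so your $\tilde O\big(1/(\alpha\sqrt L)\big)$ per-epoch directness bound fails and the ``gradient'' $A\vec\mu^{(k)}$ you feed the outer solver is corrupted by an amount not controlled by $\alpha$. (Projecting each $\vec\mu^{(k)}$ onto the BCE polytope would restore dominance, but that projection is the offline equilibrium computation the online algorithm exists to avoid.) Two further problems: even where dominance holds, your per-epoch rate requires players to have $O(\sqrt L)$ regret \emph{on each epoch}, i.e., interval/adaptive regret, which is strictly stronger than the paper's assumption of bounded total regret—an adversarial no-regret player may concentrate all of $R(T)$ in a few epochs and destroy those gradients; and running no-regret on the multiplier $\lambda \ge 0$ over an unbounded domain has no regret guarantee, while capping it correctly requires the unknown $\lambda^*$.

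The paper's proof of \Cref{th:online-steer} sidesteps all three issues by abandoning epochs and per-epoch convergence entirely. In \guideonline, $\lambda$ is a \emph{fixed} hyperparameter growing as $\eps^{-1/3}$ (so the unknown $\lambda^*$ enters only through a vanishing $\lambda^*/\lambda$ term rather than through a learned dual), and the mediator updates every round using the exact bilinear Lagrangian utility evaluated at the \emph{observed} strategies $\vec x^{(t)}$—full feedback makes this exact, so no estimation step exists to be corrupted. The entire interaction is then simultaneous no-regret dynamics in the fixed bilinear zero-sum game \eqref{eq:online-steering-game}, whose average iterates $(\bar{\vec\mu}, \bar{\vec y})$ form an $\eps$-saddle point with $\eps = (R_\mediator(T) + 4nR(T))/T$; this yields $-\vec d^\top \bar{\vec y} \le \lambda^*|Z|/\lambda + \eps/\alpha$, and \Cref{lem:obedience-union} converts that single quantity into the payment, directness, and optimality bounds, balanced at $\alpha = \eps^{2/3}|Z|^{-1/3}$, $\lambda = |Z|^{2/3}\eps^{-1/3}$ to give $O(\lambda^* |Z|^{4/3}\eps^{1/3}) \approx T^{-1/6}$. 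In short: players' regret enters only in aggregate over the whole horizon, never per epoch, and dominance of directness is never asserted for infeasible iterates—deviation benefits are instead charged to the mediator through the Lagrangian. Your rate arithmetic anticipates the right answer, but the mechanism that makes the coupling sound is this global saddle-point view, not epoch-wise steering.
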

As before, in normal-form games, full feedback and trajectory feedback essentially coincide, so online steering also turns out to be possible in normal-form games with trajectory feedback. In extensive-form games, however, the problem of trajectory-feedback online steering seems more difficult, and we leave it as an open problem. We summarize the rates we obtain in~\Cref{tab:steering-summary}.

Finally, we complement our theoretical analysis by implementing and testing our steering algorithms in several benchmark games in \Cref{sec:experiments}.

\begin{table}[!t]
    \centering
    \caption{Summary of our positive algorithmic results. We hide game-dependent constants and logarithmic factors, and assume that regret minimizers incur regret $T^{-1/2}$.}
    \label{tab:steering-summary}
    \begin{tabular}{c|cc}
      & Steering to Fixed Equilibrium  & Online Steering \\\toprule
     Normal Form or Full Feedback & $T^{-1/4}$ (\Cref{th:offline-then-steer}) & $T^{-1/6}$ (\Cref{th:online-steer}) \\
    Extensive Form and Trajectory Feedback & $T^{-1/8}$ (\Cref{th:bandit-offline-steer}) & {\em Open problem} \\ \bottomrule
    \end{tabular}
    \vspace{-4mm}
\end{table}

\subsection{Related Work}
\label{sec:rel}

\paragraph{$k$-implementation}

Our setting and our algorithms are closely related to the problem of {\em $k$-implementation}~\cite{Monderer04:Implementation} in normal-form games (see also~\citep{Deng16:Complexity} for pertinent complexity considerations). In $k$-implementation, the goal is to make a certain strategy profile a {\em (weakly) dominant strategy} for all players using nonnegative payments. \citet{Monderer04:Implementation} observe that only Nash equilibria can be implemented using zero realized payments. Our \guideoffline algorithm operates in a similar setting: by precomputing an equilibrium and giving payments in such a way that players are {\em sandboxed}, each player's dominant strategy is to be direct, so the players converge. Indeed, for {\em normal-form} games, the fact that all pure Nash equilibria are (in the language of $k$-implementation) $0$-implementable implies that steering is possible for normal-form games, in both the full-feedback and online settings. Our \guideoffline and {\sc TrajectorySteer} algorithms could then be interpreted as saying that arbitrary Nash equilibria of extensive-form games can be implemented in unique normal-form coarse correlated equilibria (and therefore the unique convergence point of no-regret learning dynamics). 

However, our results are more general than  \cite{Monderer04:Implementation} in several ways. First, our rationality assumption differs. Instead of considering players that play weakly-dominant strategies, we consider a no-regret assumption. We show that in extensive form, this distinction is meaningful: it is sometimes impossible to make the desirable equilibrium weakly dominant, and this leads to a more intricate proof for \Cref{th:bandit-offline-steer}. Second, we consider a wider class of games: Our algorithms work in arbitrary extensive-form settings, not just normal form. As we discussed above, this allows for a natural formulation of information design problems and a theoretically rich problem in the trajectory-feedback setting. Even in the full-information setting, working in extensive form means that we need to be careful in designing the payment scheme so that the maximum possible payment $P$ is constant.  For instance, \Cref{th:bandit-lower-bound} shows that no absolute constant payment can suffice. Finally, we make less restrictive information assumptions.Algorithm $\guideonline$ {\em learns the equilibrium while steering agents toward it}.

\paragraph{Steering to near-optimal equilibria}

Moreover, ample of prior research has endeavored to steer strategic agents toward ``good'' equilibria~\citep{Mguni19:Coordinating,Li20:End,Kempe20:Inducing,Liu22:Inducing,Dinh21:Last,Bishop21:How}. Indeed, the presence of a centralized party that can help ``nudge'' behavior to a better state has served as a central motivation for the literature on the \emph{price of stability}~\citep{Anshelevich08:The,Schulz03:On,Agussurja09:The,Panageas16:Average}, thereby allowing to circumvent impossibility results in terms of the worst Nash equilibria~\citep{Koutsoupias99:Worst,Roughgarden05:Selfish}. For example, as articulated by~\citet{Balcan09:Improved}: ``In cases where there are both high and low cost Nash equilibria, a central authority could hope to `move' behavior from a high-cost equilibrium to a low-cost one by running a public service advertising campaign promoting the better behavior.'' Nevertheless, \citet{Balcan09:Improved} also stress that it is unrealistic to assume that all agents blindly follow the prescribed protocol, unless it is within their interest to do so; this is indeed a key motivation for our considerations. \citet{Balcan11:Leading,Balcan13:Circumventing,Balcan14:Near} also endeavor to lead learning dynamics to a desired state for certain classes of games, although there are key differences between those papers and our setting. In particular, focusing on the work of~\citet{Balcan14:Near} for concreteness, our paper shows that steering is possible under the mild assumption that players have vanishing average regret, while \citet{Balcan14:Near} impose much stronger behavioral assumptions; namely, in the first phase of their protocol players who receive advise are assumed to obey, even though it may not be in their own interest, while the rest of the players are following best response dynamics. Further, while in the protocol of~\citet{Balcan14:Near} advise is provided to a subset of the players, they only guarantee convergence to an \emph{approximately} optimal state; by contrast, our focus here is on steering to optimal equilibria.

On a related direction, \citet{Kleinberg11:Beyond} identify a class of games where specific learning dynamics lead to much better social welfare compared to the Nash equilibrium. Relatedly, Roughgarden's smoothness framework~\citep{Roughgarden15:Intrinsic} gives bounds on the (time-average) social welfare guarantees under no-regret learners, but imposes somewhat restrictive assumptions on the underlying class of games.

\paragraph{Strategizing against no-regret learners} 

Our problem of steering no-regret learners to desirable outcomes is also connected to the problem of strategizing against no-regret learners, studied from different perspectives in several prior papers~\citep{Deng19:Strategizing,Kolumbus21:How,Freeman20:No,Roughgarden17:Online,Dandrea23:Playing,Cho21:Machine,Mansour22:Strategizing,Brown23:Is,Li23:Exploiting,Cai23:Selling}. Particularly relevant is \citet{Deng19:Strategizing}. The paper considers the choice of strategies against a single no-regret agent, and asks the question of whether outcomes better than Stackelberg/mechanism design outcomes, in which the agent optimally responds to actions by the mediator are achievable (with a negative answer). The paper assumes that the game does not have weakly dominated pure strategies for the no-regret agent, and that the learner is mean-based~\cite{braverman2018selling}. Our setting is more general in game class (all extensive-form games) and in terms of the power of the principal (payments and advice, in contrast to actions in a game).

Moreover, introducing nonnegative payments to incentivize specific outcomes bears resemblance to the setting of \emph{contract design}~\citep{Duetting22:Multi,Dutting21:The,Dutting21:Combinatorial,Guruganesh24:Contracting}, and has been recently employed in \emph{federated learning} as well to encourage participation (\emph{e.g.}, \citealp{Hu23:Federated}). Finally, our study relates to the literature of mechanism design that adopts vanishing regret as a behavioral assumption \cite{camara2020mechanisms,braverman2018selling,Fikioris23:Liquid}.

\section{Preliminaries}\label{sec:prel}

In this section, we introduce some basic background on extensive-form games.

\begin{definition}
    An \emph{extensive-form game} $\Gamma$ with $n$ players has the following components:
    \begin{enumerate}%
        \item a set of players, identified with the set of integers $\range{n} := \{ 1, \dots, n \}$. We will use $-i$, for $i \in \range{n}$, to denote all players except $i$;
        \item a directed tree $H$ of {\em histories} or {\em nodes}, whose root is denoted $\Root$. The edges of $H$ are labeled with {\em actions}. The set of actions legal at $h$ is denoted $A_h$. Leaf nodes of $H$ are called {\em terminal}, and the set of such leaves is denoted by $Z$;
        \item a partition $H \setminus Z = H_\chance \sqcup H_1 \sqcup \dots \sqcup H_n$, where $H_i$ is the set of nodes at which $i$ takes an action, and $\chance$ denotes the chance player;
        \item for each player $i \in \range{n}$, a partition $\mc I_i$ of $i$'s decision nodes $H_i$ into {\em information sets}. Every node in a given information set $I$ must have the same set of legal actions, denoted by $A_I$;\looseness-1
        \item for each player $i$, a {\em utility function} $u_i : Z \to [0,1]$ which we assume to be bounded; and
        \item for each chance node $h \in H_\chance$, a fixed probability distribution $c(\mathop{\cdot}|h)$ over $A_h$.
    \end{enumerate}
\end{definition}

At a node $h \in H$, the {\em sequence} $\sigma_i(h)$ of an agent $i$ is the set of all information sets encountered by agent $i$, and the actions played at such information sets, along the $\Root \to h$ path, excluding at $h$ itself. An agent has {\em perfect recall} if $\sigma_i(h) = \sigma_i(h')$ for all $h, h'$ in the same infoset. Unless otherwise stated (\Cref{se:mediator}), we assume that all players have perfect recall. We will use $\Sigma_i := \{ \sigma_i(z) : z \in Z \}$ to denote the set of all sequences of player $i$ that correspond to terminal nodes. 

A {\em pure strategy} of player $i$ is a choice of one action in $A_I$ for each information set $I \in \mc I_i$. The {\em sequence form} of a pure strategy is the vector $\vec x_i \in \{0, 1\}^{\Sigma_i}$ given by $\vec x_i[\sigma] = 1$ if and only if $i$ plays every action on the path from the root to sequence $\sigma \in \Sigma_i$. We will use the shorthand $\vec x_i[z] = \vec x_i[\sigma_i(z)]$. A {\em mixed strategy} is a distribution over pure strategies, and the sequence form of a mixed strategy is the corresponding convex combination $\vec x_i \in [0, 1]^{\Sigma_i}$. We will use $X_i$ to denote the polytope of sequence-form mixed strategies of player $i$.

A profile of mixed strategies $\vec x = ( \vec x_1, \dots, \vec x_n) \in X := X_1 \times \dots \times X_n$, induces a distribution over terminal nodes. We will use $z \sim \vec x$ to denote sampling from such a distribution. The expected utility of agent $i$ under such a distribution is given by $u_i(\vec x) := \E_{z \sim  \vec x} u_i(z)$. Critically, the sequence form has the property that each agent's expected utility is a linear function of its own sequence-form mixed strategy. For a profile $\vec x \in X$ and set $N \subseteq \range{n}$, we will use the notation $\hat{\vx}_N \in \R^Z$ to denote the vector $\hat{\vx}_N[z] = \prod_{j \in N} \vx_j[z]$, and we will write $\hat{\vx} := \hat{\vx}_{\range{n}}$. A Nash equilibrium is a strategy profile $\vec x$ such that, for any $i \in \range{n}$ and any $\vx_i' \in X_i$,
$u_i (\vec x) \ge u_i (\vx_i', \vec x_{-i}).$

\section{The Steering Problem}
\label{sec:decentralization}

In this section, we introduce what we call the {\em steering} problem. Informally, the steering problem asks whether a mediator can always steer players to any given equilibrium of an extensive-form game.

\begin{definition}[Steering Problem for Pure-Strategy Nash Equilibrium]
    \label{def:dec-steering}
Let $\Gamma$ be an extensive-form game with payoffs bounded in $[0,1]$. Let $\vec d$ be an arbitrary pure-strategy Nash equilibrium of $\Gamma$, which we will call the {\em target equilibrium}. The mediator knows the game $\Gamma$, as well as a function $R(T) = o(T)$, which may be game-dependent, that bounds the regret of all players. At each round $t \in \range{T}$, the mediator picks {\em payment functions} for each player, $p_i^{(t)} : X_1 \times \cdots \times X_n \to [0, P]$, where $p_i^{(t)}$ is linear in $\vx_i$ and continuous in $\vx_{-i}$, and $P$ defines the largest allowable per-iteration payment. Then, players pick strategies $\vx_i^{(t)} \in X_i$. Each player $i$ then gets utility $v_i^{(t)}(\vx_i) := u_i(\vx_i, \vx^{(t)}_{-i}) + p_i^{(t)}(\vx_i, \vx^{(t)}_{-i})$. The mediator has two desiderata.
    \begin{enumerate}[(S1)]
        \item \label{item:payments} (Payments) The time-averaged realized payments to the players, defined as 
                $$
                  \max_{i \in [n]} \frac{1}{T} \sum_{t=1}^T  p_i^{(t)}(\vx^{(t)}),$$ 
        converges to 0 as $T \to \infty$.
        \item \label{item:opt-med} (Target Equilibrium) Players' actions are indistinguishable from the Nash equilibrium $\vec d$. That is, for every terminal node $z$, the {\em directness gap}, defined as
        $$\sum_{z \in Z} \abs{\frac{1}{T} \sum_{t=1}^T \hat{\vx}^{(t)}[z] - \hat{\vec d}[z]} = \norm{\frac{1}{T} \sum_{t=1}^T \hat{\vx}^{(t)} - \hat{\vec d}}_1,$$ 
        converges to $0$ as $T \to \infty$.
    \end{enumerate}
\end{definition}
The assumption imposed on the payment functions in \Cref{def:dec-steering} ensures the existence of Nash equilibria in the payment-augmented game (\emph{e.g.}, \citealp{Fudenberg91:Game}, p. 34). Throughout this paper, we will refer to players as {\em direct} if they are playing actions prescribed by the target equilibrium strategy $\vec d$. Critically, \ref{item:opt-med} does not require that the strategies themselves converge to the direct strategies, \ie, $\vx_i^{(t)} \to \vec d_i$, in iterates or in averages. They may differ on nodes off the equilibrium path.
Instead, the requirement defined by \ref{item:opt-med} is that the {\em outcome distribution over terminal nodes} converges to that of the equilibrium. Similarly, \ref{item:payments} refers to the {\em realized} payments $p_i^{(t)}(\vec x^{(t)})$, not the {\em maximum offered payment} $\max_{\vec x \in X} p_i^{(t)}(\vec x)$. 

For now, we assume that the pure Nash equilibrium is part of the instance, and therefore our only task is to steer the agents toward it. In \Cref{se:mediator} we show how our steering algorithms can be extended to other equilibrium concepts such as {\em mixed} or {\em (Bayes-)correlated} equilibria, and to the case where the mediator needs to compute the equilibrium.

The mediator does not know anything about how the players pick their strategies, except that they will have regret bounded by a function that vanishes in the limit and is known to the mediator. This condition is a commonly adopted behavioral assumption~\citep{Nekipelov15:Econometrics,Kolumbus22:Auctions,camara2020mechanisms}. The regret of Player $i \in \range{n}$ in this context is defined as\looseness-1
\begin{equation}
    \label{eq:reg-payment}
    \reg_{X_i}^T \defeq \frac{1}{P+1} \qty[ \max_{\vx_i^* \in X_i}  \sum_{t=1}^T v_i^{(t)}(\vx_i^*) - \sum_{t=1}^T v_i^{(t)}(\vx_i^{(t)})]. %
\end{equation}
That is, regret takes into account the payment functions offered to that player. (The division by $1/(P+1)$ is for  normalization, since $v_i^{(t)}$s has range $[0, P+1]$.) The assumption of bounded regret is realistic even in extensive-form games, as  various regret minimizing algorithms exist. Two notable examples are the {\em counterfactual regret minimization} (CFR) framework~\cite{Zinkevich07:Regret}, which yields {\em full-feedback} regret minimizers, and IXOMD~\cite{Kozuno21:Learning} (see also~\citep{Fiegel23:Adapting,Bai22:Near}), which yields {\em bandit-feedback} regret minimizers.

How large payments are needed to achieve \ref{item:payments} and \ref{item:opt-med}? If the mediator could provide totally unconstrained payments, it could enforce any arbitrary outcome. On the other hand, if the total payments are restricted to be bounded, the steering problem is information-theoretically impossible:

\begin{restatable}{proposition}{boundpay}
    \label{prop:imposs-boundedpayments}
    There exists a game and some function $R(T) = O(\sqrt{T})$ such that, for all $B \ge 0$, the steering problem is impossible if we add the constraint $\sum_{t=1}^\infty \sum_{i = 1}^n  p_i^{(t)}(\vx^{(t)}) \le B$.
\end{restatable}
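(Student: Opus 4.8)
The plan is to exhibit a single minimal game in which bounded realized payments are powerless against a suitably chosen no-regret learner. Take the one-player game whose root is a single decision node with two actions, $\mathsf a$ (the target) and $\mathsf b$, leading to two distinct terminal nodes of equal utility $u(\mathsf a)=u(\mathsf b)=\tfrac12$. Both pure strategies are Nash equilibria, so $\vec d=\mathsf a$ is a legitimate target. The whole point of the equal utilities is that the learner has no intrinsic reason to prefer $\mathsf a$ over $\mathsf b$, so any directness must be manufactured by payments; I will show a constant realized-payment budget cannot manufacture it.

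For the learner I would deliberately avoid the naive ``always play $\mathsf b$'' strategy: against it the mediator can announce a large payment $p^{(t)}(\mathsf a)=P$ for the untaken action at every round, which costs no budget (it is never realized) yet inflates the regret of ``always $\mathsf b$'' to linear. Instead, let the player run an anytime multiplicative-weights (Hedge) learner with step size $\eta_t=\Theta(1/\sqrt t)$ on the augmented payoffs $v^{(t)}=p^{(t)}$ (the constant $u\equiv\tfrac12$ cancels out of the regret). This guarantees regret $R(T)=O(\sqrt T)$ against \emph{any} adaptively chosen payment sequence, so it is a valid player for the $R(T)=O(\sqrt T)$ promised by the statement, and its guarantee already absorbs the mediator's adaptivity.

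The key chain of inequalities is then as follows. Write $S_{\mathsf a}^{(t)}=\sum_{s\le t}p^{(s)}(\mathsf a)$ and likewise $S_{\mathsf b}^{(t)}$. At every horizon $t$ the learner's regret bound gives a realized payment $\sum_{s\le t}p^{(s)}(\vx^{(s)})\ge\max(S_{\mathsf a}^{(t)},S_{\mathsf b}^{(t)})-O(\sqrt t)$. If the mediator respects the cap, this realized total is at most $B$ at every finite horizon, so $\max(S_{\mathsf a}^{(t)},S_{\mathsf b}^{(t)})\le B+O(\sqrt t)=O(\sqrt t)$: a no-regret learner automatically grabs any persistently large offer, so a bounded realized budget silently forces even the \emph{unrealized} offer for $\mathsf a$ to stay $O(\sqrt t)$. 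But Hedge plays $\mathsf a$ with probability $\vx^{(t)}[\mathsf a]=\frac{1}{1+e^{-\eta_t(S_{\mathsf a}^{(t)}-S_{\mathsf b}^{(t)})}}$, and $\eta_t(S_{\mathsf a}^{(t)}-S_{\mathsf b}^{(t)})\le\eta_t S_{\mathsf a}^{(t)}=\Theta(1/\sqrt t)\cdot O(\sqrt t)=O(1)$ uniformly in $t$. Hence $\vx^{(t)}[\mathsf a]$ is bounded by a constant strictly below $1$ for all rounds, the time-averaged weight on $\mathsf a$ cannot approach $1$, and the directness gap stays bounded away from $0$, violating \ref{item:opt-med}. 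The only escape for the mediator---offering more than $O(\sqrt t)$ to $\mathsf a$---is exactly what this learner converts into realized payments exceeding $B$, violating the cap. Either way steering fails, for every mediator and every fixed $B$.

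I expect the main obstacle to be precisely the interplay between the mediator's adaptivity and the fact that the budget constrains only realized, not offered, payments: unrealized ``bait'' can inflate a static adversary's regret for free. The resolution I rely on is that an adversarial-regret learner cannot be baited indefinitely without eventually collecting the bait, so the realized-payment cap caps the offered incentive for the target action as well; the $\Theta(1/\sqrt t)$ step size then turns that $O(\sqrt t)$ cap into an $O(1)$ log-odds bias, which is too weak to push the play distribution onto the target. A secondary point to get right is that all of these bounds must hold at \emph{every} finite horizon rather than merely asymptotically, which is why I use an anytime learner together with the $\sum_{t=1}^{\infty}\le B$ form of the constraint.
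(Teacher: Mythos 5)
Your proposal is correct, but it takes a genuinely different route from the paper's. The paper uses a \emph{two-player} coordination game (payoffs $0.5$ on $({\sf A,A})$, $1$ on $({\sf B,B})$, target $({\sf B,B})$) and models the players as playing an exact Nash equilibrium of the payment-augmented game $\Gamma^{(t)}$ at every round, hence with nonpositive per-round regret; whenever the mediator's payments destroy $({\sf A,A})$ as an equilibrium of $\Gamma^{(t)}$, the players deliberately switch to a profile collecting total payment $> \nicefrac{1}{2}$, which can happen at most $2B$ times before the budget is exhausted, so all but $B^2 + 2B$ rounds are spent at the suboptimal equilibrium (the initial $B^2$ rounds of arbitrary equilibrium play are a device ensuring $\sqrt{T} \ge B$ when the regret of the payment-collecting rounds is charged against $R(T) = 2\sqrt{T}$). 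You instead use a minimal \emph{one-player} game with two indifferent actions and a concrete anytime Hedge learner, and your engine is the clean quantitative observation that any no-regret learner's realized payments satisfy $\sum_{s \le t} p^{(s)}(\vx^{(s)}) \ge \max\qty(S_{\mathsf a}^{(t)}, S_{\mathsf b}^{(t)}) - O(\sqrt{t})$, so the budget cap silently caps even the \emph{offered} cumulative incentive at $O(\sqrt t)$, which the $\Theta(1/\sqrt t)$ step size converts into an $O(1)$ log-odds bias---too weak to concentrate play on the target. Each approach buys something: the paper's shows impossibility even against perfectly rational, equilibrium-playing (zero-regret) agents and in an instance where the target is the strictly Pareto-dominant equilibrium, making the negative result more pointed; yours is more minimal (a single agent, so no coordination among ``punishing'' players is needed), uses a standard off-the-shelf learner rather than bespoke adversarial behavior, and isolates a reusable principle---a no-regret learner cannot be baited indefinitely without collecting the bait---with bounds that hold uniformly at every horizon. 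One bookkeeping detail you should make explicit: the Hedge step size must be scaled by the payoff range $P+1$, matching the $\nicefrac{1}{(P+1)}$ normalization in \eqref{eq:reg-payment}, so that the budget-derived bound $S_{\mathsf a}^{(t)} \le B + (P+1)C\sqrt t$ yields $\eta_t S_{\mathsf a}^{(t)} \le c\,B/(P+1) + cC$, a constant uniform in $t$; the directness gap is then bounded below by $2\qty(1 - \sigmoid(c B + cC)) > 0$ for every $T$, as your argument requires.
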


\begin{proof}
    Suppose that the mediator's goal is for the players to coordinate on the equilibrium ({\sf B, B}) in the coordination $2$-player game with the following payoff matrix.
    \begin{center}
        \begin{tabular}{ccc}
              & \sf A & \sf B \\\toprule
           \sf  A & 0.5, 0.5 & 0,0 \\\midrule
           \sf  B & 0,0 & 1,1 \\ \bottomrule
        \end{tabular}
    \end{center}
    Set $R(T) = 2\sqrt{T}$. We will show that, regardless of the mediator's strategy, it is possible for the players to play ({\sf A, A}) for all but finitely many rounds.
    
    Suppose the players play as follows. Let $\Gamma^{(t)}$ be the game at time $t$ induced by the mediator's payoff function $p^{(t)}$. For the first $B^2$ rounds, play an arbitrary Nash equilibrium of $\Gamma^{(t)}$. After that, if ({\sf A, A}) is a Nash equilibrium of $\Gamma^{(t)}$, play it. Otherwise, play a strategy profile $\vx^{(t)}$ for which $\sum_{i=1}^n p_i^{(t)}(\vx^{(t)}) > \frac{1}{2}$ (Such a strategy profile must exist, for otherwise ({\sf A, A}) would be a Nash equilibrium).
    
    The total regret of the players after $T$ rounds is (at most) $0$ for $T \leq B^2$, since we have assumed that they are playing a Nash equilibrium of $\Gamma^{(t)}$, and at most $(P+1)k$ for $T > B^2$, where $k$ is the number of times that the final case triggers, since the reward range of $\Gamma^{(t)}$ is at most $[0, P+1]$. But the final case can only trigger at most $2 B$ times since the mediator only has a total budget of $B$. Therefore, the regret is bounded by $2 (P+1) \sqrt{T}/(P+1) = 2\sqrt{T}$ for any $T$, and for all but $2 B + B^2$ rounds, the players are playing a suboptimal equilibrium. So, desideratum~\ref{item:opt-med} in~\Cref{def:dec-steering} cannot be satisfied.
\end{proof}

Hence, a weaker requirement on the size of the payments is needed. Between these extremes, one may allow the \emph{total} payment to be unbounded, but insist that the \emph{average} payment per round must vanish in the limit.

\section{Steering in Normal-Form Games}
\label{sec:nfgs}
We start with the simpler setting of {\em normal-form games}, that is, extensive-form games in which every player has one information set, and the set of histories correspond precisely to the set of pure profiles. This setting is much simpler than the general extensive-form setting (we consider in the next section), and we can appeal to a special case of a result in the literature\citet{Monderer04:Implementation}.
\begin{proposition}[Costless implementation of pure Nash equilibria, special case of $k$-implementation, \citealp{Monderer04:Implementation}]
    Let $\vec d$ be a pure Nash equilibrium in a normal-form game. Then there exist functions $p_i^* : X_1 \times \dots \times X_n \to [0, 1]$, with $p_i^*(\vec d) = 0$, such that in the game with utilities $v_i := u_i + p_i^*$, the profile $\vec d$ is weakly dominant:
    $v_i(\vec d_i, \vec x_{-i}) \ge v_i(\vec x_i, \vec x_{-i})$ for every profile $\vec x$.
\end{proposition}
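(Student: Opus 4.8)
The plan is to construct, for each player $i$, a payment that exactly compensates $i$ for the regret of playing the target action $d_i$, but disburses this compensation only when $i$ actually plays $d_i$. Concretely, I would first define $p_i^*$ on pure profiles $a = (a_i, a_{-i})$ by
$$p_i^*(a) \defeq \mathbbm{1}[a_i = d_i]\left(\max_{a_i'} u_i(a_i', a_{-i}) - u_i(d_i, a_{-i})\right),$$
and then extend it to mixed profiles by taking the expectation $p_i^*(\vx) \defeq \E_{a \sim \vx}\, p_i^*(a)$, which is multilinear in the $\vx_j$'s and hence, in particular, linear in $\vx_i$ and continuous in $\vx_{-i}$, as required. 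Intuitively, the bracketed quantity is the one-shot external regret player $i$ would suffer by committing to $d_i$ against the realized opponent profile $a_{-i}$; paying exactly this amount makes $d_i$ tie the ex-post best response.

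I would then verify the three required properties in turn. Nonnegativity and the range bound are immediate: the bracketed term is a maximum minus one of the terms it maximizes over, so it is nonnegative, and since all utilities lie in $[0,1]$ it is at most $1$; hence $p_i^* \in [0,1]$. The identity $p_i^*(\vec d) = 0$ is exactly where the Nash hypothesis enters: at $\vec d$ the opponents play $d_{-i}$, and since $\vec d$ is a Nash equilibrium, $d_i$ maximizes $u_i(\cdot, d_{-i})$, so the bracket vanishes. For weak dominance, the key computation is that whenever $i$ plays $d_i$ the augmented utility collapses to the best-response value, $v_i(d_i, a_{-i}) = u_i(d_i, a_{-i}) + \left(\max_{a_i'} u_i(a_i', a_{-i}) - u_i(d_i, a_{-i})\right) = \max_{a_i'} u_i(a_i', a_{-i})$, whereas for any other action the payment is zero and $v_i(a_i, a_{-i}) = u_i(a_i, a_{-i}) \le \max_{a_i'} u_i(a_i', a_{-i})$. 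Hence $v_i(d_i, a_{-i}) \ge v_i(a_i, a_{-i})$ for every pure profile.

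Finally, I would lift this pure-profile weak dominance to the mixed statement required by the proposition. Since $v_i = u_i + p_i^*$ is multilinear, for any mixed $\vx_i$ and any (possibly mixed) $\vx_{-i}$ one conditions on a pure opponent realization $a_{-i}$ and averages: $v_i(\vec d_i, \vx_{-i}) = \E_{a_{-i}}\left[\max_{a_i'} u_i(a_i', a_{-i})\right] \ge \E_{a_{-i}}\, v_i(\vx_i, a_{-i}) = v_i(\vx_i, \vx_{-i})$, where the inequality applies the pure bound termwise together with the fact that $v_i(\vx_i, a_{-i})$ is a convex combination of values each at most $\max_{a_i'} u_i(a_i', a_{-i})$. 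This establishes weak dominance of $\vec d_i$ against every profile $\vx$. The argument is short, and the only genuinely load-bearing step is the appeal to the Nash property to force $p_i^*(\vec d) = 0$ (matching the observation of \citet{Monderer04:Implementation} that only Nash equilibria admit zero-realized-payment implementation); everything else is a mechanical check that the regret-compensation payment is nonnegative, bounded by $1$, and multilinear.
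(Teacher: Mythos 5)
Your proposal is correct, and it reaches the proposition by a genuinely different payment construction than the paper's. The paper's proof (following the $k$-implementation construction of \citet{Monderer04:Implementation}) takes $p_i^*(\vec x) = (\vec d_i^\top \vec x_i)\left(1 - \prod_{j \ne i} \vec d_j^\top \vec x_j\right)$, i.e., a flat unit bonus paid on pure profiles exactly when player $i$ is direct and at least one opponent is not; weak dominance then follows bluntly from the normalization $u_i \in [0,1]$ (a payment of $1$ swamps any utility difference), and the Nash property is invoked only on the slice $\vec x_{-i} = \vec d_{-i}$, where the payment vanishes identically. Your construction instead pays the exact ex-post regret of committing to $d_i$, making $d_i$ tie the realized best response pointwise; in your scheme the Nash property is what forces $p_i^*(\vec d) = 0$, and on pure profiles your payment is dominated by the paper's (it vanishes when $a_{-i} = d_{-i}$ by Nash and is at most $1$ otherwise). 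Both schemes are multilinear expectations of terminal-node payments, so both remain valid in the trajectory-feedback setting of \Cref{sec:steering-bandit}, and both support the downstream steering result: adding the $\alpha \vec d_i^\top \vec x_i$ directness bonus to your $p_i^*$ makes $\vec d_i$ strictly dominant by $\alpha$, and by the pointwise domination just noted the payment and directness-gap bounds of \Cref{th:normalform} go through unchanged. The trade-off is this: the paper's flat payment is simpler, has a clean bilinear closed form, and depends on the utilities only through the $[0,1]$ normalization; yours is tighter --- essentially the pointwise-minimal nonnegative payment, among those that pay only upon directness, that makes $\vec d_i$ weakly dominant, in the spirit of optimal $k$-implementation --- at the cost of requiring the mediator to know $u_i$ (which this setting grants anyway). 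Your lift from pure to mixed profiles by conditioning on $a_{-i}$ and using multilinearity is complete and correct.
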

The proof is constructive. The payment function 
\begin{align}
    p_i^*(\vec x) := (\vec d_i^\top \vec x_i) \qty\Big(1 - \prod_{j \ne i} \vec d_j^\top \vec x_j),
\end{align}
which on pure profiles $\vec x$ returns $1$ if and only if $\vec x_i = \vec d_i$ and $\vec x_j \ne \vec d_j$ for some $j \ne i$ makes equilibrium play weakly dominant. It is {\em almost} enough for steering: the only problem is that $\vec d$ is only {\em weakly} dominant, so no-regret players \emph{may} play other strategies than $\vec d$. This can be fixed by adding a small reward $\alpha \ll 1$ for playing $\vec d_i$. That is, we set
\begin{align}\label{eq:normal-form-steer}
    p_i(\vec x) := \alpha \vec d_i^\top \vec x_i + p_i^*(\vec x) = (\vec d_i^\top \vec x_i)\qty\Big(\alpha + 1 - \prod_{j \ne i} \vec d_j^\top \vec x_j).
\end{align}

On a high level, the structure of the payment function guarantees that the average strategy of any no-regret learner $i \in \range{n}$ should be approaching the direct strategy $\vec{d}_i$ by making $\vec{d}_i$ the strictly dominant strategy of player $i$. At the same time, it is possible to ensure that the average payment will also be vanishing by appropriately selecting parameter $\alpha$. With an appropriate choice of $\alpha$, this is enough to solve the steering problem for normal-form games:

\begin{restatable}[Normal-form steering]{theorem}{thNormalform}\label{th:normalform}
    Let $p_i(\vec x)$ be defined as in \eqref{eq:normal-form-steer}, set $\alpha = \sqrt{\eps}$, where $\eps := 4n R(T) / T$, and let $T$ be large enough that $\alpha \le 1$. Then players will be steered toward equilibrium, with both payments and directness gap bounded by $2\sqrt{\eps}$. 
\end{restatable}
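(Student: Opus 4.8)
The plan is to exploit the fact that the extra reward $\alpha\,\vec d_i^\top\vec x_i$ in \eqref{eq:normal-form-steer} upgrades the merely weakly dominant profile $\vec d$ (guaranteed by the costless-implementation Proposition) into a \emph{strictly} dominant one with a quantitative dominance gap, and then to turn the no-regret hypothesis into a bound on how often each player deviates from its prescribed action. Throughout, write $q_i^{(t)} := \vec d_i^\top \vec x_i^{(t)} \in [0,1]$ for the probability that player $i$ plays its direct action at round $t$, and let $\delta_i := \tfrac1T\sum_{t=1}^T (1 - q_i^{(t)})$ be its time-averaged infidelity. The whole argument reduces to bounding $\delta_i$ and re-expressing both desiderata through the $\delta_i$.

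First I would establish the pointwise dominance inequality
\[
 v_i^{(t)}(\vec d_i, \vec x_{-i}^{(t)}) - v_i^{(t)}(\vec x_i^{(t)}, \vec x_{-i}^{(t)}) \;\ge\; \alpha\,(1 - q_i^{(t)}).
\]
This follows by adding the $\alpha$-reward to the weak-dominance guarantee for $u_i + p_i^*$: the $u_i+p_i^*$ part contributes a nonnegative difference (weak dominance holds for mixed $\vec x_i$ because $u_i+p_i^*$ is linear in $\vec x_i$), while the reward term contributes exactly $\alpha(\vec d_i^\top\vec d_i - \vec d_i^\top\vec x_i^{(t)}) = \alpha(1-q_i^{(t)})$. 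Next I would instantiate the regret bound \eqref{eq:reg-payment} at the comparator $\vec x_i^* = \vec d_i$, giving $\sum_t [v_i^{(t)}(\vec d_i) - v_i^{(t)}(\vec x_i^{(t)})] \le (P+1)R(T)$. Chaining the two yields $\alpha\,\delta_i\,T \le (P+1)R(T)$, and substituting $\alpha = \sqrt\eps$ and $\eps = 4nR(T)/T$ gives $\delta_i \le (P+1)\sqrt\eps/(4n)$; since the maximal payment is $P = 1+\alpha \le 2$, we have $P+1 \le 3$.

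It then remains to translate these per-player bounds into the two desiderata. For the directness gap, in a normal-form game both $\tfrac1T\sum_t\hat{\vx}^{(t)}$ and $\hat{\vec d}$ are probability distributions over $Z$, and $\hat{\vec d}$ is the point mass on the target profile $z_d$, so the $\ell_1$ distance equals $2\big(1 - \tfrac1T\sum_t\prod_i q_i^{(t)}\big)$; the elementary bound $1-\prod_i q_i \le \sum_i(1-q_i)$ controls this by $2\sum_i \delta_i \le (P+1)\sqrt\eps/2 \le 2\sqrt\eps$. For the realized payments I would split $p_i^{(t)}(\vec x^{(t)}) = \alpha q_i^{(t)} + q_i^{(t)}\big(1 - \prod_{j\ne i} q_j^{(t)}\big)$: the first term averages to at most $\alpha = \sqrt\eps$, and the second is at most $1 - \prod_{j\ne i}q_j^{(t)} \le \sum_{j\ne i}(1-q_j^{(t)})$, which averages to $\sum_{j\ne i}\delta_j \le (P+1)\sqrt\eps/4 \le \sqrt\eps$; together these give at most $2\sqrt\eps$.

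The conceptual crux is the choice $\alpha = \Theta(\sqrt\eps)$, which balances two competing effects: a larger $\alpha$ sharpens the dominance gap and drives $\delta_i$ down like $1/\alpha$, but it is also paid out directly in the reward term at rate $\alpha$. Optimizing this tradeoff is exactly what forces both quantities to the common rate $\sqrt\eps = \Theta((nR(T)/T)^{1/2})$, i.e.\ $T^{-1/4}$ when $R(T)=\Theta(\sqrt T)$. The only technical glue is the product-to-sum inequality, which lets the \emph{joint} directness gap and the compensation payments both be dominated by the \emph{individual} deviation frequencies $\delta_i$; beyond tracking the small constants coming from $P+1\le 3$, I expect no serious obstacle.
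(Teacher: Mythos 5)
Your proof is correct and follows essentially the same route as the paper's own: make $\vec d_i$ strictly dominant with gap $\alpha$, instantiate the regret bound at the comparator $\vec d_i$ to bound the time-averaged infidelities $\delta_i$, then control both the directness gap and the compensation payments via the product-to-sum inequality $1-\prod_i q_i \le \sum_i (1-q_i)$, balancing with $\alpha = \sqrt{\eps}$. If anything, you are more careful than the paper in tracking the $(P+1)$ normalization from \eqref{eq:reg-payment}, which the paper's proof silently absorbs into the factor $4$ in the definition of $\eps$.
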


\begin{proof}
    By construction of the payments, the utility for player $i$ is at least $\alpha$ higher for playing $\vec d_i$ than for any other action, regardless of the actions of the other players. Let $\eps := nR(T)/T$ and $\delta_i^{(t)} := 1 - \vec d_i^\top \vec x_i^{(t)}$. Then the above property ensured by the payments implies that
    $
        R(T)/T = \eps/n\ge \alpha \E_{t \in \range{T}} \delta_i^{(t)}.
    $
    Let $z^*$ be the terminal node induced by profile $\vec d$. Then the directness gap is
    \begin{align}
        2 \E_{t} \qty [1 - \hat{\vx}^{(t)}[z^*]] = 2 - 2 \E_{t} \prod_i (1 - \delta_i^{(t)}) \le 2 \E_t \sum_i \delta_i^{(t)} \le 2\eps / \alpha,
    \end{align}
    and the payments are bounded by 
    \begin{align}
        \E_t p_i(\vec x) \le \alpha + \E_t (1 - \prod_{j \ne i} (1 - \delta_i^{(t)})) \le \alpha + \eps / \alpha.
    \end{align}
    So, taking $\alpha = \sqrt{\eps}$ completes the proof.
\end{proof}

We note that no effort was made throughout this paper to optimize the game-dependent or constant factors, so long as they remained polynomial in $|Z|$---they can very likely be improved.

\section{Steering in Extensive-Form Games}
\label{sec:steering-EFGs}
This section considers steering in extensive-form games. We will first consider a model in which steering payments can condition on full player strategies (\Cref{subsec:full.feedback}). Next, we consider a model in which only realized trajectories are considered (\Cref{sec:steering-bandit}).

Tbere are two main reassons why the extensive-form version of the steering problem is  significantly more challenging than the normal-form version.

First, in extensive form, the strategy spaces of the players are no longer simplices. Therefore, if we wanted to write a payment function $p_i$ with the property that $p_i(\vec x) = \alpha \ind{\vec x = \vec d} + \ind{\vec x_i = \vec d_i; \exists j\, \vec x_j \ne \vec d_j}$ for pure $\vec x$ (which is what was needed by \Cref{th:normalform}), such a function would not be linear (or even convex) in player $i$'s strategy $\vec x_i \in X_i$ (which is a sequence-form strategy, not a distribution over pure strategies). As such, even the meaning of extensive-form regret minimization becomes suspect in this setting.

Second, in extensive form, a desirable property would be that the mediator give payments conditioned only on what actually happens in gameplay, {\em not} on the players' full strategies---in particular, if a particular information set is not reached during play, the mediator should not know what action the player {\em would have} selected at that information set. We will call this the {\em trajectory} setting, and distinguish it from the {\em full-feedback} setting, where the mediator observes the players' full strategies.\footnote{To be clear, the settings are differentiated by what the {\em mediator} observes, not what the {\em players} observe. That is, it is valid to consider the full-feedback steering setting with players running bandit-feedback regret minimizers, or the trajectory-feedback steering setting with players running full-feedback regret minimizing algorithms.} This distinction is meaningless in the normal-form setting: since terminal nodes in normal form correspond to (pure) profiles, observing gameplay is equivalent to observing strategies. (We will discuss this point in more detail when we introduce the trajectory-feedback setting in \Cref{sec:steering-bandit}.)

\subsection{Steering with Full Feedback}\label{subsec:full.feedback}

In this section, we introduce a steering algorithm for extensive-form games under full feedback, summarized below.

\begin{definition}[\guideoffline]
    At every round, set the payment function $p_i(\vx_i, \vx_{-i})$ as
\begin{equation}
\label{eq:payment}
    \underbrace{\alpha \vec d_i^\top \vx_i \vphantom{\min_{\vx_i'}}}_\text{directness bonus}  + \underbrace{\qty[u_i(\vx_i, \vec d_{-i}) - u_i(\vx_i, \vx_{-i})]\vphantom{\min_{\vx_i'}}}_\text{sandboxing payments} - \underbrace{\min_{\vx_i' \in X_i} \qty[u_i(\vx_i', \vec d_{-i}) - u_i(\vx_i', \vx_{-i})],}_\text{payment to ensure nonnegativity}
\end{equation}
where $\alpha \le 1/|Z|$ is a hyperparameter that we will select appropriately. 
\end{definition}

By construction, $p_i$ satisfies the conditions of the steering problem (\Cref{def:dec-steering}): it is linear in $\vx_i$, continuous in $\vx_{-i}$, nonnegative, and bounded by an absolute constant (namely, $3$).
The payment function defined above has three terms:
\begin{enumerate}[leftmargin=5.5mm]
    \item The first term is a {\em reward for directness}: a player gets a reward proportional to $\alpha$ if it plays $\vec d_i$.
    \item The second term {\em compensates the player} for the indirectness of other players. That is, the second term ensures that players' rewards are {\em as if} the other players had acted directly.
    \item The final term simply ensures that the overall expression is nonnegative.
\end{enumerate}
We claim that this protocol solves the basic version of the steering problem, as formalized below. 

\begin{restatable}{theorem}{offlinesteer}
    \label{th:offline-then-steer}
    Set $\alpha = \sqrt{\eps},$  where $\eps := 4n R(T) / T$, and let $T$ be large enough that $\alpha \le 1/|Z|$. Then, \guideoffline results in average realized payments and directness gap at most $3|Z| \sqrt{\eps}$.
\end{restatable}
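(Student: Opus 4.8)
The plan is to exploit the \emph{sandboxing} structure of \eqref{eq:payment} to reduce each player's effective optimization to a one-shot best response against the direct profile $\vec d_{-i}$. First I would substitute the payment into the augmented utility $v_i^{(t)}(\vx_i) = u_i(\vx_i,\vx^{(t)}_{-i}) + p_i^{(t)}(\vx_i,\vx^{(t)}_{-i})$ and observe that the term $u_i(\vx_i,\vx^{(t)}_{-i})$ cancels against the corresponding term in the sandboxing payment, while the nonnegativity correction $\min_{\vx_i'}[\cdots]$ does not depend on $\vx_i$. The effective utility therefore collapses to
\begin{equation*}
    v_i^{(t)}(\vx_i) = \alpha\, \vec d_i^\top \vx_i + u_i(\vx_i, \vec d_{-i}) + c^{(t)},
\end{equation*}
where $c^{(t)}$ is constant in $\vx_i$. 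This is exactly the payoff of a player facing the fixed opponents $\vec d_{-i}$, plus a linear directness bonus, and it is the conceptual heart of the argument.

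Next I would combine the no-regret hypothesis with the fact that $\vec d$ is a Nash equilibrium. Plugging the comparator $\vx_i^* = \vec d_i$ into \eqref{eq:reg-payment} and using $P=3$, the cancellation gives $\sum_t [v_i^{(t)}(\vec d_i) - v_i^{(t)}(\vx_i^{(t)})] = \alpha\sum_t \vec d_i^\top(\vec d_i - \vx_i^{(t)}) + \sum_t [u_i(\vec d_i,\vec d_{-i}) - u_i(\vx_i^{(t)},\vec d_{-i})]$. Since $\vec d_i$ best-responds to $\vec d_{-i}$, the second sum is nonnegative, so the regret bound yields $\frac1T\sum_t \vec d_i^\top(\vec d_i - \vx_i^{(t)}) \le \tfrac{4R(T)}{\alpha T} = \tfrac{\eps}{n\alpha}$, and summing over $i\in\range{n}$ gives $\sum_i \E_t \vec d_i^\top(\vec d_i - \vx_i^{(t)}) \le \eps/\alpha$. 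I will call $\vec d_i^\top(\vec d_i - \vx_i)$ the \emph{directness deficit} of player $i$; this is the only place the regret assumption enters.

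It then remains to translate the controlled deficits into the two desiderata. For the directness gap I would use the multilinear telescoping inequality $\abs{\prod_i a_i - \prod_i b_i} \le \sum_i \abs{a_i - b_i}$ (valid for $a_i,b_i\in[0,1]$) with $a_i = \vx_i^{(t)}[z]$, $b_i = \vec d_i[z]$, together with the triangle inequality to pull $\E_t$ inside the absolute value, reducing the gap to $\sum_i \E_t \sum_z \abs{\vx_i^{(t)}[z] - \vec d_i[z]}$. For the payments I would cancel $u_i(\vx_i,\vx_{-i})$ as before and bound the surviving sandboxing/nonnegativity terms by $2\sum_z \abs{\hat{\vx}_{-i}^{(t)}[z] - \hat{\vec d}_{-i}[z]}$, which the same telescoping reduces to $2\sum_{j\ne i}\sum_z\abs{\vx_j^{(t)}[z]-\vec d_j[z]}$, while the directness bonus contributes at most $\alpha\,\vec d_i^\top\vec d_i \le \alpha\abs{Z}$ (recall $\Sigma_i$ indexes terminal sequences, so $\vec d_i^\top\vec d_i\le\abs{Z}$). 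The main obstacle is the remaining per-player estimate
\begin{equation*}
    \sum_z \abs{\vx_i[z] - \vec d_i[z]} \le \abs{Z}\cdot \vec d_i^\top(\vec d_i - \vx_i),
\end{equation*}
which must convert an $\ell_1$ discrepancy over \emph{all} terminals---including off-path mass that the directness bonus never rewards---into the on-path deficit. I expect to prove it by first showing $\max_z \abs{\vx_i[z]-\vec d_i[z]} \le \vec d_i^\top(\vec d_i - \vx_i)$ and then summing over the at most $\abs{Z}$ terminals: on path the term equals $1-\vx_i[\sigma_i(z)]$, a single summand of the deficit; off path, the realization probability of a terminal sequence that first leaves the direct path at some infoset $I$ is, by monotonicity of sequence-form probabilities along root-to-leaf paths together with the sum-to-parent constraint, at most $1 - \vx_i[\sigma^\star]$ for a deeper on-path terminal sequence $\sigma^\star$, hence again bounded by the deficit.

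Finally I would assemble the pieces: the directness gap is at most $\abs{Z}\sum_i \E_t \vec d_i^\top(\vec d_i-\vx_i^{(t)}) \le \abs{Z}\,\eps/\alpha$, and the average realized payment to any player is at most $\alpha\abs{Z} + 2\abs{Z}\sum_{j\ne i}\E_t \vec d_j^\top(\vec d_j - \vx_j^{(t)}) \le \alpha\abs{Z} + 2\abs{Z}\,\eps/\alpha$. Setting $\alpha = \sqrt{\eps}$ balances the two payment contributions and makes both quantities at most $3\abs{Z}\sqrt{\eps}$, as claimed, while the hypothesis $\alpha\le 1/\abs{Z}$ is exactly what keeps the per-round payment within its absolute constant bound $P=3$. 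The step needing the most care is the off-path estimate above, since it is the only place where the non-simplex geometry of $X_i$ genuinely enters and where the factor $\abs{Z}$ is born.
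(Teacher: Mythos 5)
Your proposal is correct and takes essentially the same route as the paper's proof: the cancellation reducing each player's effective utility to $\alpha\,\vec d_i^\top \vx_i + u_i(\vx_i, \vec d_{-i})$ plus a term constant in $\vx_i$, the comparator $\vec d_i$ combined with the Nash property of $\vec d$ to bound the aggregate directness deficit by $\eps/\alpha$, and the conversion of per-player deficits into $\ell_1$ bounds on outcome distributions and payments (your telescoping product inequality plus the per-player estimate is precisely the content of \Cref{lem:obedience-union}), followed by the same $\alpha = \sqrt{\eps}$ balancing. Your sequence-form monotonicity argument for off-path terminals in fact spells out the step that the paper's lemma passes over with ``which in turn implies,'' so if anything you are more explicit than the paper itself.
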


Before proving \Cref{th:offline-then-steer}, we start by stating a useful lemma, which is proven in~\Cref{sec:proofs-steering}.

\begin{restatable}{lemma}{aux}
    \label{lem:obedience-union}
    Let $\bar{\vx}_i \defeq \E_{t \in \range{T}} \vx_i^{(t)}$ for any player $i \in \range{n}$ and $\delta := \sum_{i=1}^n \vec d_i^\top (\vec d_i - \bar{\vx}_i)$. Then, $
    \E_{t \in \range{T}} \norm{\hat{\vx}_N^{(t)} - \hat{\vec d}_N}_1  \le |Z| \delta
$ for every $N \subseteq \range{n}$. Moreover, if the payments are defined according to \eqref{eq:payment}, the average payment to every player can be bounded by $\E_{t \in \range{T}} p_i(\vec x^{(t)}) \le |Z|(2\delta + \alpha)$.
\end{restatable}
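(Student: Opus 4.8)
The plan is to reduce both bounds to a single deterministic, \textbf{per-profile} inequality and then average over rounds. Since $\vec d_i^\top(\vec d_i - \vx_i^{(t)}) = \sum_{\sigma :\, \vec d_i[\sigma]=1}(1-\vx_i^{(t)}[\sigma]) \ge 0$ is linear in $\vx_i^{(t)}$, the quantity $\delta$ equals $\E_{t}\sum_{i}\delta_i(\vx^{(t)})$, where I write $\delta_i(\vx) := \vec d_i^\top(\vec d_i - \vx_i)$ for the indirectness of player $i$ at a fixed profile $\vx$. Hence it suffices to show, for every fixed profile $\vx$, that $\norm{\hat\vx_N - \hat{\vec d}_N}_1 \le |Z|\sum_{i \in N}\delta_i(\vx)$ and that $p_i(\vx)$ obeys the stated bound; taking expectations over $t$ and using $\sum_{i\in N}\delta_i \le \sum_i \delta_i$ (all terms nonnegative) then yields the lemma.

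The heart of the argument is a per-terminal-node claim: for every player $i$ and every terminal node $z$,
$$\abs{\vx_i[z]-\vec d_i[z]} \le \delta_i(\vx).$$
If $z$ is on $i$'s direct path ($\vec d_i[z]=1$) this is immediate, since the left side is $1-\vx_i[\sigma_i(z)]$ and $\sigma_i(z)$ is one of the direct terminal sequences summed (with nonnegative terms) in $\delta_i(\vx)$. The off-path case ($\vec d_i[z]=0$) is the main obstacle, and is where the sequence-form structure enters. I locate the first deviation: let $\sigma^\star$ be the longest prefix of $\sigma_i(z)$ still consistent with $\vec d_i$, let $I$ be $i$'s information set reached immediately after $\sigma^\star$ on the path to $z$, and let $a \ne d_i(I)$ be the action taken toward $z$, where $d_i(I)$ denotes the action $\vec d_i$ prescribes at $I$. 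Reach-probability monotonicity gives $\vx_i[z]\le \vx_i[\sigma^\star a]$, and the flow (mass-conservation) constraint of the sequence-form polytope at $I$, namely $\sum_{a'\in A_I}\vx_i[\sigma^\star a'] = \vx_i[\sigma^\star]\le 1$, gives $\vx_i[\sigma^\star a]\le 1-\vx_i[\sigma^\star d_i(I)]$. Since $\sigma^\star d_i(I)$ is consistent with $\vec d_i$, extending it by direct play to some terminal sequence $\tau \in \Sigma_i$ gives $\vx_i[\tau]\le \vx_i[\sigma^\star d_i(I)]$, whence $\vx_i[z]\le 1-\vx_i[\tau]\le \delta_i(\vx)$, as $\tau$ is a direct terminal sequence.

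Given the per-node claim, the directness bound follows from a hybrid/telescoping inequality for products: for $a_j,b_j \in [0,1]$ one has $\abs{\prod_{j\in N}a_j - \prod_{j\in N}b_j}\le \sum_{j\in N}\abs{a_j-b_j}$. Applying this coordinatewise with $a_j=\vx_j[z]$, $b_j=\vec d_j[z]$ and summing over the at most $|Z|$ terminal nodes yields $\norm{\hat\vx_N-\hat{\vec d}_N}_1 \le \sum_{z}\sum_{i\in N}\abs{\vx_i[z]-\vec d_i[z]} \le |Z|\sum_{i\in N}\delta_i(\vx)$, which is the required per-profile inequality.

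For the payment bound I expand the three terms of \eqref{eq:payment}. The directness bonus contributes at most $\alpha\,\vec d_i^\top\vx_i \le \alpha\,\vec d_i^\top\vec d_i \le \alpha|Z|$. For the sandboxing terms, writing the utilities as expectations over terminal nodes and using $c(z),\vx_i[z],u_i(z)\in[0,1]$ gives $\abs{u_i(\vx_i,\vec d_{-i})-u_i(\vx_i,\vx_{-i})} \le \norm{\hat\vx_{-i}-\hat{\vec d}_{-i}}_1$ \emph{uniformly in} $\vx_i$, and the per-profile directness bound with $N=\range n \setminus\{i\}$ bounds the right side by $|Z|\sum_{j\ne i}\delta_j(\vx)$. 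Because this estimate does not depend on the strategy, it applies both to the realized second term and to the minimizer in the third term, so their difference is at most $2|Z|\sum_{j\ne i}\delta_j(\vx)$. Summing the three contributions and averaging over $t$ gives $\E_t p_i(\vx^{(t)}) \le \alpha|Z| + 2|Z|\sum_{j\ne i}\delta_j \le |Z|(2\delta+\alpha)$, as claimed. The only genuinely delicate step is the off-path per-node bound; once the sequence-form flow constraints and reach monotonicity are in hand, everything else is routine.
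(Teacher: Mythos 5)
Your proof is correct and takes essentially the same route as the paper's: both arguments reduce everything to the coordinatewise bounds $\vert \vx_i[z] - \vec d_i[z]\vert \le \delta_i$ driven by the per-player indirectness $\delta_i = \vec d_i^\top(\vec d_i - \bar\vx_i)$, combine them over players via a telescoping-product inequality (the paper's $\prod_{j\in N}(1-\delta_j) \ge 1 - \sum_{j\in N}\delta_j$ plays the role of your $\vert\prod_j a_j - \prod_j b_j\vert \le \sum_j \vert a_j - b_j\vert$), and use the identical uniform-in-$\vx_i$ estimate on the two sandboxing terms to get the factor $2\delta$ plus $\alpha|Z|$ for the bonus. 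The only differences are cosmetic: you work per-profile and average last where the paper averages first (both hinge on the same linearity/extreme-point fact), and you explicitly derive the off-path bound $\vx_i[z] \le \delta_i$ from sequence-form flow constraints, a step the paper asserts without proof.
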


\begin{proof}[Proof of~\Cref{th:offline-then-steer}]
The utility of each player $i \in \range{n}$ reads
\begin{equation}
     v_i(\vx_i, \vx_{-i}) \defeq \alpha \Vec{d}_i^\top \Vec{x}_i + u_i(\vx_i, \vec{d}_{-i}) - \min_{\vx_i' \in X_i} [ u_i(\vx_i', \vec{d}_{-i}) - u_i(\vx_i', \vec{x}_{-i})].
\end{equation}
Given that $\vec d$ is an equilibrium, it follows that $\vec{d}_i$ is a strict best response for any player $i \in \range{n}$. That is, the regret of each player $i \in \range{n}$ after $T$ iterations can be lower bounded as
\begin{equation}
    \sum_{t=1}^T \left( \alpha \Vec{d}_i^\top (\Vec{d}_i - \vx_i^{(t)}) + u_i(\vec{d}) - u_i(\vec{x}_i^{(t)}, \Vec{d}_{-i}) \right) \geq \alpha T \vec d_i^\top (\vec d_i - \bar{\vx}_i),
\end{equation}
where we used that $u_i(\vec{d}) - u_i(\vec{x}_i^{(t)}, \Vec{d}_{-i}) \geq 0$ since $\vec d$ is an equilibrium. Thus, \begin{align}\sum_{i=1}^n \vec d_i^\top (\vec d_i - \bar{\vx}_i) \leq \frac{n R(T)}{\alpha T} = \frac{\eps}{\alpha}.\end{align} We can now apply \Cref{lem:obedience-union} to obtain that $\E_{t \in \range{T}} \norm{\hat{\vx}^{(t)} - \hat{\vec d}}_1  \le |Z| \delta$, where $\delta := \eps / \alpha$. Thus, the directness gap is bounded by
\begin{align}
    \norm{\E_t \hat{\vec x}^{(t)} - \hat{\vec d}}_1 = \E_t \norm{\hat{\vx}^{(t)} - \hat{\vec d}}_1 \le \frac{n|Z|R(T)}{\alpha T},
\end{align}
where the first equality follows because $\hat{\vec{d}}$ is an extreme point of $X$ (as in~\eqref{align:extreme}). Furthermore, by \Cref{lem:obedience-union}, the payment to each player $i \in \range{n}$ can be bounded by 
\begin{align} 2|Z|(2\delta + \alpha) = 2|Z|\frac{\eps}{\alpha} + |Z| \alpha.\end{align}
As a result, setting $\alpha = \sqrt{\eps}$ for $T$ sufficiently large so that $\alpha \le 1/|Z|$, we guarantee that the payment to each player is bounded by $3n|Z| \sqrt{\eps}$ and the directness gap is bounded by $|Z| \sqrt{\eps}$,
as desired.
\end{proof}

\subsection{Steering in the Trajectory-Feedback Setting}
\label{sec:steering-bandit}

In \guideoffline, payments depend on full strategies $\vec x$, not the realized game trajectories. In particular, the mediator in \Cref{th:offline-then-steer} observes what the players {\em would have played} even at infosets that other players avoid. To allow for an algorithm that works without knowledge of full strategies, $p_i^{(t)}$ must be structured so that it could be induced by a payment function that only gives payments for terminal nodes reached during play. To this end, we now formalize {\em trajectory-feedback steering}.

\begin{definition}[Trajectory-feedback steering problem]\label{def:banditsteerproblem}
    Let $\Gamma$ be an extensive-form game in which rewards are bounded in $[0,1]$ for all players. Let $\vec d$ be an arbitrary pure-strategy Nash equilibrium of $\Gamma$. The mediator knows $\Gamma$ and a regret bound $R(T) = o(T)$. At each $t \in \range{T}$, the mediator selects a payment function $q_i^{(t)} : Z \to [0, P]$. The players select strategies $\vec x_i^{(t)}$. A terminal node $z^{(t)} \sim \vec x^{(t)}$ is sampled, and all agents observe the terminal node that was reached, $z^{(t)}$. The players get payments $q_i^{(t)}(z^{(t)})$, so that their expected payment is $p_i^{(t)}(\vec x) := \E_{z \sim \vec x} q_i^{(t)}(z)$. The desiderata are as in \Cref{def:dec-steering}.
\end{definition}

The trajectory-feedback steering problem is more difficult than the full-feedback steering problem in two ways. First, as discussed above, the mediator does not observe the strategies $\vec x$, only a terminal node $z^{(t)} \sim \vec x$.
Second, the form of the payment function $q_i^{(t)} : Z \to [0, P]$ is restricted: this is already sufficient to rule out \guideoffline. Indeed, $p_i$ as defined in \eqref{eq:payment} cannot be written in the form $\E_{z \sim  \vec x} q_i(z)$: $p_i(\vx_i, \vx_{-i})$ is nonlinear in $\vx_{-i}$ due to the nonnegativity-ensuring payments, whereas every function of the form  $\E_{z \sim \vec x} q_i(z)$ will be linear in each player's strategy. 

We remark that, despite the above algorithm containing a sampling step, the payment function is defined {\em deterministically}: the payment is defined as the {\em expected value} $p_i^{(t)}(\vec x) := \E_{z \sim \vec x} q_i^{(t)}(z)$. Thus, the theorem statements in this section will also be deterministic.

In the normal-form setting, the payments $p_i$ defined by \eqref{eq:normal-form-steer} already satisfy the condition of trajectory-feedback steering. In particular, if $z$ is the terminal node, we have
$$
    p_i(\vec x) = \E_{z \sim \vec x} \qty[ \alpha \ind{z = z^*} + \ind{\vec x_i = \vec d_i; \exists j\, \vec x_j \ne \vec d_j} ].
$$
Therefore, in the normal-form setting, \Cref{th:normalform} applies to both full-feedback steering and trajectory-feedback steering, and we have no need to distinguish between the two. However, in extensive form, as discussed above, the two settings are quite different.

\subsubsection{Lower bound}

Unlike in the full-feedback or normal-form settings, in the trajectory-feedback setting, steering is impossible in the general case in the sense that per-iteration payments bounded by any constant do not suffice. 
\begin{restatable}{theorem}{banditlowerbound}\label{th:bandit-lower-bound}
    For every $P > 0$, there exists an extensive-form game $\Gamma$ with $O(P)$ players, $O(P^2)$ nodes, and rewards bounded in $[0, 1]$ such that, with payments $q_i^{(t)} : Z \to [0, P]$, it is impossible to steer players to the welfare-maximizing Nash equilibrium, even when $R(T) = 0$.
\end{restatable}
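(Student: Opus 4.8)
\emph{Plan overview and the zero-regret adversary.} The plan is to build, for each target budget $P$, a game $\Gamma$ in which a non-target profile $\vec b$ survives as a Nash equilibrium of the payment-augmented stage game no matter how the mediator sets its (budget-$P$, trajectory-form) payments, so that an adversarial yet \emph{exactly} zero-regret sequence of players can refuse to be steered. The device that produces regret exactly $0$ is the following observation: if at every round the players play some Nash equilibrium $\vx^{(t)}$ of the stage game $\Gamma^{(t)}$ induced by $p^{(t)}$, then for each player $i$ and each fixed $\vx_i^\star$ we have $v_i^{(t)}(\vx^{(t)}) \ge v_i^{(t)}(\vx_i^\star, \vx_{-i}^{(t)})$ for every $t$, and summing over $t$ gives $\reg_{X_i}^T = 0$. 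Thus it suffices to show that, against every mediator, the adversary can select a stage-game equilibrium each round whose terminal distribution stays bounded away from $\hat{\vec d}$, unless the mediator's realized on-path payments fail to vanish.

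\emph{The dichotomy and the construction.} I would reduce everything to a per-round dichotomy: either (A) $\Gamma^{(t)}$ admits a Nash equilibrium whose terminal distribution differs from $\hat{\vec d}$ by an absolute constant, or (B) eliminating all such ``bad'' equilibria (so that $\hat{\vec d}$ is the only stage outcome) forces a constant-size payment at a terminal lying on the $\vec d$-path, hence a non-vanishing realized payment. Routing the adversary through these cases round by round and averaging then defeats either \ref{item:opt-med} or \ref{item:payments}. The game generalizes the stag hunt of \Cref{fig:payments}: a pivotal player has a cooperative action $C$ that is a best response (and part of the welfare-maximizing equilibrium $\vec d$) when everyone cooperates, but whose defection $D$ is a best response in the ``everyone defects'' profile $\vec b$. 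As in the stag hunt, the pivotal player's information set merges the equilibrium branch with an off-equilibrium branch, so that the \emph{only} terminals at which a trajectory payment $p_i^{(t)}(\vx)=\E_{z\sim\vx}q_i^{(t)}(z)$ can reward $C$ over $D$ against defecting opponents also lie on the $\vec d$-path; consequently the flip of the pivotal player needed to reach $\vec d$ must spend payment on the $\vec d$-path, which is exactly case (B).

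\emph{Why $\Theta(P)$ players are needed.} The plain stag hunt does \emph{not} yield impossibility, because the mediator can instead flip the \emph{other} player through an off-path terminal and reach $\vec d$ with vanishing payments; closing every such escape is what forces the game to grow with $P$. Since payments are capped at $P$ per terminal, the mediator can profitably flip any player whose ``$C$-versus-$D$'' distinguishing terminals are reached with total probability $\rho$ as long as $P\rho$ exceeds that player's constant temptation $\Delta$, i.e. once $\rho \gtrsim \Delta/P$. To deny every outcome-changing off-path flip, I would bury all off-path distinguishing terminals behind a chance move with $\Theta(P)$ branches, so that their total reach probability is $O(1/P)$ and the available leverage $P\cdot O(1/P)=O(1)$ stays below $\Delta$. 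This uses $\Theta(P)$ auxiliary players (one per branch) and $\Theta(P^2)$ nodes, leaves the on-path flip of the pivotal player as the only feasible way to destabilize $\vec b$ (case (B)), and keeps $\vec d$ the unique welfare-maximizing Nash equilibrium of the unpaid game with payoffs in $[0,1]$.

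\emph{Main obstacle.} The hard part is the construction and its bookkeeping. I must design the chance and information-set structure so that (i) \emph{every} outcome-changing deviation from $\vec b$, for \emph{every} player, is either routed onto a $\vec d$-path terminal or onto terminals of total probability $O(1/P)$, closing all escape routes simultaneously rather than just the pivotal one; (ii) $\vec b$ is genuinely a Nash equilibrium of $\Gamma^{(t)}$ for every budget-$P$ payment, so the zero-regret adversary always has a bad equilibrium to occupy; and (iii) $\vec d$ stays the welfare-maximizing equilibrium with $\hat{\vec b}$ and $\hat{\vec d}$ separated by a constant. The quantitative leverage-versus-temptation estimate---showing the separating payment the game demands exceeds the per-round cap $P$ at every off-path terminal while the only sub-threshold terminal sits on the $\vec d$-path---is where the linearity of trajectory payments must be used carefully, since it is precisely this linearity that both enables the payment ``wash-out'' onto shared terminals and caps the leverage any off-path payment can buy.
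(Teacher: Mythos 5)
Your blueprint is, in every structural respect, the paper's own proof: the zero-regret adversary that occupies a Nash equilibrium of the payment-augmented stage game $\Gamma^{(t)}$ each round, the per-round dichotomy (either a bad stage equilibrium survives, or destroying it forces payments that leak onto the $\vec d$-path and hence are realized), and the dilution of every off-path ``distinguishing'' terminal behind a $\Theta(P)$-branch chance move, with $n = \Theta(P)$ players and $O(P^2)$ nodes. What you explicitly leave unresolved---the concrete construction and its bookkeeping---is supplied in the paper by an $n$-player stag hunt with $n = \lceil 4P \rceil$ (\Cref{fig:proof combined}): chance picks $j \in \range{n} \cup \{\bot\}$ uniformly; if $j = i$, player $i$ alone moves and gets $1/2$ for {\sf Hare} and $0$ for {\sf Stag}; if $j = \bot$, chance picks a starting player $k$ uniformly and the players move in round-robin order, any {\sf Hare} ending the game at $0$ for everyone while all-{\sf Stag} pays everyone $1$. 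Crucially, each player has a \emph{single} information set merging all of its nodes---exactly your ``infoset merging'' device---so that against all-{\sf Hare} opponents the only terminals at which a trajectory payment can reward {\sf Stag} over {\sf Hare} for player $i$ are the on-path terminal ($j = i$, {\sf Stag}) and the boxed node ($j = \bot$, $k = i$, $i$ plays {\sf Stag}, next player plays {\sf Hare}), whose reach probability is only $1/(n(n+1))$.

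Two corrections to your sketch. First, requirement (ii) in your obstacle list---that $\vec b$ remain a Nash equilibrium of $\Gamma^{(t)}$ under \emph{every} budget-$P$ payment---is unachievable (the mediator can always add a small bonus along the $\vec d$-path, as \guideoffline does, destroying $\vec b$ as a stage equilibrium) and is also unnecessary: your earlier dichotomy is the correct invariant, and the paper's case analysis implements it verbatim. If some player plays {\sf Stag} with probability below $1/2$, the welfare (hence directness) gap is bounded below by a constant; otherwise all-{\sf Hare} is not a stage equilibrium, so some player's {\sf Stag}-versus-{\sf Hare} payment advantage against all-{\sf Hare} must exceed the temptation $(1/2)/(n+1)$, of which at most $P/(n(n+1)) \le (1/4)/(n+1)$ can come from the boxed node; the remaining $(1/4)/(n+1)$ must sit on the on-path terminal, which is realized with probability at least $1/2$. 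Second, your leverage accounting pits a \emph{constant} temptation $\Delta$ against off-path reach $O(1/P)$; with rewards normalized to $[0,1]$ and the bad equilibrium's welfare a constant fraction of optimal, the workable normalization is temptation $\Theta(1/n)$ against reach $\Theta(1/n^2)$ (dilution by both $j = \bot$ and $k = i$), which is what the paper uses. Relatedly, no auxiliary players are introduced: the same $n$ players are reused across all chance branches, and it is precisely this reuse, through the merged infosets, that ties each player's on-path action to its diluted off-path one and closes every escape route simultaneously. With these repairs your plan becomes the paper's proof; the missing construction is its only substantive gap, and your stated desiderata (i)--(iii) are satisfied by the paper's game exactly as you prescribed.
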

 For intuition, consider the extensive-form game in \Cref{fig:proof combined}, which can be seen as a three-player version of Stag Hunt. Players who play {\sf Hare (H)} get a value of $1/2$ (up to constants); in addition, if all three players play {\sf Stag (S)}, they all get expected value $1$. The welfare-maximizing equilibrium is ``everyone plays {\sf Stag}'', but ``everyone plays {\sf Hare}'' is also an equilibrium. In addition, if all players are playing {\sf Hare}, the only way for the mediator to convince a player to play {\sf Stag} without accidentally also paying players in the {\sf Stag} equilibrium is to pay players at one of the three boxed nodes. But those three nodes are only reached with probability $1/n$ as often as the three nodes on the left, so the mediator would have to give a bonus of more than $n/2$. The full proof essentially works by deriving an algorithm that the players could use to exploit this dilemma to achieve either large payments or bad convergence rate, generalizing the example to $n > 3$, and taking $n = \Theta(P)$. We next formalize this intuition.

 \begin{proof}
For any $n > 0$, consider the following $n$-player extensive-form game $\Gamma$, which has $O(n^2)$ nodes. Every player has only a single information set with two actions, and we will (for good reason, as we will see later) refer to the actions as {\sf Stag} and {\sf Hare}. Chance first picks some $j \in \range{n} \cup \{ \bot \}$ uniformly at random. 

If $j \ne \bot$, then player $j$ plays an action (which is either {\sf Stag} or {\sf Hare}). If $i$ plays {\sf Hare}, it gets utility $1/2$; otherwise, it gets utility $0$. All other players get utility $0$.

If $k = \bot$, chance samples another player $k$ uniformly at random from $\range{n}$. Then, in the order $k, k+1, \dots, n, 1, 2, \dots, k-1$, the players play their actions. If any player at any point plays {\sf Hare}, then the game ends and all players get $0$. If all players play {\sf Stag}, then all players get $1$.

The normal form of this game is an $n$-player generalization of the Stag Hunt game: if all players play {\sf Stag} then all players have (expected) payoff $1/(n+1)$; if any player plays {\sf Hare} then every player has expected payoff $(1/2)/(n+1)$ for playing {\sf Hare} and $0$ for playing {\sf Stag}. In particular, the welfare-optimal profile, ``everyone plays {\sf Stag}'', is a Nash equilibrium, and hence is also the welfare-optimal EFCE,  with social welfare $n/(n+1)$. ``Everyone plays {\sf Hare}'' is also an equilibrium, with social welfare $(1/2)n/(n+1)$. The game tree when $n=3$ is depicted in \Cref{fig:proof combined}. 

Intuitively, the rest of the proof works as follows. Suppose that all players are currently playing {\sf Hare}. The mediator needs to incentivize players to play {\sf Stag}, but it has a dilemma. It cannot give a large payment to $i$ for playing {\sf Stag} when $j = i$---then the average payment for each player would diverge if the players were to move to the {\sf Stag} equilibrium. The only other location that the mediator could possibly give a payment to $i$ is when $j = \bot$, $k = i$, player $i$ plays {\sf Stag}, and the next player plays {\sf Hare}. But this node is only reached with probability $O(1/n^2)$---therefore, to outweigh $i$'s current incentive of $\Theta(1/n)$ of playing {\sf Hare}, the payment at this node would have to be $\Theta(n)$, at which point taking $n = \Theta(P)$ would complete the proof.

We now formalize this intuition. Take $n = \ceil{4P}.$ Consider players who play as follows. At each timestep $t$, the players consider the extensive-form game $\Gamma^{(t)}$ induced by adding the payment functions $q^{(t)}_i$ that the mediator would play, and ignoring mediator recommendations. That is, $\Gamma^{(t)}$ is identical to $\Gamma$ except that $q^{(t)}_i$ has been added to player $i$'s utility function. If ``everyone plays {\sf Hare}'' is a Nash equilibrium of $\Gamma^{(t)}$, all players play {\sf Hare}. Otherwise, the players play according to an arbitrary Nash equilibrium of $\Gamma^{(t)}$. 

Since the players are playing according to a Nash equilibrium at every step, they all have regret at most $0$. Now consider two cases.
\begin{enumerate}
    \item There is a player $i$ such that plays {\sf Stag} with probability less than $1/2$. Then the social welfare is at most $(3/4) n/(n+1)$, which is lower than the optimal social welfare by $(1/4)n/(n+1)$. 
    \item All players play {\sf Stag} with probability at least $1/2$. Then, in particular, ``everyone plays {\sf Hare}'' is not a Nash equilibrium in $\Gamma^{(t)}$. So, if everyone were to play {\sf Hare}, there is some player $i$ who would rather deviate and play {\sf Stag}. Thus, the mediator must be giving an expected payment to $i$ of at least $(1/2)/(n+1)$. As discussed above, there are only two nodes $z$ for which the setting of $q_i^{(t)}(z)$ increases $i$'s utility for playing {\sf Stag} relative to its utility for playing {\sf Hare}. The first is when $j = \bot$, $k = i$, $i$ plays {\sf Stag}, and the next player plays {\sf Hare}. Since $P \le n/4$ and this node occurs with probability $1/(n(n+1))$, even the maximum payment at this node contributes at most $(1/4)/(n+1)$ to the expected payment. Therefore, the remainder of the payment, $(1/2)/(n+1)$, must be given when $j = i$ and then $i$ plays {\sf Stag}. But Player $i$ plays {\sf Stag} with probability at least $1/2$, so $i$'s observed expected payment is at least $(1/4)/(n+1)$. 
\end{enumerate}
Therefore, we have
\begin{align}
    \qty(u_\mediator^* - \E u_0(z^{(t)})) + \E \sum_{i \in \range{n}} q_i^{(t)}(z^{(t)})\ge \frac{1}{4(n+1)}
\end{align}
where $u_0$ is the social welfare function, so it is impossible for both quantities to tend to $0$ as $T \to \infty$.
\end{proof}

\begin{figure}[!ht]
    \input{figures/combined_proof_figures}
\end{figure}

\subsubsection{Upper bound}
\label{se:bandit-offline-steer}

To circumvent the lower bound in \Cref{th:bandit-lower-bound}, in this subsection, we allow the payment bound $P \ge 1$ to depend on both the time limit $T$ and the game. %

\begin{definition}[\textsc{TrajectorySteer}]
    \label{def:banditsteer}
    Let $\alpha, P$ be hyperparameters. Then, for all rounds $t = 1, \dots, T$, sample $z \sim \vx^{(t)}$ and pay players as follows.
    If all players have been direct (\ie, if $\hat{\vd}[z] = 1$), pay all players $\alpha$.
     If at least one player has not been direct, pay $P$ to all players who have been direct.
That is, set $q_i^{(t)}(z^{(t)}) = \alpha \hat{\vd}[z] + P \vd_i[z] (1 - \hat{\vd}[z])$.
\end{definition}

\begin{restatable}{theorem}{banditofflinesteer}
    Set the hyperparameters
    $
        \alpha = 4|Z|^{1/2} \eps^{1/4}$ and $P = {2|Z|^{1/2}}{\eps^{-1/4}},
    $
    where $\eps := R(T) / T$, and let $T$ be large enough that $\alpha \le 1$. Then, running \textsc{TrajectorySteer} for $T$ rounds results in average realized payments bounded by
    $ 8 |Z|^{1/2} \eps^{1/4}$, and directness gap by $2\eps^{1/2}$.
    \label{th:bandit-offline-steer}
\end{restatable}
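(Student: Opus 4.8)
The plan is to reduce everything to two scalar quantities and then balance the hyperparameters. Writing $D_i(\vx) \defeq \E_{z\sim\vx}\vd_i[z]$ for the probability that player $i$ plays directly and $D(\vx) \defeq \E_{z\sim\vx}\hat\vd[z]$ for the probability that \emph{all} players play directly, I would first simplify the expected payment. Since $\vd_i[z]\hat\vd[z] = \hat\vd[z]$, the payment of \Cref{def:banditsteer} collapses to $p_i(\vx) = \alpha\, D(\vx) + P\,(D_i(\vx) - D(\vx))$. Because $D(\vx)\le D_i(\vx)$, this is manifestly nonnegative and bounded by $P$, and it is small exactly when play is near all-direct: then $D\approx 1$ and $D_i - D\approx 0$, so $p_i\approx\alpha$. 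This already isolates what must be controlled, namely the \emph{joint indirectness} $1-D$, which drives both the directness gap and the $P$-scaled part of the payment.

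Next I would run the no-regret argument by letting each player contemplate the deviation to the always-direct strategy $\vd_i$. The payment gain from this deviation equals $P\bigl[(1-\beta_i) - (D_i - D)\bigr] + \alpha(\beta_i - D)$, where $\beta_i \defeq \Pr_{(\vd_i,\vx_{-i})}[\text{all direct}]$; the monotonicity fact $D_i - D \le 1 - \beta_i$ (immediate from $\vx_i[z]\le 1$ on the relevant terminal set) makes this gain nonnegative and at least $\alpha(\beta_i - D)$. The utility change is where the equilibrium hypothesis enters: decomposing $u_i(\vd_i,\vx_{-i}) - u_i(\vx)$ through the profiles $(\vd_i,\vd_{-i})$ and $(\vx_i,\vd_{-i})$, the middle difference is $\ge 0$ because $\vd$ is a Nash equilibrium, while the two residual differences are bounded by the probability that the \emph{other} players are indirect, namely $1-\beta_i$ and $\mu_i \defeq \Pr_\vx[\exists j\ne i:\ j\text{ indirect}]$. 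The key point is that any utility a player can gain by being indirect is incurred only when some opponent is also indirect, and exactly then the large reward $P(1-\beta_i)$ more than compensates it, whereas the small reward $\alpha$ handles the residual case in which $i$ is the lone deviator.

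Combining these estimates, the per-round deviation benefit $\Delta_i$ is lower-bounded by $\alpha(\beta_i - D)$ minus terms that the $P$-rewards and the equilibrium control, and after summing the regret inequalities over players and invoking the telescoping inequality $\sum_i (\beta_i - D) \le 1 - D$ (which follows from $\sum_i(1-p_i)\prod_{j\ne i}p_j \le 1 - \prod_j p_j$ applied at each all-direct terminal), I expect to obtain $\tfrac1T\sum_t (1 - D^{(t)}) \lesssim (P+1)\eps/\alpha$. To convert joint indirectness into the directness gap I would use the conservation identity $\sum_z c(z)\hat\vx[z] = 1 = \sum_z c(z)\hat\vd[z]$ (equivalently, \Cref{lem:obedience-union}), which equates the mass leaving the direct terminals with the mass on the indirect ones and bounds the gap by a constant multiple of $1-\bar D$. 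Finally, averaging the payment gives $\tfrac1T\sum_t p_i(\vx^{(t)}) \le \alpha + P\cdot\tfrac1T\sum_t(1-D^{(t)}) \lesssim \alpha + P(P+1)\eps/\alpha$. Balancing the two payment contributions forces $\alpha \asymp P\sqrt\eps$; taking $P \asymp |Z|^{1/2}\eps^{-1/4}$ then yields payment $\asymp \alpha \asymp |Z|^{1/2}\eps^{1/4}$ and directness gap $\asymp \sqrt\eps$, matching the stated rates after fixing constants.

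The hard part will be the incentive analysis when several players are simultaneously indirect. A naive sum of the per-player regret inequalities loses a factor of $n$, because each of the many \emph{direct} players suffers a utility perturbation of order $1-\beta_i$ caused by the deviators. I expect the fix to be a regime split: when the indirectness at a node exceeds $\sim 1/P$ the $P$-reward term dominates and caps the total deviation benefit, whereas below that threshold the linear $\alpha$-reward suffices, and the monotonicity and telescoping facts above are precisely what let the two regimes be stitched together. The remaining bookkeeping, carrying the chance weights $c(z)$ through the argument and converting the reach-weighted directness produced by the incentive analysis into the sequence-form directness gap of \Cref{def:dec-steering}, is routine but is where the game-dependent $|Z|$ factors enter.
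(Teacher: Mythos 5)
Your setup is sound as far as it goes: the payment identity $p_i(\vx)=\alpha D(\vx)+P\,(D_i(\vx)-D(\vx))$, the monotonicity $D_i-D\le 1-\beta_i$, the three-term utility decomposition through $(\vd_i,\vd_{-i})$ and $(\vx_i,\vd_{-i})$ with the Nash middle term, and the final balance $\alpha\asymp P\sqrt{\eps}$, $P\asymp|Z|^{1/2}\eps^{-1/4}$ all match what the true bounds require. But the central deduction---``summing the regret inequalities and invoking $\sum_i(\beta_i-D)\le 1-D$, I expect to obtain $\frac1T\sum_t(1-D^{(t)})\lesssim(P+1)\eps/\alpha$''---does not follow, and the telescoping inequality you invoke points in the \emph{wrong direction}. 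To upper-bound $1-D$ from regret you need a \emph{lower} bound on $\sum_i(\beta_i-D)$ in terms of $1-D$, and no such bound exists: at a profile where every player deviates on path with probability one (e.g., the all-{\sf Hare} profile in the game of \Cref{fig:proof combined}), one has $D=0$ and every $\beta_i=0$, since a unilateral return to directness by $i$ still leaves the others indirect on path; thus $\sum_i(\beta_i-D)=0$ while $1-D=1$, and the $\alpha$-term carries no information. The slack must then come from the catch term $P[(1-\beta_i)-(D_i-D)]$, but there your error terms are $P$-scaled too: using $D_i-D\le\mu_i$, the payments player $i$ \emph{already} collects for catching others plus the utility perturbation give $(P+1)\mu_i$ on the error side against only $(P-1)(1-\beta_i)$ on the gain side, and $\mu_i$ (measured under $\vx$) admits no general comparison with $1-\beta_i$ (measured under $(\vd_i,\vx_{-i})$), because player $i$'s own behavior determines whether opponents' deviations are on-path at all. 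This is precisely the chicken-and-egg obstruction the paper flags before \Cref{th:bandit-offline-steer}, and the proposed ``regime split at indirectness $\sim 1/P$'' is a description of the difficulty, not a proof of how to cross it.

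The paper's proof supplies the missing ingredient as a separate bootstrap lemma (\Cref{le:bandit-steering1}): for every terminal $z$ with $\vd_i[z]=0$, $\abs{\pi(z\mid\vx_i,\vd_{-i})-\pi(z\mid\vx)}\le\gamma\defeq n\eps+\sum_j\delta_j/P$, where $\delta_j$ is player $j$'s own deviation benefit---that is, \emph{conditional on trajectories where $i$ deviates, every other player is nearly direct}. Crucially this is proved per node by a path argument: a player $j$ who leaves the path toward a node at which $i$'s deviation would be caught forgoes a $P$-payment whose expectation exceeds $\delta_j$ plus regret, a contradiction; aggregate quantities like your $\beta_i,\mu_i$ cannot see this conditional structure. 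The lemma is then injected into a signed per-node comparison of $\tilde u_i(\vd_i,\vx_{-i})-\tilde u_i(\vx)$ against $\tilde u_i(\vd)-\tilde u_i(\vx_i,\vd_{-i})$ over the three node classes (all direct; $i$ direct, someone else not; $i$ not direct), using $\tilde u_i(z)\ge P\ge 2$ on catch nodes, yielding the self-referential inequality $\alpha\Delta-4|Z|\Delta/P\le 2P\eps$, which closes to $\Delta\le 2P\eps/(\alpha-4|Z|/P)$ only because $\alpha>4|Z|/P$---the structural reason $P$ must grow with $T$, which your sketch reproduces numerically but never explains. Without a substitute for this lemma, your regret inequalities are consistent with all players remaining permanently indirect, so the proposal has a genuine gap at its core step even though its payment accounting and hyperparameter arithmetic are correct.
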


As alluded to in the introduction, the proof of this result is more involved than those for previous results, because one cannot simply make the target equilibrium dominant as in the full-feedback case. One may hope that---as in \guideoffline---the desired equilibrium can be made dominant by adding payments.
In fact, a sort of ``chicken-and-egg'' problem arises: \ref{item:opt-med}  requires that all players converge to equilibrium. But for this to happen, other players' strategies must first converge to equilibrium so that $i$'s incentives are as they would be in equilibrium. The main challenge in the proof of \Cref{th:bandit-offline-steer} is therefore to carefully set the hyperparameters to achieve convergence despite these apparent problems.

\section{Other Equilibrium Notions and Online Steering}\label{se:mediator}
So far, \Cref{th:offline-then-steer,th:bandit-offline-steer} handle only the case where the equilibrium is a \emph{pure-strategy} Nash equilibrium of the game, given as part of the input. This section extends our analysis to other equilibrium notions and considers settings in which an {\em objective for the mediator} is given instead of a target equilibrium. For the former, we will show that many types of equilibrium can be viewed as pure-strategy equilibria in an {\em augmented game} in which the mediator has the ability to give {\em advice} to the players in the form of action recommendations. Then, in the original game, the goal is to guide the players to the pure strategy profile of following recommendations.

\subsection{Necessity of Advice}\label{app:recommendations}

We first show that without the possibility to give advice, steering is impossible with sublinear payments.
\begin{theorem}\label{th:advice-necessary}
    There exists a normal-form game, and objective function $u_0$ of the mediator, such that the unique optimal equilibrium is mixed, and it is impossible to steer players toward that equilibrium using only sublinear payments (and no advice).
\end{theorem}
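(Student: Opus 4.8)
The plan is to exhibit the $2\times 2$ coordination game with player utilities $u_i(\mathsf{HH})=u_i(\mathsf{TT})=1$ and $u_i=0$ on the mismatched profiles $\mathsf{HT},\mathsf{TH}$, and to take as the mediator's objective the function that rewards exactly the mismatched profiles, $u_0(\mathsf{HT})=u_0(\mathsf{TH})=1$ and $u_0(\mathsf{HH})=u_0(\mathsf{TT})=0$. First I would verify the instance is as advertised: the only Nash equilibria are the two pure diagonal profiles $\mathsf{HH},\mathsf{TT}$ (each with $u_0=0$) and the fully mixed profile in which both players randomize uniformly (with $u_0=\tfrac12$). Hence the unique optimal equilibrium is mixed, and the outcome distribution it induces is the uniform distribution $\hat{\vec d}=(\tfrac14,\tfrac14,\tfrac14,\tfrac14)$ over the four cells, which is what desideratum~\ref{item:opt-med} asks the play to converge to.

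I would prove impossibility against a single transparent family of no-regret players: at every round $t$ the players play a Nash equilibrium of the payment-augmented game $\Gamma^{(t)}$, breaking ties in favor of a pure equilibrium on the diagonal $\{\mathsf{HH},\mathsf{TT}\}$ whenever one exists. Exactly as in the proof of \Cref{prop:imposs-boundedpayments}, playing a per-round equilibrium makes each player best respond within the round, so the total regret is at most $0$ and these players are admissible for any regret bound (even $R(T)=0$). Call a round \emph{good} if a diagonal pure equilibrium of $\Gamma^{(t)}$ exists—the players then realize a diagonal corner, putting zero mass on the off-diagonal cells—and \emph{bad} otherwise. Since the $\mathsf{HT}$-coordinate of every good round's outcome is $0$, desideratum~\ref{item:opt-med} forces $\tfrac1T\sum_t \hat{\vx}^{(t)}[\mathsf{HT}]\to\tfrac14$, so the fraction of bad rounds must be at least $\tfrac14-o(1)$.

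The crux is the following lemma, to be proved by a finite case analysis of the $2\times 2$ equilibrium structure: there is an absolute constant $c_0>0$ such that whenever neither $\mathsf{HH}$ nor $\mathsf{TT}$ is a pure equilibrium of $\Gamma^{(t)}$, \emph{every} equilibrium of $\Gamma^{(t)}$ realizes expected payment at least $c_0$ to some player. The mechanism is that destroying a diagonal equilibrium using only nonnegative payments requires offering some off-diagonal payment entry exceeding the coordination gap $u_i(\text{match})-u_i(\text{mismatch})=1$; one then checks that any equilibrium of the perturbed game either places $\Omega(1)$ probability on such a richly paid cell, or collapses onto an off-diagonal corner that is a pure equilibrium only because both players are paid more than $1$ there. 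Granting the lemma, every bad round costs the mediator at least $c_0$ in realized payments, so the average realized payment is at least $c_0(\tfrac14-o(1))$, contradicting desideratum~\ref{item:payments}. Thus no payment scheme can satisfy both desiderata against these players, and steering to the mixed equilibrium without advice is impossible.

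The step I expect to be the main obstacle is this lemma: ruling out a ``cheap'' equilibrium of the augmented game is delicate because the mediator can split the coordination-breaking payments across the two players and across cells, and can try to hide the large entries on cells that an equilibrium visits only rarely. I would organize the argument by which player is made to deviate from each of $\mathsf{HH}$ and $\mathsf{TT}$ (four combinations, reducible by the $\mathsf{HH}\leftrightarrow\mathsf{TT}$ and player-swap symmetries), and in each case use the indifference conditions of a mixed equilibrium to force enough mass onto a cell carrying a payment exceeding $1$. Finally, I would contrast this with the role of advice: once the mediator may recommend actions, the uniform product distribution becomes the ``obey'' profile of an augmented game in which obedience can be made dominant by the sandboxing payments of \guideoffline, so the randomization the players fail to supply on their own is instead injected by the recommendation device—which is precisely why advice is necessary here.
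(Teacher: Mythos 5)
Your instance, objective, and player construction essentially coincide with the paper's (the paper uses the same coordination game with payoffs $\pm 1$ and a welfare-minimizing mediator; your $u_0$ rewarding mismatches is an affine variant, and your ``play a per-round Nash equilibrium of $\Gamma^{(t)}$, preferring a pure diagonal one'' players are exactly the paper's). The genuine gap is your key lemma, which is false as stated: there is no absolute constant $c_0$ lower-bounding realized payments in bad rounds. Concretely, fix $q\in(0,1)$ near $1$ and tiny $\epsilon>0$, and let the mediator set $p_1(\mathsf{TH})=1+\epsilon$, $p_1(\mathsf{HT})=1+q\epsilon/(1-q)$, $p_2(\mathsf{HT})=1$, and all other payments zero. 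Then neither $\mathsf{HH}$ nor $\mathsf{TT}$ is an equilibrium of $\Gamma^{(t)}$ (Player 1 strictly gains by deviating from each), so the round is bad; yet the profile in which Player 1 plays $\mathsf{H}$ and Player 2 plays $q\,\mathsf{H}+(1-q)\,\mathsf{T}$ is an exact Nash equilibrium of $\Gamma^{(t)}$, since $v_1(\mathsf{H},\cdot)=v_1(\mathsf{T},\cdot)=1+q\epsilon$ and $v_2(\mathsf{HH})=v_2(\mathsf{HT})=1$, and its realized payments are $(1-q)+q\epsilon$ to Player 1 and $(1-q)$ to Player 2, which vanish as $q\to 1$, $\epsilon\to 0$. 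The danger you flagged---large entries hidden on rarely visited cells---is thus not a surmountable technicality but a third case outside your dichotomy (``$\Omega(1)$ mass on a rich cell'' vs.\ ``off-diagonal pure equilibrium''): a mixed equilibrium hugging a destroyed diagonal corner. Since your players play an \emph{arbitrary} equilibrium in bad rounds, your accounting needs every equilibrium to be costly, and it is not; ``constant cost per bad round times a constant fraction of bad rounds'' therefore cannot close the argument. (One could try to salvage it by re-specifying the players to select the most expensive equilibrium, but then the selection rule is doing real work and must be re-coupled to the directness-gap bookkeeping.)

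The repair, and the paper's actual route, is to couple payments to realized welfare (equivalently, to off-diagonal outcome mass) rather than to a bad-round indicator. The paper proves, by the same short case analysis you outline but tracking welfare quantitatively, a per-round inequality of the form $\sum_i u_i(\vec x^{(t)}) + 2\sum_i p_i^{(t)}(\vec x^{(t)}) > 1$ (in its $\pm 1$ normalization), valid for \emph{every} equilibrium the players might select. Note that my counterexample is consistent with this inequality: its cheap equilibria put almost all mass on the diagonal, so while they cost the mediator little, they also generate almost none of the off-diagonal outcomes that steering to the uniform mixed equilibrium requires. Since desideratum (S2) forces the average off-diagonal mass to $1/2$, hence average welfare to its mixed-equilibrium value, the inequality forces average payments bounded away from zero, i.e., linear total payments---contradicting (S1). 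Your preliminary steps (verifying the equilibrium structure, the $1/4-o(1)$ bad-round frequency bound via the $\mathsf{HT}$ coordinate, and the closing observation that advice converts the mixed equilibrium into a dominant ``obey'' profile of the augmented game) are all sound; only the quantitative lemma needs to be replaced by this welfare--payment tradeoff.
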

\begin{proof}
    Consider a 2-player, binary action coordination game, with actions {\sf A} and {\sf B}. Players receive utility $1$ point for playing the same action, and $-1$ otherwise. The mediator's goal is to {\em minimize} the welfare of the players.\footnote{One could construct an example in which the mediator's goal is to {\em maximize} the players' utility, by simply adding a third player, with one action, whose utility is $-10$ if P1 and P2 play the same action and $0$ otherwise.}

    The welfare-minimizing equilibrium in this game is the fully-mixed one. So, we claim that, using sublinear payments alone, it is impossible to steer players to the mixed equilibrium. Consider the following algorithm for the players: Let $\Gamma^{(t)} $ be the game at time $t$ induced by the mediator's payoff function $p^{(t)}$. Play an arbitrary Nash equilibrium of $\Gamma^{(t)}$, pure if possible.
    The total regret of the players after $T$ rounds is at most $0$ since the players always play a Nash equilibrium. There are three cases:
    \begin{enumerate}
        \item The players play ({\sf A, A}) or ({\sf B, B}). In this case, the players get social welfare $2$.
        \item The players play ({\sf A, B}) or ({\sf B, A}). In this case, the  players get social welfare $-2$ in the game itself, but in order for either of these to be a Nash equilibrium, there must be a payment of at least $2$ to each player.
        \item The players play a mixed strategy. This means that $\Gamma^{(t)}$ had no pure strategy Nash equilibrium. Since ({\sf A, A}) is not an equilibrium, suppose WLOG that $v_1^{(t)}({\sf B, A}) > v_1^{(t)}({\sf A, A})$. Then $p_i^{(t)}({\sf B, A}) > 2$. Since ({\sf B, A}) is also not a Nash equilibrium, we have $v_2^{(t)}({\sf B, B}) > v_2^{(t)}({\sf B, A})$. Since ({\sf B, B}) is also not a Nash equilibrium, we have $v_1^{(t)}({\sf A, B}) > v_1^{(t)}({\sf B, B})$, so $p_1^{(t)}({\sf A, B}) > 2$. Thus, all four strategy profiles have either high welfare for the players, or nontrivial payments.
    \end{enumerate}
    In all three cases, as a result, we must have $\sum_i u_i(\vec x^{(t)}) + 2 p_i^{(t)}(\vec x^{(t)}) > 1$ for all timesteps $t$. Therefore, summing over $t = 1, \dots, T$, it is impossible for both quantities to grow sublinearly in $T$, which is what would be required for successful steering.
\end{proof}
Given this result, we will analyze a setting in with the mediator is allowed to provide ``advice,'' and show a broad possibility result for steering.

\subsection{More General Equilibrium Notions: Bayes-Correlated Equilibrium}
Throughout this subsection, there will be two games: the original game $\hat\Gamma$, and the augmented game $\Gamma$. We will use hats to distinguish the various components of them. For example, a history of $\hat \Gamma$ is $\hat h \in \hat H$, a strategy of Player $i$ is $\hat \vx_i \in  \hat X_i$, and so on. Given an $n$-player game $\hat\Gamma$, the {\em mediator-augmented game} $\Gamma$ is the $n+1$-player game constructed as follows. $\Gamma$ is identical to $\hat\Gamma$, except that there is an extra player, namely, the mediator itself. We will denote the mediator as Player $0$. For each (non-chance) player $i$, every decision point $\hat h \in \hat H_i$ is replaced with the following gadget. First, the mediator selects an action $\hat a \in \hat A_{\hat h}$ to {\em recommend} to Player $i$. Player $i$ privately observes the recommendation, and only then is allowed to choose an action. The mediator is assumed to have perfect information in the game. To ensure that the size of $\Gamma$ is not too large, we make the following restriction: once two players have disobeyed action recommendations (``deviated''), the mediator ceases to give further action recommendations. Finally, upon reaching a terminal node $\hat z \in \hat Z$, each player gets utility $\hat u_i(\hat z)$.

We first analyze the size of $\Gamma$. A terminal node in $\Gamma$ can be uniquely identified by a tuple $(\hat z, \hat h_1, \hat h_2, \hat a_1, \hat a_2)$ where $\hat z$ is the terminal node in the original game that was reached,  $\hat h_1,\hat h_2$ are predecessors of $\hat z$ at which players deviated (or $\Root$ if the deviations did not happen), and $\hat a_1$ and $\hat a_2$ are the recommendations that the mediator gave at $\hat h_1, \hat h_2$ respectively (again, $\Root$ if the deviations did not happen). Thus, a (very loose) bound on the number of terminal nodes in $\Gamma$ is $| Z| \le |\hat Z|^3$, \ie, it is polynomial. (This is where we use the fact that only two deviations were allowed.)

As in the previous section, the mediator is able to {\em commit} to a strategy $\vmu \in  X_0$ upfront on each iteration. For a fixed mediator strategy $\vmu$, we will use $ \Gamma^{ \vmu}$ to refer to the $n$-player game resulting from treating the mediator as a nature player that plays according to $\vmu$. 

The {\em direct strategy} $\vd_i \in  X_i$ of each player $i$ is the strategy that follows all mediator recommendations. The goal of the mediator is to find a {\em Bayes-correlated equilibrium}, which is defined as follows.

\begin{definition}
    A {\em Bayes-correlated equilibrium} $\Gamma$ is a strategy $\vmu \in  X_0$ for the mediator such that $\vd$ is a Nash equilibrium of $ \Gamma^{\vmu}$. An equilibrium $\vmu$ is {\em optimal} if, among all equilibria, it maximizes the mediator's objective $ u_0(\vmu, \vd)$.
\end{definition}

Bayes-correlated equilibria (BCEs) were introduced first by \citet{bce} in single-step games. In sequential (extensive-form) games, BCEs were explored first, to our knowledge, by \citet{Makris23:Information} in the economics literature, and in independent work in the computer science literature as a special case of the general framework introduced by \citet{Zhang22:Polynomial}. Bayes-correlated equilibria are easily seen to be a superset of most other equilibrium notions, including (mixed) Nash equilibria, {\em extensive-form correlated equilibria} (EFCE)~\cite{Stengel08:Extensive}, {\em communication equilibria}~\cite{Forges86:Approach,Myerson86:Multistage}, and many more. The {\em revelation principle} assures us that the assumption that players will be direct in equilibrium is without loss of generality: for every possible Nash equilibrium $\vx$ of $ \Gamma^{\vmu}$, then there is some $\vmu'$ such that $ u_i(\vmu', \vd)= u_i(\vmu, \vx)$. 

BCEs naturally capture the problems of {\em information design} and {\em Bayesian persuasion}~(\eg, \citet{Kamenica11:Bayesian}). In particular, the results in this section can therefore be thought of as a version of information design/Bayesian persuasion that does not need to assume that players will play a certain profile ($\vd)$, but instead {\em steers} the players to play that profile.

Since $ \Gamma^{\vmu}$ is just an $n$-player game with pure Nash equilibrium $\vd$, all of the results in the previous sections apply. Therefore, it follows immediately that is possible to steer players toward {\em any} BCE (and thus any mixed Nash equilibrium, any EFCE, or any communication equilibrium) so long as the mediator is allowed to give advice to the players. We therefore have the following result.

\begin{theorem}
    \label{th:bayes-correlated}
    Algorithms \guideoffline and {\sc TrajectorySteer} can be used to steer players to an arbitrary Bayes-correlated equilibrium, with (up to a polynomial loss in the dependence on $|\hat Z|$, because $|Z| = \poly(|\hat Z|)$) the same bounds.
\end{theorem}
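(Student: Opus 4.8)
The plan is to reduce steering to an arbitrary Bayes-correlated equilibrium $\vmu$ of $\hat\Gamma$ to steering to a \emph{pure} Nash equilibrium of a fixed $n$-player game, so that \Cref{th:offline-then-steer,th:bandit-offline-steer} apply essentially verbatim. Concretely, the mediator commits to the target BCE strategy $\vmu \in X_0$ on every round, giving recommendations according to $\vmu$. Absorbing $\vmu$ into the chance player turns the augmented game $\Gamma$ into the $n$-player game $\Gamma^{\vmu}$, in which---by the very definition of a Bayes-correlated equilibrium---the direct profile $\vd$ is a pure-strategy Nash equilibrium. The goal in the original game, namely reproducing the BCE outcome distribution over $\hat Z$, corresponds exactly to steering the players of $\Gamma^{\vmu}$ toward $\vd$: when all players are direct, the induced distribution over terminal nodes of $\hat\Gamma$ is precisely the one prescribed by $\vmu$.

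First I would verify that $\Gamma^{\vmu}$ is a legitimate instance of the (full-feedback or trajectory-feedback) steering problem. It is an $n$-player extensive-form game with payoffs bounded in $[0,1]$---the mediator only inserts recommendation gadgets and never alters the $\hat u_i$---and $\vd$ is a pure Nash equilibrium of it. The steering payments prescribed by \guideoffline and {\sc TrajectorySteer} are then layered on top of the committed recommendations; because they are defined purely in terms of $\Gamma^{\vmu}$, $\vd$, and the players' strategies, they automatically inherit the required linearity in $\vx_i$, continuity in $\vx_{-i}$, nonnegativity, and boundedness. The no-regret hypothesis on the players also carries over unchanged, since it is an assumption on their behavior in the game they actually face, which is now $\Gamma^{\vmu}$ augmented with steering payments.

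With these checks in place, applying \Cref{th:offline-then-steer} (full feedback) or \Cref{th:bandit-offline-steer} (trajectory feedback) to $\Gamma^{\vmu}$ with target $\vd$ directly yields the stated vanishing average payments and directness gap. The only quantitative change is that the bounds are now expressed in terms of the terminal-node count $|Z|$ of the augmented game rather than $|\hat Z|$. Here I would invoke the size estimate established just above the theorem statement, $|Z| \le |\hat Z|^3$, which follows from the restriction that the mediator stops issuing recommendations after two deviations; substituting this into the earlier rates gives the claimed bounds up to a polynomial loss in $|\hat Z|$, exactly as asserted.

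I expect the crux to be conceptual rather than technical: confirming that steering to the direct profile $\vd$ in $\Gamma^{\vmu}$ genuinely realizes the intended equilibrium in $\hat\Gamma$, and that the blow-up in game size stays polynomial. The revelation principle underpins the first point---being direct in equilibrium is without loss of generality, so every BCE (and hence every mixed Nash equilibrium, EFCE, or communication equilibrium) is captured by some $\vmu$ for which $\vd$ is a Nash equilibrium of $\Gamma^{\vmu}$---while the two-deviation truncation in the construction of $\Gamma$ is precisely what keeps $|Z| = \poly(|\hat Z|)$ and thus the rate loss polynomial. No genuinely new analysis beyond the earlier theorems is needed; all the content lies in setting up the reduction correctly.
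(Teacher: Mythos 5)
Your proposal is correct and mirrors the paper's own argument: the paper likewise fixes the target BCE $\vmu$, observes that $\Gamma^{\vmu}$ is an $n$-player game in which the direct profile $\vd$ is a pure Nash equilibrium, applies \guideoffline or {\sc TrajectorySteer} to steer toward $\vd$, and absorbs the size blow-up via $|Z| \le |\hat Z|^3$ from the two-deviation truncation. Your additional checks (payment-function conditions, revelation principle) are consistent with, and merely make explicit, what the paper treats as immediate.
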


\subsection{Online Steering}

We now consider the setting where the target equilibrium is {\em not} given to us beforehand. We assume that the mediator wishes to steer players toward an {\em optimal} equilibrium, but does not {\em a priori} know what that optimal equilibrium is. Instead of a target Nash equilibrium, we assume that the mediator has a utility function $\hat u_0 : \hat Z \to [0, 1]$, and we will call $\hat u_0$ the {\em objective}. As with players' utility functions, $\hat u_0$ in $\hat\Gamma$ induces a mediator utility function $u_0$ in $\Gamma$. In particular, we would like to steer players toward an {\em optimal} equilibrium $\vec\mu$, without knowing that equilibrium beforehand. To that end, we add a new criterion.

\begin{enumerate}
\item[(S3)] \label{item:opt-online} (Optimality) The mediator's reward should converge to the reward of the optimal equilibrium. That is, the {\em optimality gap} $u_0^* - \frac{1}{T} \sum_{t=1}^T u_0(\vec\mu^{(t)}, \vx^{(t)})$, where $u_0^*$ is the mediator utility in an optimal equilibrium, converges to $0$ as $T \to \infty$. 
\end{enumerate}

Since equilibria in mediator-augmented games are just strategies $\tilde\vmu$ under which $\tilde\vd$ is a Nash equilibrium, we may use the following algorithm to steer players toward an optimal Bayes-correlated equilibrium:

\begin{definition}[{\sc ComputeThenSteer}]
    \label{def:computethensteer}
        Compute an optimal equilibrium $\vec\mu$. With $\vec\mu$ held fixed, run any steering algorithm in $\Gamma^{\vec\mu}$.
\end{definition}

As observed earlier, the main weakness of {\sc ComputeThenSteer} is that it must compute an equilibrium offline. Although this can be done in polynomial time~\cite{Zhang22:Polynomial}, it is still far less efficient than, for example, a single step of a regret minimier. To sidestep this, in this section we will introduce algorithms that compute the equilibrium in an {\em online} manner, while steering players toward it.
Our algorithms will make use of a Lagrangian dual formulation analyzed by~\citet{Zhang23:Computing}.

\begin{proposition}[\citet{Zhang23:Computing}]
    \label{prop:zerosum}
    There exists a (game-dependent) constant $\lambda^* \ge 0$ such that, for every $\lambda \ge \lambda^*$, the solutions $\vec\mu$ to
\begin{align}
     \max_{\vec\mu \in X_0} \min_{\vec x \in X} u_0(\vec\mu, \vec d)- \lambda \sum_{i=1}^n \qty[u_i(\vec\mu, \vec x_i, \vec d_{-i}) - u_i(\vec\mu, \vec d_i, \vec d_{-i})],\label{eq:saddle-point}
\end{align}
are exactly the optimal equilibria of the augmented game.
\end{proposition}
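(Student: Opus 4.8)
The plan is to read \eqref{eq:saddle-point} as an \emph{exact-penalty} reformulation of the constrained program that defines optimal equilibria, and to exploit the polyhedral structure of the mediator-augmented game to show that a \emph{finite} threshold $\lambda^\star$ suffices. First I would carry out the inner minimization. Since $u_0(\vmu,\vd)$ does not depend on $\vec x$ and the bracketed sum decouples across players---the $i$-th summand involves only $\vx_i$---the minimization factorizes over $X = X_1\times\cdots\times X_n$ and yields $u_0(\vmu,\vd) - \lambda\, g(\vmu)$, where
\[
    g(\vmu) \defeq \sum_{i=1}^n \max_{\vx_i \in X_i}\qty[u_i(\vmu,\vx_i,\vd_{-i}) - u_i(\vmu,\vd_i,\vd_{-i})].
\]
Taking $\vx_i = \vd_i$ shows $g \ge 0$, and $g(\vmu) = 0$ exactly when no player can profitably deviate from $\vd$ against $(\vmu,\vd_{-i})$, i.e.\ exactly when $\vmu$ is a Bayes-correlated equilibrium. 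Hence the set of equilibria is $F \defeq \{\vmu \in X_0 : g(\vmu) = 0\}$, the optimal equilibria are $\argmax_{\vmu\in F} u_0(\vmu,\vd)$, and I write $v^\star \defeq \max_{\vmu \in F}u_0(\vmu,\vd)$ for the optimal value. The claim then becomes: for $\lambda$ large enough, $\argmax_{\vmu\in X_0}\qty[u_0(\vmu,\vd)-\lambda g(\vmu)] = \argmax_{\vmu\in F}u_0(\vmu,\vd)$.

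Next I would record the structural facts that make this an exact penalty. The objective $\vmu \mapsto u_0(\vmu,\vd)$ is linear, hence $L$-Lipschitz on the bounded polytope $X_0$ for some game-dependent $L$. For each fixed $\vx_i$ the map $\vmu\mapsto u_i(\vmu,\vx_i,\vd_{-i})$ is affine, and the inner maximum is attained at a vertex of $X_i$; thus $g$ is a sum of finitely many pointwise maxima of affine functions, i.e.\ a nonnegative, convex, piecewise-linear function of $\vmu$. Crucially, $g(\vmu)=0$ is equivalent to the \emph{finite} linear system $u_i(\vmu,v,\vd_{-i}) \le u_i(\vmu,\vd_i,\vd_{-i})$ ranging over all players $i$ and all vertices $v$ of $X_i$, so $F$ is a polyhedron and $g$ is, up to constant factors, the total positive violation of this system.

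The key step is a linear error bound. Applying Hoffman's lemma to the polyhedral system cutting out $F$ inside $X_0$ gives a game-dependent constant $\kappa$ with $\operatorname{dist}(\vmu,F) \le \kappa\, g(\vmu)$ for all $\vmu\in X_0$. Set $\lambda^\star \defeq L\kappa + 1$. For any $\lambda \ge \lambda^\star$ and any $\vmu\in X_0$, let $\vmu'\in F$ be a nearest point; then
\[
    u_0(\vmu,\vd) - \lambda g(\vmu) \le u_0(\vmu',\vd) + L\norm{\vmu-\vmu'} - \lambda g(\vmu) \le v^\star + (L\kappa - \lambda)\, g(\vmu) \le v^\star,
\]
using $\operatorname{dist}(\vmu,F)\le\kappa g(\vmu)$, $u_0(\vmu',\vd)\le v^\star$, and $g\ge 0$. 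Plugging an optimal equilibrium (where $g=0$ and $u_0=v^\star$) shows the maximum equals $v^\star$. Since $\lambda > L\kappa$, the middle inequality is strict whenever $g(\vmu)>0$, so every maximizer of $u_0(\cdot,\vd)-\lambda g$ must satisfy $g(\vmu)=0$, i.e.\ lie in $F$, and there attains $u_0=v^\star$; conversely every optimal equilibrium attains this value. This is exactly the asserted equality of solution sets.

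I expect the error bound to be the main obstacle: the polyhedral (piecewise-linear) structure of both the equilibrium constraints and $g$ is precisely what upgrades the generic penalty guarantee---which only gives exactness as $\lambda\to\infty$---to a \emph{finite} game-dependent $\lambda^\star$. Making $\kappa$ (and hence $\lambda^\star$) explicit would require quantifying Hoffman's constant for this particular system, but for the statement only its existence is needed. A minor point to be careful about is the boundary case $\lambda=\lambda^\star$, which is why I take $\lambda^\star$ strictly above $L\kappa$ so that strictness holds throughout the stated range $\lambda\ge\lambda^\star$.
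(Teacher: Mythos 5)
Your proof is correct, but note that the paper itself contains no proof of \Cref{prop:zerosum}: it is imported as a black box from \citet{Zhang23:Computing}, so the right comparison is with that source. There, the optimal-equilibrium problem is treated as a linear program (the equilibrium constraints are linear in $\vec\mu$, just as you observe), and the finite threshold $\lambda^*$ is obtained via Lagrangian/LP duality, essentially as a bound on the optimal dual multipliers of the equilibrium constraints; this duality viewpoint is visible downstream in the present paper, whose proof of \Cref{th:online-steer} explicitly manipulates ``an optimal dual solution $(\lambda^*, \vec y^*)$'' of \eqref{eq:saddle-point}. Your route---decoupling the inner minimization into the penalty $g(\vec\mu)$, noting $g$ is convex piecewise-linear with $\{g=0\}$ a polyhedron cut out by vertex (pure-deviation) constraints, and invoking a Hoffman error bound to turn the generic penalty method into an \emph{exact} one at a finite $\lambda^* = L\kappa + 1$---is a genuinely different and self-contained derivation of the same fact. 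Each approach buys something: duality yields an interpretable $\lambda^*$ (dual prices of the incentive constraints) and directly produces the dual objects $(\lambda^*, \vec y^*)$ that the paper's later analysis needs, whereas Hoffman's lemma gives existence of the threshold without constructing duals and is robust to any polyhedral description of $X_0$ and the deviation sets. Two minor points you should make explicit: (i) you need $F \neq \emptyset$ for the nearest-point step, which holds because the mediator-augmented game always admits equilibria (BCEs contain the Nash equilibria of the original game); and (ii) since $g$ is a sum over players of per-player \emph{maximum} violations rather than the total violation, the Hoffman inequality should be stated in the $\infty$-norm (each individual constraint violation is at most $g(\vec\mu)$), or the norm-equivalence constants should be folded into $\kappa$---both cosmetic.
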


\begin{definition}[\guideonline]
    \label{def:online-steer}
The mediator runs a regret minimization algorithm $\mc R_\mediator$ over its own strategy space $X_0$, which we assume has regret at most $R_\mediator(T)$ after $T$ rounds. On each round, the mediator does the following:
\begin{itemize}[noitemsep]
    \item Get a strategy $\vmu^{(t)}$ from $\mc R_\mediator$. Play $\vmu^{(t)}$, and set $p_i^{(t)}$ as defined in \eqref{eq:payment} in $\Gamma^{\vec\mu^{(t)}}$.
     \item Pass utility $\vec\mu \mapsto \frac{1}{\lambda} u_0(\vec\mu, \vec d)- \sum_{i=1}^n \qty[u_i(\vec\mu, \vec x_i^{(t)}, \vec d_{-i}) - u_i(\vec\mu, \vec d_i, \vec d_{-i})] $ to $\mc R_\mediator$, where  $\lambda \ge 1$ is a hyperparameter.
\end{itemize}
\end{definition}

\begin{restatable}{theorem}{onlinesteer}
    \label{th:online-steer}
    Set the hyperparameters %
    $
        \alpha = \eps^{2/3}|Z|^{-1/3}$ and $\lambda = { |Z|^{2/3}}{\eps^{-1/3}},
    $
    where $\eps := (R_\mediator(T) + 4nR(T)) / T$ is the average regret bound summed across players, and let $T$ be large enough that $\alpha \le 1/|Z|$.
    Then running \guideonline results in average realized payments, directness gap, and optimality gap all bounded by $7 \lambda^* |Z|^{4/3} \eps^{1/3}$.
\end{restatable}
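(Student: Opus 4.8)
The plan is to combine the saddle-point characterization of optimal equilibria (\Cref{prop:zerosum}) with the full-feedback steering analysis (\Cref{th:offline-then-steer}), treating the mediator as an additional no-regret player. The key conceptual point is that \guideonline runs two coupled regret-minimization processes: the $n$ players minimize regret in the payment-augmented games $\Gamma^{\vmu^{(t)}}$, while the mediator $\mc R_\mediator$ minimizes regret on the Lagrangian objective. I would first set up the bookkeeping: let $\bar\vmu := \E_{t}\vmu^{(t)}$ and $\bar\vx := \E_t \vx^{(t)}$ denote the time-averaged strategies, and write $\delta := \sum_i \vec d_i^\top(\vec d_i - \bar\vx_i)$ as in \Cref{lem:obedience-union}. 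The goal is to bound $\delta$ (which controls the directness gap and payments) and the optimality gap simultaneously.

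First I would handle the players' side exactly as in the proof of \Cref{th:offline-then-steer}: because each $p_i^{(t)}$ is the full-feedback payment \eqref{eq:payment} instantiated in $\Gamma^{\vmu^{(t)}}$, the directness bonus $\alpha$ makes $\vec d_i$ strictly dominant, so summing the players' regret bounds yields $\alpha\, \delta \cdot T \le n R(T)$, hence $\delta \le \eps/\alpha$ (with $\eps$ now the pooled average regret). By \Cref{lem:obedience-union} this already bounds the directness gap by $|Z|\delta$ and the realized payments by $|Z|(2\delta+\alpha)$. The new ingredient is the mediator's side. Since $\mc R_\mediator$ has regret $R_\mediator(T)$ against the linear utilities fed to it, I would compare the mediator's realized average Lagrangian value to that of a fixed comparator — in particular the optimal equilibrium strategy $\vmu^*$ from \Cref{prop:zerosum} with $\lambda \ge \lambda^*$. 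Applying the mediator's regret guarantee against $\vmu^*$ and using that at $\vmu^*$ the constraint terms $u_i(\vmu^*,\vec x_i,\vec d_{-i}) - u_i(\vmu^*, \vec d_i,\vec d_{-i})$ are all nonpositive (because $\vec d$ is a Nash equilibrium of $\Gamma^{\vmu^*}$) lets me lower-bound $\frac1T\sum_t u_0(\vmu^{(t)},\vec d)$ in terms of $u_0^*$ minus error terms.

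The main obstacle — and the reason the rate degrades to $\eps^{1/3}$ — is that the optimality gap is stated in terms of the \emph{realized} reward $u_0(\vmu^{(t)}, \vx^{(t)})$, whereas the Lagrangian and the players' guarantees naturally control $u_0(\vmu^{(t)}, \vec d)$ (reward under the direct profile). I would bridge these by bounding $|u_0(\vmu^{(t)},\vx^{(t)}) - u_0(\vmu^{(t)},\vec d)|$ using the directness gap, i.e.\ by $|Z|\delta$, exploiting that $u_0$ is linear/bounded in the terminal-node distribution. The delicate balancing is that the Lagrangian penalty enters scaled by $\lambda$: the constraint-violation term contributes roughly $\lambda \cdot \delta$ to the optimality bound, while dividing the objective by $\lambda$ (as in the utility passed to $\mc R_\mediator$) scales the mediator's regret contribution by $\lambda$ as well. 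So the total error is on the order of $\eps/\alpha + \lambda\delta + (\text{something})/\lambda + |Z|\alpha$, and one optimizes jointly over $\alpha$ and $\lambda$. I would plug in $\alpha = \eps^{2/3}|Z|^{-1/3}$ and $\lambda = |Z|^{2/3}\eps^{-1/3}$ and verify that every term collapses to $O(\lambda^* |Z|^{4/3}\eps^{1/3})$; the factor $\lambda^*$ appears because the comparator guarantee only holds once $\lambda \ge \lambda^*$, forcing the penalty-induced error to carry $\lambda^*$. The bookkeeping to track all constants through the three coupled desiderata (S1), (S2), (S3) is the genuinely laborious part, but each individual estimate is a direct reuse of \Cref{lem:obedience-union}, the players' regret bound, and the mediator's regret bound against $\vmu^*$.
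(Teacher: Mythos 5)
Your high-level architecture matches the paper's (the Lagrangian saddle point from \Cref{prop:zerosum}, bridging realized versus direct mediator utility via \Cref{lem:obedience-union}, and a joint optimization over $\alpha$ and $\lambda$), but your player-side step contains a genuine error. You claim that, ``exactly as in \Cref{th:offline-then-steer},'' the directness bonus makes $\vec d_i$ strictly dominant in $\Gamma^{\vmu^{(t)}}$, so that summing players' regrets yields $\alpha\,\delta\, T \le nR(T)$ and hence $\delta \le \eps/\alpha$. That argument in the offline proof relied on $\vec d$ being a Nash equilibrium of the \emph{fixed} game, which gives the per-round inequality $u_i(\vec d) - u_i(\vx_i^{(t)}, \vec d_{-i}) \ge 0$. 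In \guideonline the iterate $\vmu^{(t)}$ is produced by the mediator's regret minimizer and is generally \emph{not} an equilibrium strategy, so $\vec d$ need not be a Nash equilibrium of $\Gamma^{\vmu^{(t)}}$: under the sandboxing payments \eqref{eq:payment}, player $i$ effectively maximizes $\alpha \vec d_i^\top \vx_i + u_i(\vmu^{(t)}, \vx_i, \vec d_{-i})$, whose maximizer can be far from $\vec d_i$ with a deviation benefit of constant order that swamps the tiny bonus $\alpha$. No-regret players may therefore play far from $\vec d$ on every round in which $\vmu^{(t)}$ is off-equilibrium, and the players' regret bounds alone cannot rule this out; $\delta \le \eps/\alpha$ is false in general.

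The paper closes exactly this hole by coupling the two learning processes rather than analyzing them separately: the losses seen by the players and the mediator are, up to additive constants, those of the zero-sum game \eqref{eq:online-steering-game}, so the average play $(\bar\vmu, \bar{\vec y})$ with $\vec y := \vec x - \vec d$ is an $\eps$-equilibrium of that game; comparing against the \emph{rescaled} dual solution $(\lambda, (\lambda^*/\lambda)\vec y^*)$ of \eqref{eq:saddle-point} then yields
\begin{align}
    -\vec d^\top \bar{\vec y} \;\le\; \frac{\lambda^*}{\lambda}|Z| + \frac{\eps}{\alpha} \;=:\; \delta.
\end{align}
Note the extra term $\frac{\lambda^*}{\lambda}|Z|$, absent from your bound: it is precisely the price of the mediator being off-equilibrium during learning, it is why $\lambda$ must be taken large ($\lambda = |Z|^{2/3}\eps^{-1/3}$) even for the directness and payment bounds (not only for the optimality gap), and it is how $\lambda^*$ enters the final bound quantitatively---not merely through the qualitative requirement $\lambda \ge \lambda^*$ for the comparator, as you suggest. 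Indeed, had your $\delta \le \eps/\alpha$ been correct, the $\lambda^*$ dependence would vanish from the directness gap and payments entirely, which should itself have been a red flag. The remainder of your outline---the mediator's comparison against $\vmu^*$ using nonpositivity of the constraint terms, the $|Z|\delta$ bridge between $u_0(\vmu^{(t)}, \vx^{(t)})$ and $u_0(\vmu^{(t)}, \vec d)$, and the error profile $\eps\lambda + |Z|\alpha\lambda + \lambda^*|Z|^2/\lambda + |Z|\eps/\alpha$---is consistent with the paper's proof once $\delta$ is corrected in this way.
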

The argument now works with the zero-sum formulation~\eqref{eq:saddle-point}, and leverages the fact that the agents' average strategies are approaching the set of Nash equilibria since they have vanishing regrets. Thus, each player's average strategy should be approaching the direct strategy, which in turn implies that the average utility of the mediator is converging to the optimal value, analogously to \Cref{th:offline-then-steer}. We provide the formal argument below.

\begin{proof}[Proof of~\Cref{th:online-steer}]
To simplify the notation, we assume without loss of generality that $u^*_\mediator = 0$. We will also use the change of variables $\vec y := \vec x - \vec d \in Y := X - \vec d$. With the payments and utility functions as specified, the losses given to the players and the mediator are, up to additive constants, exactly the losses that they would see if they were playing the zero-sum game
\begin{align}\label{eq:online-steering-game}
        \max_{\vec\mu \in \Xi} \min_{\vec y \in Y}~ \frac{1}{\lambda} \vec c^\top \vec\mu - \vec\mu^\top \mat A \vec y - \alpha \vec d^\top \vec y,
\end{align}
where $\mat A = \mqty[\mat A_1 &\cdots& \mat A_n]$, $\vec d = \mqty[\vec d_1^\top & \cdots & \vec d_n^\top]^\top $, and $\lambda \ge 1$. Now let $(\lambda^*, \vec y^*)$ be an optimal dual solution in \eqref{eq:saddle-point}. If we select $\lambda \ge \lambda^*$ and $\vec y' := (\lambda^* / \lambda)\vec y^*$, then $(\lambda, \vec y')$ is also an optimal dual solution in \eqref{eq:saddle-point}. Therefore,
    \begin{align}
        \max_{\vec\mu \in \Xi} \min_{\vec y \in Y} ~ \frac{1}{\lambda} \vec c^\top \vec\mu -  \vec\mu^\top \mat A \vec y - \alpha \vec d^\top \vec y \le -\alpha \vec d^\top \vec y',
    \end{align}
    since it is assumed that $u^*_\mediator = 0$. Further, we know that $(\bar\vmu, \bar{\vec y})$ is an $\eps$-Nash equilibrium of the above zero-sum game since $(R_\mediator(T) + 4nR(T))/T = \eps$; in particular, we have that\footnote{A technical comment here: $-\vec d^\top \vec y$ is {\em nonnegative}, and takes its {\em minimum} value at $\vec y = 0$.}
    \begin{align}
      - \alpha \vec d^\top \bar{\vec y} \le \max_{\vec\mu \in \Xi} ~ \frac{1}{\lambda} \vec c^\top \vec\mu -  \vec\mu^\top \mat A \bar{\vec y} - \alpha \vec d^\top \bar{\vec y} \le -\alpha \vec d^\top \vec y' + \eps, 
    \end{align}
    or, rearranging,
    \begin{align}
        -\vec d^\top \bar{\vec y} \le -\frac{\lambda^*}{\lambda}\vec d^\top \vec y^* + \frac{\eps}{\alpha} \le \frac{\lambda^*}{\lambda} |Z| + \frac{\eps}{\alpha} := \delta.
    \end{align}
    Thus, by \Cref{lem:obedience-union}, the average payment is bounded by $|Z|(2\delta + \alpha)$.
    We now turn to the mediator's average utility. The equilibrium value of \eqref{eq:online-steering-game} is at least $-\alpha|Z|$ (achieved by the optimal equilibrium), in turn implying that the current value in the game under $(\Bar{\vmu}, \Bar{\vec{y}})$ is at least $-\alpha|Z| - \eps$. So,
    \begin{align}
        \E_{t \in \range{T}} u_\mediator(\vmu^{(t)}, \vec d) = \vec c^\top \bar{\vmu}\ge \min_{\vec y \in Y}~ \vec c^\top \bar{\vec\mu} - \lambda \qty[\bar{\vec\mu}^\top \mat A \vec y - \alpha \vec d^\top \vec y] \ge -\lambda(\alpha|Z| + 2\eps).
    \end{align}
    By \Cref{lem:obedience-union} again,
\begin{align}
    \abs{\E_{t \in \range{T}} u_\mediator(\vec\mu^{(t)}, \vec x^{(t)}) - u_\mediator(\vec\mu^{(t)}, \vec d)} \le \norm{\E_{t \in \range{T}} \hat{\vec x}^{(t)} - \hat{\vec d}}_1 \le \E_{t \in \range{T}} \norm{\hat{\vx}^{(t)} - \hat{\vec d}}_1 \le |Z|\delta,
\end{align}
so the optimality gap is bounded by $2\eps\lambda + |Z| \alpha \lambda  + |Z| \delta$, and the directness gap is bounded by $|Z|\delta$. It thus suffices to select hyperparameters $\alpha$ and $\lambda$ so as to minimize the following expression, which is an upper bound on all three gaps:
\begin{align}
  2\eps\lambda + |Z| \alpha \lambda  + 2 |Z| \delta = 2\eps\lambda + |Z| \alpha\lambda + 2|Z|^2 \frac{\lambda^*}{\lambda} + 2|Z| \frac{\eps}{\alpha}.
\end{align}
In particular, setting the hyperparameters as in the theorem statement and plugging them into the expression above, we arrive at the bound
\begin{align}
    2 \eps^{2/3} |Z|^{2/3} + |Z|^{4/3} \eps^{1/3} + 2\lambda^* |Z|^{4/3} \eps^{1/3} + 2|Z|^{4/3} \eps^{1/3} \le 7 \lambda^*  |Z|^{4/3} \eps^{1/3},
\end{align}
as claimed.
\end{proof}

It is worth noting that, despite the fact that it would speed up the convergence, we cannot set $\lambda$ and $\alpha$ dependent on $\lambda^*$, because we do not know $\lambda^*$ {\em a priori}.

Algorithm \guideonline can also be used to steer to optimal equilibria in other notions of equilibrium, such as {\em communication equilibrium}~\cite{Forges86:Approach,Myerson86:Multistage}, by using appropriate constructions of mediator-augmented games. The Bayes-correlated equilibrium is the most natural and general of these notions, so it is the one we use in our paper. For a more general discussion of mediator-augmented games, see~\citet{Zhang22:Polynomial}.

\guideonline has a further guarantee that \guideoffline does not, owing to the fact that it learns an equilibrium online: it works even when the players' sets of deviations, $X_i$, is not known upfront. In particular, the following generalization of \Cref{th:online-steer} follows from an identical proof. %

\begin{corollary}
    Suppose that each player $i$, unbeknownst to the mediator, is choosing from a subset $Y_i \subseteq X_i$ of strategies that includes the direct strategy $\vec d_i$. Then, running \Cref{th:online-steer} with the same hyperparameters yields the same convergence guarantees, except that the mediator's utility converges to its optimal utility against the {\em true} deviators, that is, a solution to \eqref{eq:saddle-point} with each $X_i$ replaced by $Y_i$.
\end{corollary}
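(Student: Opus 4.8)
The plan is to establish the corollary by appealing directly to the proof of \Cref{th:online-steer} and verifying that each step remains valid when the players' strategy sets $X_i$ are replaced by unknown subsets $Y_i \subseteq X_i$ containing the direct strategies $\vec d_i$. First I would observe that the key structural object in the proof of \Cref{th:online-steer}---the zero-sum game \eqref{eq:online-steering-game}---is driven entirely by the regret guarantees of the players and the mediator, and by the dual characterization of \Cref{prop:zerosum}. The crucial point is that a no-regret player restricted to $Y_i$ still satisfies the regret bound $R(T)$, now with respect to the best strategy \emph{in $Y_i$}; since $\vec d_i \in Y_i$, the comparator inequality used in \Cref{th:offline-then-steer} (namely $\sum_i \vec d_i^\top(\vec d_i - \bar\vx_i) \le \eps/\alpha$) continues to hold verbatim, because that inequality only compares each player's realized play to the direct strategy $\vec d_i$, which remains an available comparator.

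Next I would redefine the relevant saddle-point problem with $X_i$ replaced by $Y_i$, \ie consider \eqref{eq:saddle-point} over the product $\prod_i Y_i$ in the inner minimization. By \Cref{prop:zerosum} applied to this restricted game, there is a corresponding constant $\lambda^*$ (for the true deviators $Y_i$) whose optimal solutions are the optimal equilibria \emph{against the true deviators}. I would then note that $\lambda^*$ for the $Y_i$-game is no larger than the original $\lambda^*$, since shrinking the set of deviations only relaxes the incentive constraints; hence the hyperparameter choices in the theorem statement, which depend only on $\eps$ and $|Z|$ and not on $\lambda^*$, remain valid and still dominate the relevant $\lambda^*$. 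This is precisely why the proof is insensitive to the substitution: \guideonline never uses knowledge of the players' deviation sets, only their regret bounds and the game $\Gamma$.

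The remaining steps---bounding $-\vec d^\top\bar{\vec y}$ by $\delta$, bounding the mediator's average utility from below, invoking \Cref{lem:obedience-union} to control the payments and directness gap, and combining these into the optimality-gap bound---carry over unchanged, because \Cref{lem:obedience-union} is a purely algebraic statement about averages of sequence-form strategies $\vx_i^{(t)}$ and the payment formula \eqref{eq:payment}, making no reference to the ambient feasible set from which those strategies were drawn. The only interpretive change is that the optimality guarantee is now measured against the optimal equilibrium of the $Y_i$-restricted augmented game rather than the full game, exactly as stated. I expect the main (though mild) obstacle to be the bookkeeping around $\lambda^*$: one must argue carefully that the dual threshold for the restricted game does not exceed that of the full game, so that selecting $\lambda$ large (as the hyperparameters do for large $T$) still dominates it, and that the final bound $7\lambda^*|Z|^{4/3}\eps^{1/3}$ should be read with $\lambda^*$ being the constant for the \emph{true} deviation sets. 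Since the hyperparameters are chosen independently of $\lambda^*$, no circularity arises, and the identical chain of inequalities yields the claimed guarantees.
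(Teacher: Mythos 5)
Your proposal is correct and matches the paper, which proves this corollary simply by observing that the proof of \Cref{th:online-steer} goes through verbatim with each $X_i$ replaced by $Y_i$: the comparator $\vec d_i \in Y_i$ preserves the regret inequality, \Cref{lem:obedience-union} is set-agnostic, and the hyperparameters never reference the deviation sets. One caveat: your intermediate claim that the dual threshold $\lambda^*$ for the $Y_i$-restricted game is no larger than that of the full game does not follow just from "relaxing the incentive constraints" (dual multipliers are not monotone in the constraint set in this way), but it is also unnecessary---as you yourself conclude, one simply invokes \Cref{prop:zerosum} for the restricted game and reads the final bound $7\lambda^*|Z|^{4/3}\eps^{1/3}$ with $\lambda^*$ being the restricted game's own constant, which the $\lambda^*$-independent hyperparameters accommodate automatically.
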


At this point, it is very reasonable to ask whether it is possible to perform {\em online} steering with {\em trajectory} feedback. In {\em normal-form} games, as with offline setting, there is minimal difference between the trajectory- and full-feedback settings. This intuition carries over to the trajectory-feedback setting: {\sc OnlineSteer} can be adapted into an online trajectory-feedback steering algorithm for normal-form games, with essentially the same convergence guarantee. We defer the formal statement of the algorithm and proof to \Cref{app:bandit-online-nf}.\looseness-1

The algorithm, however, fails to extend to the {\em extensive-form} online trajectory-feedback setting, for the same reasons that the {\em offline} full-feedback algorithm fails to extend to the online setting. We leave extensive-form online trajectory-feedback steering as an interesting open problem.

\section{Experimental Results}
\label{sec:experiments}

We ran experiments with our {\sc TrajectorySteer} algorithm (\Cref{def:banditsteer}) on various notions of equilibrium in extensive-form games, using the {\sc ComputeThenSteer} framework suggested by 
\Cref{def:computethensteer}. Since the hyperparameter settings suggested by \Cref{def:banditsteer} are very extreme, in practice we fix a constant $P$ and set $\alpha$ dynamically based on the currently-observed gap to directness. We used CFR+ \cite{Tammelin14:Solving} as the regret minimizer for each player, and precomputed a welfare-optimal equilibrium with the LP algorithm of \citet{Zhang22:Polynomial}. In most instances tested, a small constant $P$ (say, $P \le 8$) is enough to steer CFR+ regret minimizers to the exact equilibrium in a finite number of iterations. Two plots exhibiting this behavior are shown in \Cref{fig:body-experiments}. More experiments, as well as descriptions of the game instances tested, can be found in \Cref{app:experiments}.

\begin{figure}[!hb]

    \includegraphics[width=.346\textwidth]{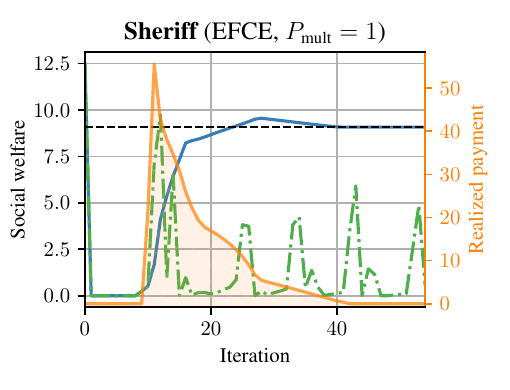}%
    \hskip-2mm%
    \includegraphics[width=.346\textwidth]{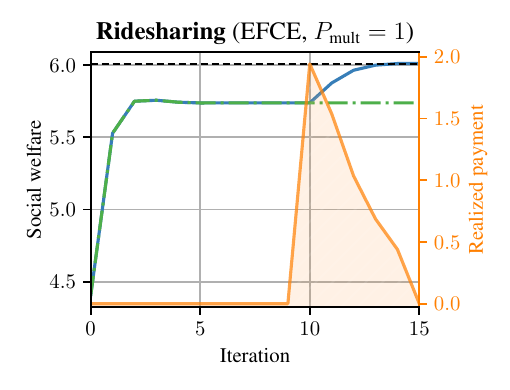}%
    \hskip-4mm%
    \includegraphics[width=.346\textwidth]{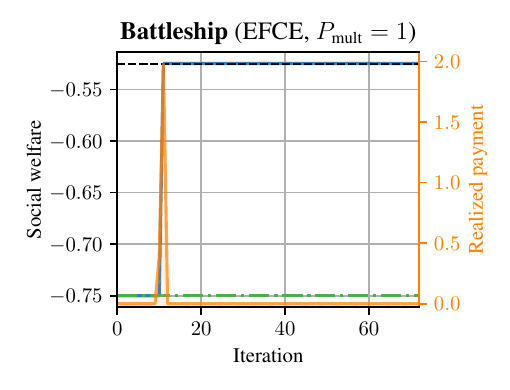}%
    \caption{Sample experimental results. The blue line in each figure is the social welfare (left y-axis) of the players {\em with} steering enabled. The green dashed line is the social welfare {\em without} steering. The yellow line  gives the payment (right y-axis) paid to each player. The flat black line denotes the welfare of the optimal equilibrium. The panels show the game, the equilibrium concept (in this figure, always EFCE). In all cases, the first ten iterations are a ``burn-in'' period during which no payments are issued; steering only begins after that.}\label{fig:body-experiments}
\end{figure}

\section{Conclusions and Future Research}

We established that it is possible to steer no-regret learners to optimal equilibria using vanishing rewards, even under trajectory feedback. There are many interesting avenues for future research. First, is there a natural {\em trajectory-feedback, online} algorithm that combines the desirable properties of both \guideonline and {\sc TrajectorySteer}? Second, this paper did not attempt to provide optimal rates, and their improvement is a fruitful direction for future work. Third, are there algorithms with less demanding knowledge assumptions for the principal, \emph{e.g.}, steering without knowledge of utility functions? Finally, our main behavioral assumption throughout this paper is that the regret players incur vanishes in the limit. Yet, stronger guarantees could be possible when specific no-regret learning dynamics are in place, such as mean-based learning~\citep{braverman2018selling}; see~\citep{Vlatakis-Gkaragkounis20:No,Giannou21:Rate,Giannou21:Survival} for recent results in the presence of \emph{strict} equilibria. Concretely, it would be interesting to understand the class of learning dynamics under which the steering problem can be solved with a finite cumulative budget.

\section*{Acknowledgements}

The work of Prof. Sandholm's research group is funded by the National Science Foundation under grants IIS1901403, CCF-1733556, and the ARO under award W911NF2210266. McAleer is funded by NSF grant \#2127309 to the Computing Research Association for the CIFellows 2021 Project. The work of Prof. Gatti's research group is funded by the FAIR (Future Artificial Intelligence Research) project, funded by the NextGenerationEU program within the PNRR-PE-AI scheme (M4C2, Investment 1.3, Line on Artificial Intelligence). Conitzer thanks the Cooperative AI Foundation and Polaris Ventures (formerly the Center for Emerging Risk Research) for funding the Foundations of Cooperative AI Lab (FOCAL). Andy Haupt was supported by Effective Giving. Andrea Celli is supported by the European Union - Next Generation EU, component M4.C2, investment 1.1. - CUP: J53D23007170001. We thank Dylan Hadfield-Menell for helpful conversations.

\bibliographystyle{plainnat}
\bibliography{dairefs}

\newpage
\appendix
\setlist{itemsep=\parskip,parsep=\parskip}

\section{Details on Figure \ref{fig:payments}}
\label{sec:figures}

In this section, we elaborate on \Cref{fig:payments}, and we provide some further pertinent illustrations. As shown in \Cref{th:bandit-lower-bound}, this is a challenging instance for steering no-regret learners in the trajectory-feedback setting. The results illustrated in \Cref{fig:payments} correspond to each player employing multiplicative weights update (MWU) under full feedback with learning rate $\eta \defeq 0.1$.

Furthermore, we also experiment with each player using a variant of EXP3 \citep{Auer02:Nonstochastic} with exploration parameter $\epsilon \defeq 5\%$. We employ our steering algortihm in the trajectory-feedback setting with different \emph{potential} payments $P$ and parameter $\alpha = 0$, leading to the results illustrated in~\Cref{fig:furtherpayments}.

\begin{figure}[H]
    \centering
    \includegraphics[width=\textwidth]{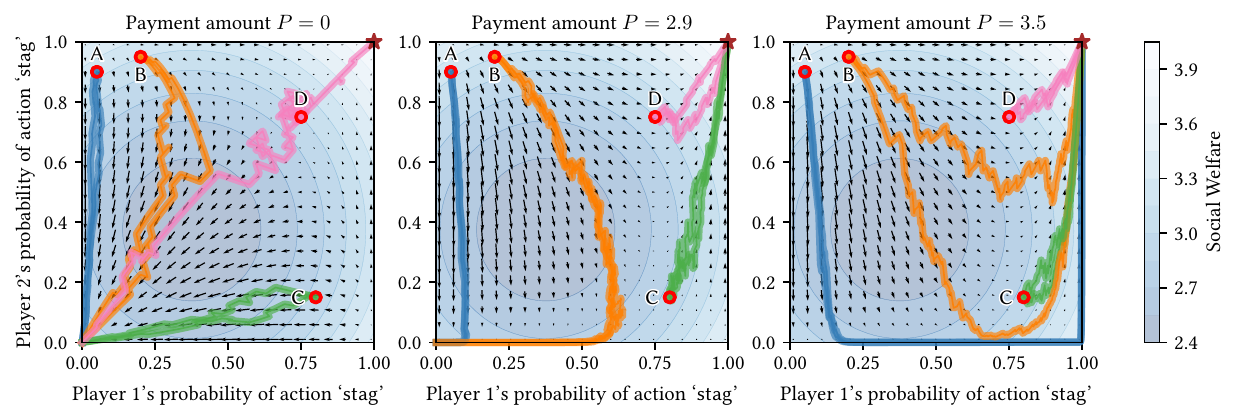}
    \caption{The trajectories of EXP3 algorithms under different random initializations and vanishing payments. Trajectories with the same color correspond to the same initialization but under different realizations of the players' sampled actions.}
    \label{fig:furtherpayments}
\end{figure}

\section{Omitted Proofs}
\label{sec:proofs-steering}

In this section, we provide the proofs omitted from the main body.

\subsection{Proof of Lemma~\ref{lem:obedience-union}}

\aux*

\begin{proof}
    Let $\delta_i := \vec d_i^\top (\vec d_i - \bar{\vx}_i)$ for any player $i \in \range{n}$. Then, we have that 
    $$\min_{z : \vec d_i[z] = 1} \vec {\bar x}_i[z] \ge 1- \delta_i, $$ 
    which in turn implies that
$\max_{z : \vec d_i[z] = 0} \vec {\bar x}_i[z] \le \delta_i.$ Now let $N \subseteq \range{n}$. If $z \in Z$ is such that $\vec d_N[z] = 1$,
\begin{align}
    \bar{\vx}_N[z] = \E_{t \in \range{T}} \vx_N^{(t)}[z] = \E_{t \in \range{T}} \prod_{j \in N} \vx_j^{(t)}[z] \ge \E_{t \in \range{T}} \prod_{j \in N} \qty(1 - \delta_j) \ge 1 - \sum_{j \in N} \delta_j = 1 - \delta.
\end{align}
Further, if $\vec d_j[z] = 0$ for some $j\in N$,
\begin{align}
    \bar{\vx}_N[z] \le \bar{\vx}_j[z] \le \delta_j \le \delta.
\end{align}
Thus,
\begin{align}
    \E_{t \in \range{T}} \norm{\hat{\vx}_N^{(t)} - \hat{\vec d}_N}_1 &= \E_{t \in \range{T}} \left( \sum_{z : \hat{\vec{d}}_N[z] = 0} (\hat{\vec{x}}^{(t)}_N[z] - \hat{\vec{d}}_N[z] ) + \sum_{z : \hat{\vec{d}}_N[z] = 1} (\hat{\vec{d}}_N[z] - \hat{\vec{x}}^{(t)}_N[z]) \right) \notag \\
    &= \norm{ \E_{t \in \range{T}} \hat{\vx}_N^{(t)} - \hat{\vec d}_N}_1 = \norm{ \bar{\vx}_N - \hat{\vec{d}}_N}_1 \le |Z| \delta, \label{align:extreme}
\end{align}
since we have shown that $| \Bar{\vx}_N[z] - \hat{\vec{d}}_N[z]| \leq \delta$ for any $z \in Z$. This establishes the first part of the claim. Next, the average payments~\eqref{eq:payment} can by bounded for any player $i \in \range{n}$ as
\begin{multline}
    \E_{t \in \range{T}} \left[\qty[u_i(\vx_i^{(t)}, \vec d_{-i}) - u_i(\vx_i^{(t)}, \vx_{-i}^{(t)})] \right. \\- \left.\min_{\vx_i' \in X_i} \qty[u_i(\vx_i', \vec d_{-i}) - u_i(\vx_i', \vx_{-i}^{(t)})] + \alpha \vec d_i^\top \vx_i^{(t)}\right]
    \\\le 2 \E_{t \in \range{T}} \norm{\hat{\vec x}_{-i}^{(t)} - \hat{\vec d}_{-i}}_1 + \alpha |Z| \le |Z|(2\delta + \alpha),
\end{multline}
where we used the normalization assumption $|u_i(\cdot)| \leq 1$, and the fact that $\vec{d}_i^\top \vx_i^{(t)} \leq |Z|$. This concludes the proof.
\end{proof}

\subsection{Proof of Theorem~\ref{th:bandit-offline-steer}}

Next, we provide the proof of \Cref{th:bandit-offline-steer}.

\banditofflinesteer*

We use the following notation.
\begin{itemize}
\item The set $D_S$ is the set of nodes at which all players in set $S$ have played directly: $D_S = \{ z \in Z : \vec d_i[z] = 1 \forall i \in S \}$. The set $D'_S = Z \setminus D_S$ is its complement.
\item $\vx$ is a random variable for the correlated strategy profile played by all players through the $T$ timesteps. That is, $\vx$ is a uniform sample from $\{ \vx^{(1)}, \dots, \vx^{(T)}\}$.
\item $\pi(S|\vx)$ is the probability that a terminal node from set $S$ is reached, given that the mediator plays $\vmu$ and the players play the (possibly correlated) strategy profile $\vx$. That is, $\pi(S|\vx) = \Pr_{z \sim (\vmu, \vx)} [z \in S]$.
\item $\tilde u_i(\vx) = u_i(\vx) + \E_{z \sim (\vmu, \vx)} q_i(z)$ is the expected utility for player $i$, including payment, under profile $(\vmu, \vx)$. 
\item $u_i(\vec y_i - \vx_i, \vx_{-i}) := u_i(\vec y_i, \vx_{-i}) - u_i(\vx_i, \vx_{-i})$ is player $i$'s advantage for playing $\vec y_i$ instead of $\vec x$. $\tilde u_i(\vec y_i - \vx_i, \vx_{-i})$ and $\pi(z|\vec y_i - \vx_i, \vx_{-i})$ are defined similarly.
\end{itemize}

Let $\eps = R(T)/T$. Then after $T$ timesteps, since the players are no-regret learners, their average joint strategy profile will be an $P \eps$-NFCCE of the extensive-form game with the payments added. 

Intuitively, the proof will go as follows. We will show that, for $P$ sufficiently large, each player's incentive to be direct will be {\em at least} as great as it would have been if everyone else were also direct, plus $\alpha$. Then it will follow from the fact that $\vmu$ is an equilibrium, and picking $\alpha \gg P \eps$,  that all players must therefore be direct.
 We first prove a lemma. Informally, the lemma states that, when any player $i$ deviates, all other players must be direct.

\begin{lemma}\label{le:bandit-steering1}
Let $z$ be any node with $\vec d_i[z] = 0$, that is, any node at which player $i$ has deviated. Then
$\abs{\pi(z|\vx_i, \vec d_{-i} - \vx_{-i})}\le \gamma := n\eps + \sum_j\delta_j/P$, where $\delta_j := u_j(\vx_j - \vec d_j, \vec x_{-j})$ is player $j$'s current deviation benefit.
\end{lemma}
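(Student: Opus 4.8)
The plan is to read the target bound $\gamma = n\eps + \sum_j \delta_j/P$ as an accounting device: each summand $\delta_j/P$ (together with an $\eps$) will turn out to bound how much probability mass player $j$ can afford to place on nodes where $j$ \emph{itself} has deviated, and I will show that the reach-probability discrepancy $\pi(z\mid \vx_i, \vd_{-i} - \vx_{-i}) = \E_\vx[\pi(z\mid\vx_i,\vd_{-i}) - \pi(z\mid\vx_i,\vx_{-i})]$ at an $i$-deviation node is controlled entirely by these per-player masses. The two ingredients are the no-regret hypothesis applied to the \emph{direct deviation} $\vd_j$ of each player, and the specific shape of the \textsc{TrajectorySteer} payment $q_j(z) = \alpha\hat{\vd}[z] + P\,\vd_j[z](1-\hat{\vd}[z])$.

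First I would extract, for each player $j$, a quantitative statement of ``$j$ is nearly direct.'' Since the players are no-regret learners in the payment-augmented game (whose utilities $\tilde u_j = u_j + \E_{z\sim(\vmu,\cdot)} q_j(z)$ have range $[0,P+1]$), deviating to the fixed strategy $\vd_j$ gains at most $(P+1)\eps$ in expectation. Expanding this inequality, the base-utility part contributes exactly $-\delta_j$ by definition of the deviation benefit, while the payment part is nonnegative and, because $q_j$ pays $P$ at precisely the nodes where $j$ is direct but some other player is not, is bounded \emph{below} by $P$ times the expected mass that $j$ would additionally capture at such nodes by switching to $\vd_j$. Rearranging gives that this mass is at most $(\delta_j + (P+1)\eps)/P \le \delta_j/P + 2\eps$; morally, the probability that $j$ deviates is $O(\delta_j/P + \eps)$.

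Next I would relate the left-hand side to these masses by a case split on $z$ (recall $\vd_i[z]=0$). Writing $\pi(z\mid\vy) = \rho(z)\prod_k \vy_k[z]$ with $\rho(z)$ collecting the chance and mediator factors, the quantity equals $\rho(z)\vx_i[z]\bigl(\prod_{j\neq i}\vd_j[z] - \prod_{j\neq i}\vx_j[z]\bigr)$. If every other player is direct at $z$, i.e.\ $z\in D_{-i}$, the first product is $1$, and a telescoping expansion of $1-\prod_{j\neq i}\vx_j[z]$ over the $n-1$ other players writes the discrepancy as a sum of per-player terms, each to be charged against the deviation bound of the corresponding player $j$. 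If instead some $j_0\neq i$ is not direct at $z$, the first product vanishes and the discrepancy is simply $-\pi(z\mid\vx)$, whose magnitude is at most the total probability that $j_0$ deviates, already bounded in the previous step. Summing the per-player contributions over the at most $n$ players and substituting yields $\sum_j \delta_j/P + n\eps = \gamma$.

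The step I expect to be the main obstacle is making the $D_{-i}$ case rigorous. The no-regret argument naturally controls quantities weighted by the \emph{full} product $\prod_{k\neq j}\vx_k[z]$ of the \emph{other} players' reach probabilities, whereas the telescoped terms carry only partial products of the $\vx_k[z]$; reconciling these, so that each telescoped term is genuinely dominated by a quantity the deviation bound controls, is the delicate bookkeeping. This is exactly where the sequence-form multilinearity and the tree structure of \emph{where} along the path each deviation occurs must be handled carefully, and where the precise choice of which player's deviation to charge each term against matters. By contrast, the residual manipulations are routine: the $\alpha$-term is a lower-order $O(\alpha)$ contribution that can be absorbed, and the normalization $(P+1)/P \le 2$ for $P\ge 1$ only costs constant factors.
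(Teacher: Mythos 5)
Your overall strategy is the same as the paper's: extract from the no-regret hypothesis, applied to the deviation $\vd_j$, a per-player bound driven by the $P$-payment at ``catch'' nodes (where $j$ is direct but some other player is not), then split on whether all players other than $i$ are direct at $z$. Your Step 1 is a correct contrapositive packaging of the paper's contradiction argument: the \emph{increase in $j$'s catch-mass} upon switching to $\vd_j$ is at most $\delta_j/P + O(\eps)$. The genuine gap is in how you use it in Case B. There you bound $\pi(z\mid\vx)$ by ``the total probability that $j_0$ deviates'' and assert this was ``already bounded in the previous step.'' It was not: Step 1 bounds a counterfactual catch-mass increase, not a deviation probability, and the deviation probability is genuinely \emph{not} $O(\delta_j/P+\eps)$ --- in the stag-hunt-style examples of this very paper, every player can deviate with probability $1$ while each $\delta_j$ is tiny and $P$ is large. (Indeed, deviation probabilities $\Delta_i(\vx)$ are only controlled much later in the theorem's proof, via the $\alpha$-bonus; if they were directly bounded by $\gamma$, this lemma would nearly subsume the theorem, which is a warning sign.)

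The missing idea is the tree-structural one the paper relies on: order the deviation points along the root-to-$z$ path and charge the \emph{later} deviator. If $i$'s deviation point precedes $j_0$'s, then when $j_0$ switches to $\vd_{j_0}$ the mass that reached $z$ is rerouted through terminals at which $i$'s deviation is already locked in, hence entirely into catch nodes for $j_0$, and Step 1 yields $\pi(z\mid\vx)\le \delta_{j_0}/P + O(\eps)$. But if $j_0$ deviates first, charging $j_0$ fails: the rerouted flow may enter a subtree where everyone is direct, generating no $P$-payment at all --- e.g., $\vx_{j_0}$ deviates with probability $1$ at its first infoset and $\vx_i$ deviates only in the subtree following $j_0$'s deviation, so $\pi(z\mid\vx)$ can be $1$ while $j_0$'s catch-mass increase is $0$. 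In that configuration one must charge player $i$'s own switch to $\vd_i$, which is precisely why $\gamma$ includes $\delta_i$ in the sum over all players. The same observation that catch nodes are $j$-direct (so their reach probability increases pointwise when $j$ goes direct, and flow diverted by $j$ off the path to $i$'s deviating edge is recaptured entirely inside $i$-deviated, $j$-direct terminals) is also what resolves the Case A bookkeeping you flagged but left open; the paper's second case carries this out via its ``probability of leaving the path to $ha$'' argument. As written, then, your proposal has a step that would fail (Case B) and an acknowledged but unresolved core difficulty (Case A), both fixed by the same path-ordering device.
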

\begin{proof}
Assume without loss of generality that $i = 1$, and consider two cases.
\begin{enumerate}
\item $\vec d_{j}[z] = 0$ for some $j \ne i$---that is, some other player has also deviated. Then $ \pi(z|\vx_i, \vec d_{-i}) = 0$. Assume for contradiction that $ \pi(z|\vx) > \gamma$. Let $h_i, h_j \prec z$ be the two deviation points---that is, $\vec d_i[h_i] = 1$ but $\vec d_i[h_i a_i] = 0$ where $h_i a_i \preceq z$, and similar for $h_j$. Suppose without loss that $h_i \prec h_j$. Now consider player $j$'s incentive. If player $j$ were to switch to playing $\vec d_j$, its expected payment increases by at least $\gamma P$, and its expected utility (sans payment) decreases by $\delta_j$, by definition. When $\gamma \ge \eps + \delta_j/P$, this produces a contradiction.
\item $\vec d_j[z] = 1$ for all $j \ne i$. Then $\pi(z|\vx_i, \vec d_{-i}) \ge \pi(z|\vx)$, so we need to show that $\pi(z|\vx_i, \vec d_{-i}) - \pi(z|\vx^{(t)}) \le \gamma$. That is, other players will almost always play to catch player $i$ deviating, whenever possible. Suppose not. Let $h \prec z$ be the point where player $i$ deviated (that is, $\vec d_i[h] = 1$ but $\vec d_i [ha_1] = 0$ where $ha_1 \preceq z$). Let $a_0$ be the direct action at $h$. Notice that, for any player $j \ne i$, if $j$ shifts to playing the direct strategy, the probability of leaving the path to $ha$ before reaching $ha$ itself cannot increase by more than $\eps + \delta_i/P$: otherwise, player $j$'s expected utility would be increasing by more than $\delta_i$, a contradiction. If all $n-1$ players allocate their deviations in this manner, and even if the remaining $(n-1) \delta_i/P$ probability of leaving path $ha$ is then all allocated to node $z$, the reach probability of $z$ could not have increased by more than $\sum_j (\eps + \delta_j / P)$. Thus, when $\gamma$ is larger than this value, we have a contradiction. \qedhere
\end{enumerate}
\end{proof}
The rest of the proof is structured as follows. We will first show, roughly speaking, that {\em player $i$'s deviation benefit}---that is, its advantage for playing $\vx_i^{(t)}$ at each timestep $t$ instead of playing $\vec d_i$---is {\em smaller} against the opponent strategies $\vx_{-i}^{(t)}$ than it would be against $\vec d_{-i}^{(t)}$, modulo a small additive error. Then, the proof will follow from the fact that $\vec d$ is an equilibrium against $\vmu$, so therefore all players should play according to $\vec{d}$.
\begin{align}
 &  \tilde u_i(\vx) - \tilde u_i(\vx_i, \vec d_{-i})
 \\&= \sum_{z \in D_i \cap D_{-i}} \tilde u_i(z) \qty[\pi(z|\vx) - \pi(z|\vx_i, \vec d_{-i})] 
 + \sum_{z \in D_i \cap D'_{-i}} \tilde u_i(z) \pi(z|\vx)
 \\&\phantom{=}+ \underbrace{\sum_{z \in D'_i} u_i(z) \qty[\pi(z|\vx) - \pi(z|\vx_i, \vec d_{-i})] }_{\le \gamma |Z|}
\\&\le \sum_{z \in D_i \cap D_{-i}} \tilde u_i(z) \qty[\pi(z|\vx) - \pi(z|\vx_i, \vec d_{-i})] 
 + \sum_{z \in D_i \cap D'_{-i}} \tilde u_i(z) \pi(z|\vx) + \gamma |Z|
\end{align}
where we use, in order, the definition of expected utility, the fact that $u_i(z) = \tilde u_i(z)$ when $\vec d_i[z] = 0$ and $\pi(z|\vx) = 0$ whenever $\vx_i[z] = 0$ for any $i$, and finally \Cref{le:bandit-steering1}. Similarly,
\begin{align}
 &  \tilde u_i(\vec d_i, \vx_{-i}) - \tilde u_i(\vec d)
 \\&= \sum_{z \in D_i \cap D_{-i}} \tilde u_i(z) \qty[\pi(z|\vec d_i, \vx_{-i}) - \pi(z|\vec d)] 
 + \sum_{z \in D_i \cap D'_{-i}} \tilde u_i(z)  \pi(z|\vec d_i, \vx_{-i}).
\end{align}
Thus,
\begin{align}
 &P\eps -  \qty[ \tilde u_i(\vec d) - \tilde u_i(\vx_i, \vec d_{-i})] 
 \\&\ge \qty[ \tilde u_i(\vec d_i, \vx_{-i}) - \tilde u_i(\vx) ] -  \qty[ \tilde u_i(\vec d) - \tilde u_i(\vx_i, \vec d_{-i})]
 \\&\ge \sum_{z \in D_i \cap D_{-i}} \tilde u_i(z) \underbrace{ \qty[\pi(z|\vec d_i - \vx_i, \vx_{-i}) - \pi(z|\vec d_i - \vx_i, \vec d_{-i})] }_{\le 0}
  \\&\phantom=+ 2 \sum_{z \in D_i \cap D'_{-i}} \pi(z|\vec d_i - \vx_i, \vx_{-i})  - \gamma |Z|
\\&\ge 2 \sum_{z \in D_i \cap D_{-i}}  \qty[\pi(z|\vec d_i - \vx_i, \vx_{-i}) - \pi(z|\vec d_i - \vx_i, \vec d_{-i})]
\\&\phantom=+ 2 \sum_{z \in D_i \cap D'_{-i}} \pi(z|\vec d_i - \vec x_i, \vx_{-i})    - \gamma |Z|
\\&= 2 \qty[ \pi(D'_i|\vec x_i, \vec d_{-i}) - \pi(D'_i|\vec x) ]- \gamma |Z| \ge - 3\gamma |Z|.
\end{align}
The first inequality uses the fact $\pi(z|\vec d_i, \vx_{-i}) - \pi(z|\vx) \ge 0$ when $\vec d_i[z] = 1$ and $\tilde u_i(z) \ge P \ge 2$ when $\vec d_i[z] = 1$ and $\vec d_{-i}[z] = 0$. The quantity in braces is nonpositive because for any profile $\vx$, setting $\vx_{-i} = \vec d$ only increases the probability that player $i$ is the one to deviate from the path to $z$. The second inequality uses the nonpositivity of the quantity in the braces, and the fact that $\tilde u_i(z) = u_i(z) + \alpha \le 2$. 

Now we look at the remaining quantity, $ \tilde u_i(\vec d) - \tilde u_i(\vx_i, \vec d_{-i})$, which is simply the negative deviation of benefit of Player $i$'s strategy $ \vec x_i$ if all other players were direct. Indeed, since we know that $\vmu$ is an equilibrium, we have
\begin{align}
   & \tilde u_i(\vec d) - \tilde u_i(\vx_i, \vec d_{-i})
    \notag \\&= \underbrace{  \qty[ \tilde u_i(\vec d) - u_i(\vec d) ] }_{= \alpha}  - \underbrace{  \qty[ \tilde u_i(\vx_i, \vec d_{-i})   -  u_i(\vx_i, \vec d_{-i}) ]}_{= \alpha (1 -  \Delta_i(\vx_i, \vec d_{-i}))} +\underbrace{  \qty[ u_i(\vec d) -  u_i(\vx_i, \vec d_{-i})]  }_{\ge 0}
    \notag \\&\ge \alpha \Delta_i(\vx_i, \vec d_{-i})
    \ge \alpha  \Delta_i(\vx) - \gamma |Z|,\label{eq:bareps-inject}
\end{align}
where the final inequality again uses \Cref{le:bandit-steering1} and $\Delta_i(\vec x) := \sum_{z : \vec d_i[z] = 0} \pi(z|\vec x)$

Now, notice that $\delta_i \le \Delta_i(\vec x)$, by definition. Substituting into the previous inequality and \Cref{le:bandit-steering1}, we have
\begin{align}
    \alpha  \Delta_i(\vx) - 4\qty(n \eps + \frac{\sum_j \Delta_j(\vec x)}{P})|Z| \le P \eps,
\end{align}
or, rearranged,
\begin{align}
    \alpha  \Delta_i(\vx) - 4|Z| \frac{\sum_j \Delta_j(\vec x)}{P} \le (P + 4n) \eps \le 2P\eps
\end{align}
when $P \ge 4n$. Summing over all players $i$ yields
\begin{align}
    \alpha  \Delta - 4|Z|\frac{\Delta}{P} \le (P + 4n) \eps \le 2P\eps
\end{align}
where $\Delta = \sum_i \Delta_i(\vec x)$, or, rearranging,
\begin{align}
    \Delta \le \frac{2P\eps}{\alpha - 4|Z|/P}.
\end{align}
Both the payments from the mediator and the gap to optimal value are thus bounded by 
\begin{align}
    \alpha + P \Delta \le \alpha + \frac{2P^2\eps}{\alpha - 4|Z|/P}.
\end{align}
Now taking $\alpha = 4|Z|^{1/2} \eps^{1/4}$ and $P = 2|Z|^{1/2} / \eps^{1/4} $ gives the desired bounds.

\section{Trajectory-Feedback Online Steering in Normal-Form Games}
\label{app:bandit-online-nf}

Essentially, the algorithm replicates the {\sc OnlineSteer} algorithm (\Cref{def:online-steer}) by randomly sampling. In the normal-form setting, a mediator pure strategy is a profile of actions, $d^{(t)} \in A_1 \times \dots \times A_n$, where $A_i$ is the action set of player $i$. Each player $i$ observes the recommendation $d_i^{(t)}$, and chooses an action $a_i^{(t)}$. 

\begin{definition}[{\sc NormalFormSteer}]
    The mediator runs a {\em bandit} regret minimization algorithm $\mc R_\mediator$, such as Exp3~\cite{Auer02:Nonstochastic}, over its own strategy space $X_0$, which we assume has regret at most $R_\mediator(T)$ after $T$ rounds. On each round, the mediator does the following.
    \begin{enumerate}
        \item Get a strategy $d^{(t)} = (d_1^{(t)}, \dots, d_n^{(t)})$ from $\mc R_\mediator$. 
        \item With probability $\alpha$, ignore $d^{(t)}$ and recommend actions $(\tilde d_1^{(t)}, \dots, \tilde d_n^{(t)})$ uniformly at random. Let $a^{(t)} = (a_1^{(t)}, \dots, a_n^{(t)})$ be the tuple of actions played by the players. Pay each player 
        \begin{align}
            q_i^{(t)}(a^{(t)}) := 1 - u_i(a^{(t)}) + \ind{a_i^{(t)} = \tilde d_i^{(t)}}.
        \end{align}
        Pass reward $0$ to the mediator.
        \item Otherwise, give recommendation $r_i^{(t)}$ to each player $i$. Pay each player 
        \begin{align}
            q_i^{(t)}(a^{(t)}) := u_i(a_i^{(t)}, d_{-i}^{(t)}) - u_i(a^{(t)}) - \min_{a_i' \in A_i} \qty[u_i(a_i', d_{-i}^{(t)}) - u_i(a_i', a_{-i}^{(t)})].
        \end{align}
        Pass reward $\frac{1}{\lambda} u_0(d) - \sum_{i=1}^n \qty[ u_i(a_i^{(t)}, d_{-i}) - u_i(d) ]$ to the mediator.
    \end{enumerate}
\end{definition}

\begin{theorem}\label{th:nf-online-steer}
     Set the hyperparameters $
        \alpha = |Z|^{1/3} n^{-2/3} b^{1/3} \eps^{2/3}  $ and $ \lambda = |Z|^{1/3} n^{1/3} b^{1/3} \eps^{-1/3}
    $
    where $\eps := (R_\mediator(T) + 4nR(T)) / T$ is the average regret bound summed across players and $b = \max_i |A_i|$. Let $T$ be large enough that $\alpha \le 1/(2n)$.
    Then running {\sc NormalFormSteer} results in average realized payments, directness gap, and optimality gap all bounded by $10 \lambda^* |Z|^{4/3} \eps^{1/3}$.
\end{theorem}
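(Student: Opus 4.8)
The plan is to treat \textsc{NormalFormSteer} as a sampled, trajectory-feedback implementation of \guideonline inside the mediator-augmented normal-form game, and to mirror the proof of \Cref{th:online-steer} while tracking the extra error introduced by the randomized recommendations. First I would establish a clean ``dictionary'' for the two branches. On an exploration round (probability $\alpha$), adding $q_i^{(t)}$ to $u_i$ makes player $i$'s augmented utility exactly $1 + \ind{a_i^{(t)} = \tilde d_i^{(t)}}$ — the game utility cancels — so following the uniformly random recommendation is strictly optimal by a margin of $1$, regardless of anything else; averaged over the exploration probability this realizes the ``directness bonus'' that $\alpha \vec d_i^\top \vec x_i$ plays in \eqref{eq:payment}. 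On an exploitation round (probability $1-\alpha$), the sandboxing and nonnegativity payments make $u_i + q_i^{(t)}$ equal, as a function of player $i$'s own action, to $u_i(a_i, d_{-i}^{(t)})$ plus a constant in $a_i$, so best-responding is equivalent to playing as if opponents were direct, exactly as in \guideoffline. Finally, the reward fed to $\mc R_\mediator$ is, up to the $1-\alpha$ scaling and the zero reward on exploration rounds, precisely the Lagrangian objective of \eqref{eq:saddle-point}. Verifying in expectation that these payments and rewards reproduce \guideonline's structure with no hidden bias is what legitimizes the reduction.

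With this dictionary, I would argue that the time-averaged profile $(\bar\vmu, \bar{\vec y})$ (under the change of variables $\vec y := \vec x - \vec d$) is an approximate Nash equilibrium of the same zero-sum Lagrangian game \eqref{eq:online-steering-game}, now in the augmented normal-form game. The approximation error combines the players' normalized regret and the mediator's bandit regret into $\eps = (R_\mediator(T) + 4nR(T))/T$, plus the overhead charged to the $\alpha$-fraction of exploration rounds. From here the argument is structurally identical to \Cref{th:online-steer}: apply \Cref{prop:zerosum} to replace the optimal dual $(\lambda^*, \vec y^*)$ by $(\lambda, (\lambda^*/\lambda)\vec y^*)$ for $\lambda \ge \lambda^*$, deduce a bound $-\vec d^\top \bar{\vec y} \le \delta$ of the familiar shape, and then invoke \Cref{lem:obedience-union} to bound the directness gap by $|Z|\delta$, the average realized payment by the analogue of $|Z|(2\delta + \alpha)$ (plus the $O(\alpha)$ exploration surcharge), and the optimality gap by the usual combination of $\eps\lambda$, $\alpha\lambda$, and $|Z|\delta$ terms.

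The only genuinely new bookkeeping is tracking where $b = \max_i |A_i|$ and $n$ enter. The directness signal is produced only on the $\alpha$-fraction of exploration rounds, and only through a \emph{uniformly random} recommendation, so the players' no-regret property controls the recommendation-averaged disobedience $\tfrac{1}{|A_i|}\sum_{r} \Pr[\text{disobey}\mid\text{rec }r]$ rather than the disobedience weighted by the mediator's actual strategy $\vmu$; converting the former into the $\delta_i = \vec d_i^\top(\vec d_i - \bar{\vx}_i)$ required by \Cref{lem:obedience-union} costs a factor $\le b$, which is the source of the $b$ appearing in $\alpha$ and $\lambda$. The $n$ factors arise from summing the per-player exploration surcharge and deviation benefits across the $n$ players. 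I would then collect all contributions into a single upper bound on the three gaps — schematically a dual term decreasing in $\lambda$, a payment term increasing in $\alpha\lambda$, and an exploration term in $\eps/\alpha$ inflated by $b$ — and check that $\alpha = |Z|^{1/3} n^{-2/3} b^{1/3}\eps^{2/3}$ and $\lambda = |Z|^{1/3} n^{1/3} b^{1/3}\eps^{-1/3}$ balance the dominant terms to yield the claimed $O(\lambda^* |Z|^{4/3}\eps^{1/3})$ rate (with the game-dependent $n,b$ factors absorbed, as elsewhere in the paper).

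The main obstacle, and the part requiring the most care, is the sampling and variance accounting, which creates a three-way tension on $\alpha$: it must be small enough that the exploration rounds do not themselves spoil directness and optimality (an $O(\alpha)$, resp. $O(n\alpha)$, cost), yet large enough both to give the bandit mediator $\mc R_\mediator$ usable reward feedback and to keep the $\eps/\alpha$-type directness bound meaningful after the $b$-fold dilution from uniform recommendations. Because the explore/exploit choice and the recommendations are random, all three guarantees must be stated and proved \emph{in expectation} over the mediator's internal randomness, with $R(T)$ and $R_\mediator(T)$ interpreted as the expected-regret bounds of the respective learning algorithms; confirming that no bias leaks through the sampling step is the delicate point that distinguishes this argument from the deterministic full-feedback analysis of \Cref{th:online-steer}.
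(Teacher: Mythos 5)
Your proposal is correct and takes essentially the same approach as the paper: the paper likewise verifies that the expected utilities under \textsc{NormalFormSteer} coincide with a zero-sum Lagrangian game in which the directness bonus is diluted to $\alpha \frac{1}{|A_i|}\vec d_i^\top \vec x_i$ (your $b$-fold dilution, entering the paper's proof as the inequality $-\frac{1}{b}\vec d^\top \bar{\vec y} \le -\vec D^\top \bar{\vec y}$), and then reruns the argument of \Cref{th:online-steer} verbatim — dual rescaling via \Cref{prop:zerosum}, \Cref{lem:obedience-union} for the directness gap and payments, and balancing the terms $\eps\lambda$, $n\alpha\lambda$, $|Z|nb\lambda^*/\lambda$, and $|Z|b\eps/\alpha$ with the stated hyperparameters. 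Your framing as a reduction-with-error to \guideonline, your identification of where $n$ and $b$ enter, and your in-expectation treatment of the sampling all match the paper's proof in substance.
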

\begin{proof}
    Reverting to the extensive-form notation, the expected utility of the mediator on iteration $t$ is 
    \begin{align}
       (1 - \alpha) \qty(\frac{1}{\lambda} u_0(\vec\mu^{(t)}, \vec d) - \sum_{i=1}^n \qty[ u_i(\vec\mu^{(t)}, \vec x_i^{(t)}, \vec d_{-i}) - u_i(\vec\mu^{(t)}, \vec d)]).
    \end{align}
    The expected utility of player $i$ is, up to an additive term that cannot be affected by player $i$,
    \begin{align}
       \alpha\frac{1}{|A_i|} \vec d_i^\top \vec x_i + (1 - \alpha) \qty(u_i(\vec \mu^{(t)}, \vec x^{(t)}, \vec d_{-i}) - u_i(\vec \mu^{(t)}, \vec d^{(t)}, \vec d_{-i}) ).
    \end{align}
    Therefore, the players and mediator experience the same utilities that they would in the zero-sum game 
    \begin{align}\label{eq:nf-online-steering-game}
        \max_{\vec\mu \in \Xi} \min_{\vec y \in Y}~ (1 - \alpha) \qty(\frac{1}{\lambda} \vec c^\top \vec\mu - \vec\mu^\top \mat A \vec y) - \alpha \sum_i \frac{1}{|A_i|}\vec d^\top_i  \vec y_i,
\end{align}
where, as in the proof of \Cref{th:online-steer},  $\vec y := \vec x - \vec d$. Following the proof of \Cref{th:online-steer}, we conclude that $(\vec{\bar\mu}, \vec{\bar y})$ must be an $\eps$-Nash equilibrium of the above zero-sum game. Let $\lambda^*, \lambda, \vec y^*, \vec y'$ be as in that proof. For simplicity of notation, let $\vec D$ be the vector satisfying $\vec D^\top \vec y = \sum_i \frac{1}{|A_i|}\vec d^\top_i  \vec y_i,$ Then
\begin{align}
-\alpha \vec D^\top \vec{\bar y} \le 
    \max_{\vec\mu \in \Xi} \min_{\vec y \in Y}~ (1 - \alpha) \qty(\frac{1}{\lambda} \vec c^\top \vec\mu - \vec\mu^\top \mat A \vec y) - \alpha \vec D^\top \vec y \le -\alpha \vec D^\top \vec y' + \eps
\end{align}
or, rearranging,\footnote{We note once again that $-\vec d^\top \bar{\vec y}$ and $-\vec D^\top \bar{\vec y}$ are, despite the negative sign, a {\em nonnegative} quantities since $\vec y = \vec x - \vec d$.}
    \begin{align}
       -\frac{1}{b} \vec d^\top \bar{\vec y} \le  -\vec D^\top \bar{\vec y} \le -\frac{\lambda^*}{\lambda}\vec D^\top \vec y^* + \frac{\eps}{\alpha} \le \frac{\lambda^*}{\lambda}n + \frac{\eps}{\alpha} := \frac{\delta}{b}.
    \end{align}
    where $b = \max_i |A_i|$ is the maximum branching factor.
    Thus, by \Cref{lem:obedience-union}, the average payment is bounded by $|Z|(2\delta + \alpha)$.
    We now turn to the mediator's average utility. The equilibrium value of \eqref{eq:nf-online-steering-game} is at least $-\alpha n$ (achieved by the optimal equilibrium), in turn implying that the current value in the game under $(\Bar{\vmu}, \Bar{\vec{y}})$ is at least $-\alpha n - \eps$. So,
    \begin{align}
        \E_{t \in \range{T}} u_\mediator(\vmu^{(t)}, \vec d) = \vec c^\top \bar{\vmu}\ge \min_{\vec y \in Y}~ \vec c^\top \bar{\vec\mu} - \lambda \qty[\bar{\vec\mu}^\top \mat A \vec y - \frac{\alpha}{1 - \alpha} \vec d^\top \vec y] \ge -2\lambda(\alpha n + 2\eps).
    \end{align}
    since $\alpha \le 1/2$.
    By \Cref{lem:obedience-union} again,
\begin{align}
    \abs{\E_{t \in \range{T}} u_\mediator(\vec\mu^{(t)}, \vec x^{(t)}) - u_\mediator(\vec\mu^{(t)}, \vec d)} \le \norm{\E_{t \in \range{T}} \hat{\vec x}^{(t)} - \hat{\vec d}}_1 \le \E_{t \in \range{T}} \norm{\hat{\vx}^{(t)} - \hat{\vec d}}_1 \le |Z|\delta,
\end{align}
so the optimality gap is bounded by $4\eps\lambda + 2n \alpha \lambda  + |Z| \delta$, and the directness gap is bounded by $|Z|\delta$. It thus suffices to select hyperparameters $\alpha$ and $\lambda$ so as to minimize the following expression, which is an upper bound on all three gaps:
\begin{align}
  4\eps\lambda + 2n \alpha \lambda  + 2 |Z| \delta \le 4\eps\lambda + 2n \alpha\lambda + 2|Z|nb \frac{\lambda^*}{\lambda} + 2|Z|b \frac{\eps}{\alpha}.
\end{align}
In particular, setting the hyperparameters
\begin{align}
    \alpha = |Z|^{1/3} n^{-2/3} b^{1/3} \eps^{2/3}  \qq{and} \lambda = |Z|^{1/3} n^{1/3} b^{1/3} \eps^{-1/3}
\end{align} we arrive at the bound
\begin{align}
    4\eps^{2/3}(|Z|nb)^{1/3} + 2 (|Z|nb)^{2/3} \eps^{-1/3} + 2  \lambda^* (|Z|nb)^{2/3} \eps^{-1/3} + 2 (|Z|nb)^{2/3} \eps^{-1/3} \le 10  \lambda^* |Z|^{4/3}\eps^{1/3},
\end{align}
as claimed.
\end{proof}

\section{Further Experimental Results}\label{app:experiments}

Here, we provide plots akin to those in \Cref{fig:body-experiments} for other games and solution concepts.
For a description of the solution concepts used in these plots, see \citet{Zhang22:Polynomial}.
We experiment on four standard benchmark games, which are the same ones used in by~\citet{Zhang23:Computing}.
\begin{itemize}
    \item \textbf{Kuhn poker}. We use the three-player version of this standard benchmark introduced by \citet{Kuhn50:Simplified}.
    
    \item \textbf{Sheriff}. This game, introduced as a benchmark for correlation in extensive-form games by \citet{Farina19:Correlation}, is a simplified version of the \emph{Sheriff of Nottingham} board game. A \emph{Smuggler}---who is trying to smuggle illegal items in their cargo---and the \emph{Sheriff}---whose goal is stopping the Smuggler. Further details on the game can be found in \citet{Farina19:Correlation}.

The Smuggler first chooses a number $n\in\{0,1\}$ of illegal items to load on the cargo. Then, the Sheriff decides whether to inspect the cargo. If they choose to inspect, and find illegal goods, the Smuggler has to pay $n$ to the Sheriff. Otherwise, the Sheriff compensates the Smuggler with a reward of $1$. If the Sheriff decides not to inspect the cargo, the Sheriff's utility is 0, and the Smuggler's utility is $5n$.
After the Smuggler has loaded the cargo, and before the Sheriff decides whether to inspect, the Smuggler can attempt to bribe the Sheriff. To do so, they engage in $2$ rounds of bargaining and, for each round $i$, the Smuggler proposes a bribe $b_i\in\{0,1,2\}$, and the Sheriff accepts or declines it. Only the proposal and response from the final round are executed. If the Sheriff accepts a bribe $b_2$ then they get $b_2$, while the Smuggler's utility is $5n-b_2$.

    \item \textbf{Battleship}. This game, introduced as a benchmark for correlation in extensive-form games by \citet{Farina19:Correlation}, is a general-sum version of the classic game Battleship, where two players take turns placing ships of varying sizes and values on two separate grids of size $2\times 2$, and then take turns firing at their opponent. Ships which have been hit at all their tiles are considered destroyed. The game ends when one player loses all their ships, or after each player has fired $2$ shots. Each player's payoff is determined by the sum of the value of the opponent's destroyed ships minus two times the number of their own lost ships. 

    \item \textbf{Ridesharing}. A benchmark introduced in \citet{Zhang22:Optimal}. Two drivers compete to serve requests on a road network, an undirected graph $\Grs=(\Vrs,\Ers)$ depicted in \Cref{fig:map} with unit edge cost. Each vertex $v\in \Vrs$ corresponds to a ride request to be served. Each request has a reward in $\mathbb{R}_{\ge 0}$, which is shown in set notation at vertices in the graph. The first driver arriving at node $v\in \Vrs$ serves the ride and receives the associated reward. The game terminates when all nodes have been cleared, or after $T=2$. If the two drivers arrive simultaneously on the same vertex they both get reward 0. Final driver utility is computed as the sum of the rewards obtained from the beginning until the end of the game.

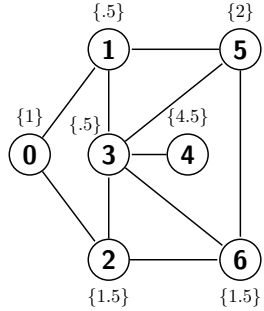
\begin{figure}[htp]
\centering
\scalebox{.7}{
\begin{tikzpicture}[shorten >=1pt,auto,node distance=3cm,
  thick,main node/.style={circle,draw,
  font=\sffamily\Large\bfseries,minimum size=5mm}]

  \node[main node, label={[align=left]\{1\}}] at (0, 0)   (0) {0};
  \node[main node, label={[align=left]\{.5\}}] at (1.5, 2)   (1) {1};
  \node[main node, label={[left,label distance=.15cm]\{.5\}}] at (1.5, 0)   (2) {3};
  \node[main node, label={[anchor=north]below:\{1.5\}}] at (1.5, -2)   (3) {2};
  \node[main node, label={[align=left]\{4.5\}}] at (3, 0)   (4) {4};
  \node[main node, label={[align=left]\{2\}}] at (4, 2)   (5) {5};
  \node[main node, label={[anchor=north]below:\{1.5\}}] at (4, -2)   (6) {6};

  \path[every node/.style={font=\sffamily\small,
  		fill=white,inner sep=1pt}]

    (0) edge[-] (1)
    (0) edge[-] (3)
    (1) edge[-] (2)
    (3) edge[-] (2)
    (4) edge[-] (2)
    (3) edge[-] (6)
    (5) edge[-] (6)
    (5) edge[-] (1)
    (2) edge[-] (5)
    (2) edge[-] (6);
\end{tikzpicture}}
\caption{Map used in the ridesharing game. Rewards are in curly braces.}
\label{fig:map}
\end{figure}

\end{itemize}
For the following results, we use a burn-in of 10 iterates (that is, no payments
are issued in the first 10 iterations; steering only begins after that).

For each game, we consider the problem of steering the learners towards an optimal instance of each of the solution concepts. The objective function used to define optimality is set to be social welfare for general-sum games, and the utility of Player~1 for the three-player zero-sum game (Kuhn poker). For each combination of game and equilibrium concept, we show four plots. Each corresponding to a different value of the payment multiplyer $P_\text{mult} \in \{1,2,4,8\}$. The payment multiplier controls the value of $P$, which is set to $P \defeq P_\text{mult} \times $ the reward range of the game.

We observe that in all games and equilibrium concepts, our algorithm is able to steer the learners towards the optimal social objective, as predicted by our theory. As $P_\text{mult}$ grows, we observe that the convergence speed increases, at the cost of a higher payment magnitude.

{
\newcommand{\addplot}[3]{%
    \def\file{experiments/burn_in_10/#1/P_#2/#3.pdf}%
    \IfFileExists{\file}{%
        \includegraphics[width=.4\textwidth]{\file}
    }{%
        \tikz\node[draw,minimum width=6cm,minimum height=5.0cm] {Missing plot (#1, \MakeUppercase{#3})};%
    }%
}

\newcommand{\plotGameEq}[2]{%
    \begin{tabular}{ll}%
        \multicolumn{2}{c}{\fbox{Solution concept: \MakeUppercase{#2}}}\\
        \addplot{#1}{1}{#2}&%
        \addplot{#1}{2}{#2}\\
        \addplot{#1}{4}{#2}&%
        \addplot{#1}{8}{#2}
    \end{tabular}%
}%

\newcommand{\plotGame}[2]{%
    \subsection{Game: #2}
    \begin{center}
        \plotGameEq{#1}{efce}\\%
        \plotGameEq{#1}{efcce}\\%
        \plotGameEq{#1}{nfcce}\\%
        \plotGameEq{#1}{nfccert}\\%
        \plotGameEq{#1}{ccert}\\%
        \plotGameEq{#1}{cert}\\%
        \plotGameEq{#1}{comm}%
    \end{center}
}%

\plotGame{K35}{Kuhn Poker}
\plotGame{S2122}{Sheriff}
\plotGame{U212}{Ridesharing}
\plotGame{B2222}{Battleship}
}

\end{document}